\def\keywords{\smallskip\noindent\textsc{Key Words. }}
\def\pbfont#1{\textsf{#1}}
\newif\ifsubmission
\newcommand{\eatpunct}[1]{\nopunct}
\newcommand{\appref}[1]{Appendix~\ref{#1}}
\renewcommand{\cite}{\citep}
\def\NAT@spacechar{~}%
\providecommand{\qedhere}{\qed}
\newtheoremstyle%
  {pb}%
  {1pt}%
  {0pt}%
  {\normalfont}%
  {}%
  {\bfseries\boldmath}%
  {}%
  { }%
  {pb}%
\def\thmhead@pb#1#2#3{%
  {\sffamily\pbname}%
  \thmnote{ {\the\thm@notefont(#3)}}}
\theoremstyle{pb}
\newtheorem{pb}{Problem}
\newenvironment{problem}[2][]
  {\def\pbname{#2}%
    \begin{pb}[\textsf{#1}]%
      \hfill}
  {\end{pb}}
\theoremstyle{plain}
\newtheorem{theorem}{Theorem}[section]
\newtheorem{lemma}[theorem]{Lemma}
\newtheorem{corollary}[theorem]{Corollary}
\newtheorem{proposition}[theorem]{Proposition}
\theoremstyle{definition}
\theoremstyle{remark}
\newtheorem{remark}[theorem]{Remark}
\def\algocf@capseparator{.}\makeatother
\providecommand{\urlstyle}[1]{}
\providecommand{\doi}[1]{\href{http://dx.doi.org/#1}{\nolinkurl{doi:#1}}}
\providecommand{\email}[1]{\href{mailto:#1}{\nolinkurl{#1}}}
\begin{document}
\renewcommand{\sectionautorefname}{Section}
\renewcommand{\subsectionautorefname}{Section}
\renewcommand{\subsubsectionautorefname}[1]{\S}
\title{Complexity Hierarchies Beyond Elementary}
\author[S.~Schmitz]{Sylvain Schmitz}
\address{LSV, ENS Cachan \& CNRS \& Inria\\Universit\'e Paris-Saclay\\France}
\email{schmitz@lsv.ens-cachan.fr}
\thanks{Work supported in part by the \textsc{ReacHard} project (ANR
  11 BS02 001 01).}
\begin{abstract}
We introduce a hierarchy of fast-growing complexity classes and show
its suitability for completeness statements of many non elementary
problems.  This hierarchy allows the classification of many decision
problems with a non-elementary complexity, which occur naturally in
logic, combinatorics, formal languages, verification, etc., with
complexities ranging from simple towers of exponentials to
Ackermannian and beyond.

\keywords Fast-growing complexity, subrecursion, well-quasi-order
\end{abstract}
\maketitle

\section{Introduction}
\textit{Complexity classes,} along with the associated notions of
reductions and completeness, provide our best theoretical tools to
classify and compare computational problems.  The richness and
liveness of this field can be experienced by taking a guided tour of
the \emph{Complexity
  Zoo},\footnote{\url{https://complexityzoo.uwaterloo.ca}.} which
presents succinctly most of the known specimens.  The visitor will find
there a wealth of classes at the frontier between tractability and
intractability, starring the classes P and NP, as they help in
understanding what can be solved efficiently by algorithmic means.

From this tractability point of view, it is not so surprising to find
much less space devoted to the ``truly intractable'' classes, in the
exponential hierarchy and beyond.  Such classes are nevertheless quite
useful for classifying problems, and employed routinely in logic,
combinatorics, formal languages, verification, etc.\ since the 70's
and the exponential lower bounds proven by Meyer and
Stockmeyer~\citep{meyer72,stockmeyer73}.  
\subsubsection*{Non Elementary Problems}
Actually, these two seminal articles go further than mere exponential
lower bounds: they show respectively that satisfiability of the weak
monadic theory of one successor (\pbfont{WS1S}) and equivalence of
star-free expressions (\pbfont{SFEq}) are
\emph{non elementary}, as they require space bounded above and below
by towers of exponentials of height depending (elementarily) on the
size of the input.  Those are just two examples among many others of
problems with non elementary complexities \citep[see
e.g.][]{meyer74,friedman99,vorobyov04}, but they are actually good
representatives of problems with a tower of exponentials as
complexity, i.e., one would expect them to be \emph{complete} for some
suitable complexity class.

What might then come as a surprise is the fact that, presently, the Zoo
does not provide any intermediate stops where classical problems like
\pbfont{WS1S} and \pbfont{SFEq} would fit adequately: they are not in
\textsc{Elementary} (henceforth \elem), but the next class is
\textsc{Primitive-Recursive} (aka \pr), which is far too big:
\pbfont{WS1S} and \pbfont{SFEq} are not \emph{hard} for \pr\ under any
reasonable notion of reduction.  In other words, we seem to be missing
a ``\Tow'' complexity class, which ought to sit somewhere between
\elem\ and \pr.  Going higher, we find a similar uncharted area
between \pr\ and \textsc{Recursive} (aka~R).  These absences are not
specific to the Complexity Zoo: they seem on the contrary universal in
textbooks on complexity theory---which seldom even mention \elem\ or
\pr.  Somewhat oddly, the complexities above R are better explored and
can rely on the arithmetical and analytical hierarchies.

Drawing distinctions based on complexity characterisations can guide
the search for practically relevant restrictions to the problems.  In
addition, non elementary problems are much more pervasive now than in
the 70's, and they are also considered for practical applications,
motivating the implementation of tools, e.g.\ MONA for
\pbfont{WS1S}~\citep{mona}.  It is therefore high time for the
definition of hierarchies suited for their classification.

\subsubsection*{Our Contribution}
In this paper, we propose an ordinal-indexed hierarchy
$(\FC\alpha)_\alpha$ of \emph{fast growing} complexity classes for
non elementary complexities.  Beyond the already mentioned
$\Tow\eqdef\FC3$---for which \pbfont{WS1S} and \pbfont{SFEq} are
examples of complete problems---, this hierarchy includes non
primitive-recursive classes, for which quite a few complete problems
have arisen in the recent years, e.g.
\begin{itemize}
\item $\FC\omega$
in~\citep{fct,urquhart99,phs-mfcs2010,figueira12,bresolin2012,lazic13,hofman14,hague14},
\item $\FC{\omega^\omega}$
in~\citep{lcs,mtl,ata,tsoreach,pepreg,barcelo12,rosavelardo14},
\item $\FC{\omega^{\omega^\omega}}$ in~\citep{HSS-lics2012}, and
\item $\FC{\ezero}$ in~\citep{HaaseSS13,decker15}.
\end{itemize}

The classes $\FC\alpha$ are related to the Grzegorczyk
$(\GH{k})_k$~\citep{grzegorczyk53} and extended Grzegorczyk
$(\FGH{\alpha})_\alpha$~\citep{lob70} hierarchies, which have been
used in complexity statements for non elementary bounds.  The
$(\FGH{\alpha})_\alpha$ classes are very well-suited for
characterising various classes of functions, for instance computed by
forms of \texttt{for} programs~\citep{meyer67} or terminating
\texttt{while} programs~\citep{fairtlough92}, or provably total in
fragments of Peano arithmetic~\citep{fairtlough98,sw12}, and they
characterise some important milestones like \elem\ or \pr.  They are
however too large to classify our decision problems and do not lead to
\emph{completeness} statements---in fact, one can show that there are
\emph{no} ``\elem-complete'' nor ``\pr-complete'' problems---; see
\autoref{sec-fg}.  Our $\FC\alpha$ share however several nice
properties with the $\FGH{\alpha}$ classes: for instance, they form a
strict hierarchy (\autoref{sec-strict}) and are \emph{robust} to
slight changes in their generative functions and to changes in the
underlying model of computation (\autoref{sec-robust})%
.

In order to argue for the suitability of the classes $\FC\alpha$ for
the classification of high-complexity problems, we sketch two
completeness proofs in \autoref{sec-ex}, and present an already long
list of complete problems for $\FC\omega$ and beyond in
\autoref{sec-bestiary}.  A general rule of thumb seems to be that
statements of the form ``$L$ is in $\FGH{\alpha}$ but not in
$\FGH{\beta}$ for any $\beta<\alpha$'' found in the literature can
often be replaced by the much more precise ``$L$ is
$\F\alpha$-complete.''

There are of course essential limitations to our approach: there is no
hope of defining such ordinal-indexed hierarchies that would exhaust R
using sensible ordinal notations~\citep{feferman62}; this is called
the ``subrecursive stumbling block''
by \citet[\sectionautorefname~5.1]{sw12}.  Our aim here is more
modestly to provide suitable definitions ``from below'' for
naturally-occurring complexity classes above \elem.

In an attempt not to drown the reader in the details of subrecursive
functions and their properties, most of the technical contents appears
in \appref{app-subrec} at the end of the paper.

\section{Fast-Growing Complexity Classes}\label{sec-fg}
We define in this section the complexity classes $\F\alpha$.  We rely
for this on the fast-growing functions $F_\alpha$ of \citet{lob70} as
a standard against which we can measure high complexities
(c.f.~\autoref{ssub-fg}).  In logic and recursion theory, these
functions are used to generate the classes of functions $\FGH\alpha$
when closed under substitution and limited primitive recursion
(see~\autoref{ssub-rec}).  These classes are however not suitable for
our complexity classification objectives: the class $\FGH\alpha$
contains indeed arbitrary finite compositions of the function
$F_\alpha$.  We define instead in \autoref{sub-fgcc} each $\F\alpha$
class as the class of problems decidable within time bounded by a
single application of $F_\alpha$ composed with any function $p$
already defined in the lower levels $\FGH\beta$ for $\beta<\alpha$.

These hierarchies of functions, function classes, and complexity
classes we employ in order to deal with non elementary complexities
are all indexed using ordinals, and we reuse the very rich literature
on subrecursion~\citep[e.g.][]{rose84,odifreddi99,sw12}.  We strive to
employ notations compatible with those of \citet[Chapter~4]{sw12}, and
refer the interested reader to their monograph for proofs and
additional material.

\subsection{Cantor Normal Forms and Fundamental Sequences}
In this paper, we only deal with ordinals that can be denoted
syntactically as terms in \emph{Cantor Normal Form}:
\begin{align}\label{CNF}
  \alpha = \omega^{\alpha_1}\cdot c_1+\cdots
  +\omega^{\alpha_n}\cdot c_n&\text{ where }\alpha > \alpha_1
  >\cdots> \alpha_n\text{ and }\omega>c_1,\ldots,c_n>0\tag{CNF}
\end{align}
and hereditarily $\alpha_1,\dots,\alpha_n$ are also written
in \ref{CNF}.  In this representation, $\alpha=0$ if and only if
$n=0$.  An ordinal $\alpha$ with \ref{CNF} of form $\alpha'+1$ is
called a
\emph{successor} ordinal---it has $n>0$ and $\alpha_n=0$---, and
otherwise if $\alpha>0$ it is called a \emph{limit} ordinal, and can
be written as $\gamma+\omega^\beta$ by setting
$\gamma=\omega^{\alpha_1}\cdot c_1+\cdots
+\omega^{\alpha_n}\cdot(c_n-1)$ and $\beta=\alpha_n$.  We usually employ
``$\lambda$'' to denote limit ordinals.

A \emph{fundamental sequence} for a limit ordinal $\lambda$ is a
sequence $(\lambda(x))_{x<\omega}$ of ordinals with supremum
$\lambda$.  We consider a standard assignment of fundamental sequences
for limit ordinals, which is defined inductively by
\begin{align}
  \label{eq-fund-def}
  (\gamma+\omega^{\beta+1})(x)&\eqdef\gamma+\omega^\beta\cdot (x+1)\;,&
  (\gamma+\omega^{\lambda})(x)&\eqdef\gamma+\omega^{\lambda(x)}\;.
\end{align}
This particular assignment of fundamental sequences satisfies e.g.\
$0<\lambda(x)<\lambda(y)$ for all $x<y$ and limit ordinals $\lambda$.
For instance, $\omega(x)=x+1$,
$(\omega^{\omega^4}+\omega^{\omega^3+\omega^2})(x)=\omega^{\omega^4}+\omega^{\omega^3+\omega\cdot(x+1)}$.
We also consider the ordinal $\ezero$, which is the supremum of all
the ordinals writable in \ref{CNF}, as a limit ordinal with
fundamental sequence defined by $\ezero(0)\eqdef\omega$ and
$\ezero(x+1)\eqdef\omega^{\ezero(x)}$, i.e.\ a tower of $\omega$'s of
height $x+1$.

\subsection{The Extended Grzegorczyk Hierarchy}
\ifsubmission This is an ordinal-indexed infinite hierarchy of classes
$(\FGH{\alpha})_{\alpha<\ezero}$ \else
$(\FGH{\alpha})_{\alpha<\ezero}$ is an ordinal-indexed infinite
hierarchy of classes \fi of functions with argument(s) and images in
$\+N$ \citep{lob70}.  The extended Grzegorczyk hierarchy has multiple
natural characterisations: for instance as \texttt{loop} programs for
$\alpha<\omega$ \citep{meyer67}, as ordinal-recursive functions with
bounded growth \citep{wainer70}, as functions computable with
restricted resources as we will see in \eqref{eq-fgh-char}, as
functions that can be proven total in fragments of Peano arithmetic
\citep{fairtlough98}, etc.

\subsubsection{Fast-Growing Functions}\label{ssub-fg}
At the heart of each $\FGH{\alpha}$ lies the
$\alpha$th \defstyle{fast-growing function}
\mbox{$F_\alpha{:}\,\+N\to\+N$}, which is defined inductively on the
ordinal index: as the successor function at index $0$
\begin{align}\label{eq-f-zero}
  F_0(x)&\eqdef x+1\:,
  \shortintertext{by iteration at successor indices $\alpha+1$}
  \notag\\[-2em]
  \label{eq-f-succ}
  F_{\alpha+1}(x)&\eqdef F_\alpha^{\omega(x)}(x)=\overbrace{F_\alpha(\cdots
    (F_\alpha}^{\omega(x)\text{ times}}(x))\cdots)\;,
  \shortintertext{and by diagonalisation on the fundamental sequence at
    limit indices $\lambda$}
    \label{eq-f-lim}
  F_\lambda(x)&\eqdef F_{\lambda(x)}(x)\:.
\end{align}
For instance, $F_1(x)=2x+1$, $F_2(x)=2^{x+1}(x+1)-1$, $F_3$ is a non
elementary function that grows faster than
$\tow(x)\eqdef\tetration{2}{x}$, $F_\omega$ a non primitive-recursive
``Ackermannian'' function, $F_{\omega^\omega}$ a non
multiply-recursive ``hyper-Ackermannian'' function, and
$F_{\ezero}(x)$ cannot be proven total in Peano arithmetic.  For every
$\alpha$, the $F_\alpha$ function is strictly \emph{monotone} in its
argument, i.e.\ $x<y$ implies $F_\alpha(x)<F_\alpha(y)$.  As
$F_\alpha(0)=1$, it is therefore also strictly \emph{expansive}, i.e.\
$F_\alpha(x)>x$ for all $x$.

\subsubsection{Computational Characterisation}
The extended Grzegorczyk hierarchy itself is defined by means of
recursion schemes with the $(F_\alpha)_\alpha$ as generators
(see \autoref{ssub-rec}).  Nevertheless, for $\alpha\geq 2$, each of
its levels $\FGH{\alpha}$ is also characterised as a class of
functions computable with bounded resources~\citep{wainer70}.  More
precisely, for $\alpha\geq 2$, it is the class of functions computable
by deterministic Turing machines in time bounded by $O(F^c_\alpha(n))$
for some constant $c$, when given an input of size $n$:
\begin{align}\label{eq-fgh-char}
  \FGH{\alpha}&=\bigcup_{c<\omega}\CC{FDTime}\left(F_\alpha^c(n)\right).
\end{align}
Note that the choice between deterministic and nondeterministic, or
between time-bounded and space-bounded computations in
\eqref{eq-fgh-char} is irrelevant, because \mbox{$\alpha\geq 2$} and
$F_2$ is already a function of exponential growth.

\subsubsection{Main Properties}
Each class $\FGH{\alpha}$ is closed under (finite) composition.  Every
function $f$ in $\FGH{\alpha}$ is \defstyle{honest}, i.e.\ can be
computed in time bounded by some function also in
$\FGH\alpha$~\citep{wainer70,fairtlough98}---this is a relaxation of
the \defstyle{time constructible} condition, which asks instead for
computability in time $O(f(n))$.  Since each $f$ in $\FGH{\alpha}$ is
also bounded by $F_\alpha^c$ for some $c$
\citep[\theoremautorefname~2.10]{lob70}, this means that
\begin{equation}\label{eq-honest}
  \FGH\alpha=\bigcup_{f\in\FGH\alpha}\CC{FDTime}\left(f(n)\right)\,.
\end{equation}
In particular, for every $\alpha$ the function $F_\alpha$ belongs to $\FGH{\alpha}$, and therefore $F^c_\alpha$ also belongs to
$\FGH{\alpha}$.

Every $f$ in $\FGH{\beta}$ is also eventually bounded by
$F_{\alpha}$ if $\beta<\alpha$~\citep{lob70}, i.e.\ there exists a
rank $x_0$ such that, for all $x_1,\dots,x_n$, if $\max_i x_i\geq
x_0$, then $f(x_1,\dots,x_n)\leq F_{\alpha}(\max_i x_i)$---a fact
that we will use copiously.  However, for all $\alpha>\beta>0$,
$F_{\alpha}\not\in\FGH{\beta}$, and the hierarchy
$(\FGH{\alpha})_{\alpha<\ezero}$ is therefore strict for $\alpha>0$.

\subsubsection{Milestones}
At the lower levels, $\FGH{0}=\FGH{1}$ contains (among others) all the
linear functions (see \autoref{ssub-lin}).  We focus however in this
paper on the non elementary classes by restricting ourselves to
$\alpha\geq 2$.  Writing %
\begin{equation}\label{eq-fgh-less}
  \FGH{<\alpha}\eqdef\bigcup_{\beta<\alpha}\FGH\beta\;,
\end{equation}
we find for instance
$\FGH{2}=\FGH{<3}=\CC{F\elem}$ the set of Kalmar-elementary functions,
$\FGH{<\omega}=\CC{F\pr}$ the set of primitive-recursive functions,
$\FGH{<\omega^\omega}=\CC{FMR}$ the set of multiply-recursive
functions, and $\FGH{<\ezero}=\CC{FOR}$ the set of ordinal-recursive
functions (up to $\ezero$).  We are dealing here with classes of
functions, but writing $\FGH{\alpha}^\ast$ for the restriction of
$\FGH{\alpha}$ to $\{0,1\}$-valued functions, i.e.
\begin{align}\label{eq-fgh-dec}
  \FGH{\alpha}^\ast&=\bigcup_{c<\omega}\CC{DTime}\left(F_\alpha^c(n)\right),&
  \FGH{<\alpha}^\ast&\eqdef\bigcup_{\beta<\alpha}\FGH\beta^\ast\;,
  \end{align}
we obtain the corresponding classes for decision problems
$\FGH{<3}^\ast=\elem$, $\FGH{<\omega}^\ast=\pr$,
$\FGH{<\omega^\omega}^\ast=\CC{MR}$, and
$\FGH{<\ezero}^\ast=\CC{OR}$.

\subsection{Fast-Growing Complexity Classes.}\label{sub-fgcc}
Unfortunately, the classes in the extended Grzegorczyk hierarchy are
not quite satisfying for some interesting problems, which are non
elementary (or non primitive-recursive, or non multiply-recursive,
\dots), but only \emph{barely} so.  The issue is that complexity
classes like e.g.\ $\FGH{3}^\ast$, which is the first class to contain
non elementary problems, are very large: $\FGH{3}^\ast$ contains for
instance problems that require space $F_{3}^{100}(n)$, more than a
hundred-fold compositions of towers of exponentials.  As a result,
hardness for $\FGH{3}^\ast$ cannot be obtained for many classical examples
of non elementary problems.

We therefore introduce smaller classes of %
problems:
\begin{align}\label{eq-fast-class}
 \F\alpha&\eqdef\bigcup_{p\in\FGH{<\alpha}}\CC{DTime}\left(F_\alpha(p(n))\right)\;.
\end{align}
In contrast with $\FGH{\alpha}^\ast$ in~\eqref{eq-fgh-dec}, only a single
application of $F_\alpha$ is possible, composed with some ``lower''
\emph{reduction} function $p$ from $\FGH{<\alpha}$.  As previously,
the choice of \CC{DTime} rather than \CC{NTime} or \CC{Space} is
irrelevant for $\alpha\geq 3$ (see \autoref{th-redc} later).

This definition yields for instance the desired class
$\Tow\eqdef\F3$,
closed under elementary reductions (i.e., reductions
    in $\FGH{2}$), but also a class %
$\Ack\eqdef\F\omega$
of Ackermannian problems closed under
    primitive-recursive reductions, a class %
$\hack\eqdef\F{\omega^\omega}$
of hyper-Ackermannian problems closed under multiply-recursive
reductions, etc.  In each case, we can think of $\F\alpha$ as the
class of problems not solvable with resources in $\FGH{<\alpha}$, but
barely so: non elementary problems for $\F3$, non primitive-recursive
ones for $\F\omega$, non multiply-recursive ones for
$\F{\omega^\omega}$, and so on.  See \autoref{fig-fg} for the first
main stops of the hierarchy.

\begin{figure}[t]
  \centering
  \begin{tikzpicture}[every node/.style={font=\small}]
    \draw[color=blue!90,fill=blue!20,thick](-1.1,0) arc (180:0:5.4cm);
    \shadedraw[color=black!90,top color=black!20,middle color=black!5,opacity=20,shading angle=-15](-1,0) arc (180:0:5cm);
    \draw[color=blue!90,fill=blue!20,thick](-.6,0) arc (180:0:3.3cm);
    \shadedraw[color=black!90,top color=black!20,middle color=black!5,opacity=20,shading angle=-15](-.5,0) arc (180:0:3cm);
    \draw[color=blue!90,fill=blue!20,thick](-.1,0) arc (180:0:1.7cm);
    \shadedraw[color=black!90,top color=black!20,middle
    color=black!5,opacity=20,shading angle=-15,thin](0,0) arc (180:0:1.5cm);
    \draw (1.5,.5) node{$\FGH{<3}^\ast=\elem$};
    \draw (4,1.2) node[color=blue]{$\F3=\Tow$};
    \draw[color=blue!90,thick] (3.15,1) -- (3.05,.9);
    \draw (2.5,2) node{$\FGH{<\omega}^\ast=\CC{PR}$};
    \draw (6.3,2) node[color=blue]{$\F\omega=\Ack$};
    \draw[color=blue!90,thick] (5.65,1.8) -- (5.55,1.7);
    \draw (4,3.8) node{$\FGH{<\omega^\omega}^\ast=\CC{MR}$};
    \draw (10,3) node[color=blue]{$\F{\omega^\omega}=\hack$};
    \draw[color=blue!90,thick] (9.1,2.8) -- (9,2.7);
    \draw (8.5,5) node[rotate=40,font=\Large]{$\cdots$};
  \end{tikzpicture}
  \caption{Some complexity classes beyond \elem.\label{fig-fg}}
\end{figure}
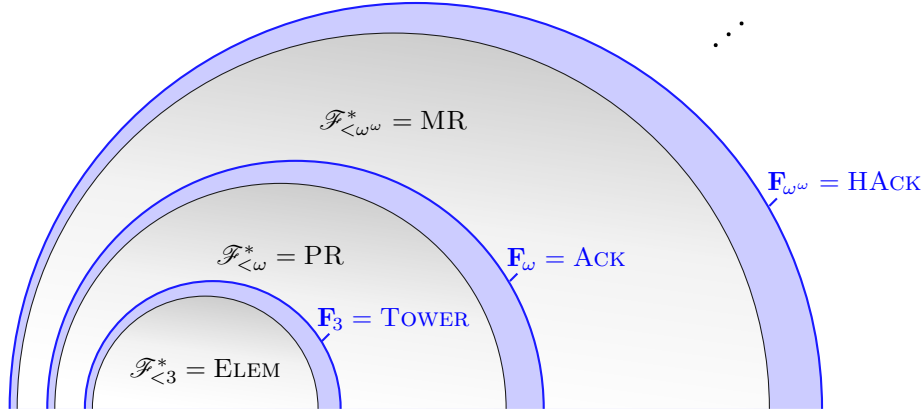

\subsubsection{Reduction Classes}\label{sss-red}
Of course, we could replace in \eqref{eq-fast-class} the class of
reductions $\FGH{<\alpha}$ by a more traditional one, like logarithmic
space (\CC{FL}) or polynomial time (\CC{FP}) functions.  We feel
however that our definition in \eqref{eq-fast-class} better captures
the intuition we have of a problem being ``complete for $F_\alpha$.''
Moreover, using at least $\FGH2$ as our class of reductions allows to
effectively compute the $F_\alpha$ function in the functional version
$\CC{F}\F\alpha$ of $\F\alpha$ (see \autoref{sub-constr}), leading to
interesting combinatorial algorithms (see \autoref{ssub-algo} for an
example).

Unless stated differently, we always assume many-one $\FGH{<\alpha}$
reductions when discussing hardness for $\FC\alpha$ in the remainder
of this paper, but we could just as easily consider Turing reductions
(see \autoref{ssub-red}).

\subsubsection{Basic $\FC\alpha$-Complete Problems}\label{sss-compl}
By \eqref{eq-fast-class}, $\FC\alpha$-hardness proofs can reduce from
the acceptance problem of some input string $x$ by some deterministic
Turing machine $M$ working in time $F_\alpha(p(n))$ for some $p$ in
$\FGH{<\alpha}$.  This can be simplified to a machine $M'$ working in
time $F_\alpha(n)$.  Indeed, because $p$ in $\FGH{<\alpha}$ is honest,
$p(n)$ can be computed in $\FGH{<\alpha}$.  Thus the acceptance of $x$
by $M$ can be reduced to the acceptance problem of a $\#$-padded input
string $x'\eqdef x\#^{p(|x|)-|x|}$ of length $p(|x|)$ by a machine
$M'$ that simulates $M$, and treats $\#$ as a blank symbol---now $M'$
works in time $F_\alpha(n)$.  Another similarly basic $\FC\alpha$-hard
problem is the halting problem for Minsky machines with the sum of
counters bounded by $F_\alpha(n)$~\citep[see][]{fischer68}.

To sum up, we have by definition of the $(\F\alpha)_\alpha$ classes
the following two $\F\alpha$-complete problems---which incidentally
have been used in most of the master reductions in the literature in
order to prove non primitive-recursiveness, non
multiple-recursiveness, and other hardness results~\citep{jancar,urquhart99,phs-mfcs2010,lcs,HSS-lics2012,HaaseSS13,lazic13,rosavelardo14,decker15}:

\begin{problem}[$\mathsf F_\alpha$-TM]{$F_\alpha$-Bounded Turing
    Machine Acceptance}\label{pb-tm}\hfill%
\begin{description}[topsep=0pt,itemsep=0pt,partopsep=0pt,parsep=0pt]
\item[instance:] A deterministic Turing machine $M$ working in time
  $F_\alpha$ and an input $x$.
\item[question:] Does $M$ accept $x$?
\end{description}
\end{problem}

\begin{problem}[$\mathsf F_\alpha$-MM]{$F_\alpha$-Bounded Minsky Machine
    Halting}\label{pb-mm}\hfill%
\begin{description}[topsep=0pt,itemsep=0pt,partopsep=0pt,parsep=0pt]
\item[instance:] A deterministic Minsky machine $M$ with sum of
  counters bounded by $F_\alpha(|M|)$.
\item[question:] Does $M$ halt?
\end{description}
\end{problem}\noindent
See \autoref{sec-bestiary} for a catalogue of natural complete problems,
which should be easier to employ in reductions.

\section{Fast-Growing Complexities in Action}\label{sec-ex}
We present now two short tutorials for the use of fast-growing
complexities, namely for the equivalence problem for start-free
expressions (\autoref{sub-ws1s}) and reachability in lossy counter
systems (\autoref{sub-lcm}), pointing to the relevant technical
results from later sections.  We also briefly discuss in each case the
palliatives employed so far in the literature for expressing such
complexities.

\subsection{A $\CC{Tower}$-Complete Example\eatpunct.}\label{sub-ws1s}
\ifsubmission Such an example \fi can be found in the seminal
paper of \citet{stockmeyer73}, and is quite likely already known by
many readers.  Define a \defstyle{star-free expression} over
some alphabet $\Sigma$ as a term $e$ with abstract syntax
\begin{equation*}
  e ::= a \mid \varepsilon\mid\emptyset \mid e+e\mid ee\mid
  \neg e
\end{equation*}
where ``$a$'' ranges over $\Sigma$ and ``$\varepsilon$'' denotes the empty
string.  Such expressions are inductively interpreted as languages
included in $\Sigma^\ast$ by:
\begin{align*}
  \sem{a}&\eqdef\{a\} &
  \sem{\varepsilon}&\eqdef\{\varepsilon\}&
  \sem{\emptyset}&\eqdef\emptyset\\
  \sem{e_1+e_2}&\eqdef\sem{e_1}\cup\sem{e_2}&
  \sem{e_1e_2}&\eqdef\sem{e_1}\cdot\sem{e_2}&
  \sem{\neg e}&\eqdef\Sigma^\ast\setminus\sem{e}\;.
\end{align*}
The decision problem \pbfont{SFEq} asks, given two such expressions
$e_1,e_2$, whether they are \emph{equivalent}, i.e.\ whether
$\sem{e_1}=\sem{e_2}$.  \Citet{stockmeyer73} show that this problem is
hard for $\tow(\log n)$ space under \CC{FL} reductions if
$|\Sigma|\geq 2$.  The problem \pbfont{WS1S} can be shown similarly
hard thanks to a reduction from \pbfont{SFEq}.

\subsubsection{Completeness}  Recall that \Tow\ is defined as $\F3$,
i.e.\ by the instantiation of \eqref{eq-fast-class} for $\alpha=3$, as
the problems decidable by a Turing machine working in time $F_3$ of
some elementary function of the input size:
\begin{equation}
  \label{eq-tower-class}
  \Tow\eqdef\F3=\!\!\bigcup_{p\in\CC F\elem}\!\!\CC{DTime}\left(F_3(p(n))\right)\;.
\end{equation}

Once hardness for $\tow(\log n)$ is established, hardness for $\Tow$
under elementary reductions is immediate; a detailed proof can apply
\autoref{th-ack} and Equation~\eqref{eq-fast-space} to show that
\begin{equation}
\Tow=\bigcup_{p\in\CC F\elem}\CC{Space}(\tow(p(n))
\end{equation} and use a padding
argument as in \autoref{sss-compl} to conclude.  

That \pbfont{SFEq} is in $\Tow$ can be checked using an
automaton-based algorithm: construct automata recognising $\sem{e_1}$
and $\sem{e_2}$ respectively, using determinization to handle each
complement operator at the expense of an exponential blowup, and check
equivalence of the obtained automata in \CC{PSpace}---the overall
procedure is in space polynomial in $\tow(n)$, thus in $\F3$.  A
similar automata-based procedure yields the upper bound for
\pbfont{WS1S}.

\subsubsection{Discussion}
Regarding upper bounds, there was a natural candidate in the
literature for the missing class \Tow: \citet{grzegorczyk53} defines
an infinite hierarchy of function classes $(\GH{k})_{k\in\+N}$ inside
F\pr\ with $\GH{k+1}=\FGH{k}$ for $k\geq 2$.  This yields
$\text{F\elem}=\GH{3}$, and the $\tow$ function is in
$\GH{4}\setminus\GH{3}$.  Thus \pbfont{WS1S} and \pbfont{SFEq} are in
``time $\GH{4}$,'' and such a notation has occasionally been employed,
for instance for \pbfont{$\beta$-Eq} the $\beta$ equivalence of simply
typed $\lambda$-terms~\citep{statman79,schwichtenberg82,beckmann01}.
Again, we face the issue that $\GH{4}$ is much too large a resource
bound, as it contains for instance all the finite iterates of the
$\tow$ function, and there is therefore no hope of proving the
hardness for $\GH{4}$ of \pbfont{WS1S}, \pbfont{SFEq}, or indeed
\pbfont{$\beta$-Eq}, at least if using a meaningful class of
reductions.

Regarding non elementary lower bounds, recent papers typically
establish hardness for $k$-\CC{ExpTime} (or $k$-\CC{ExpSpace})
\emph{for infinitely many $k$} (possibly through a suitable
parametrisation of the problem at hand), for instance by reducing
from the acceptance of an input of size $n$ by a $\tetra[2]{\!n}{k}$
time-bounded Turing machine.  Provided that such a lower bound
argument is \emph{uniform} for those infinitely many $k$, it
immediately yields a \Tow-hardness proof, by choosing $k\geq n$.  On a
related topic, note that, in contrast with e.g.\ the relationship
between \CC{PH} and \CC{PSpace}, because the exponential hierarchy is
known to be strict, we know for certain
that%
\begin{itemize}
\item for all $k$, $k\text{-}\CC{ExpTime}\subsetneq\elem=\bigcup_kk\text{-}\CC{ExpTime}$,
\item there are no ``\CC{Elem}-complete problems,'' and
\item $\CC{Elem}\subsetneq\Tow$.
\end{itemize}

\subsection{An \Ack-Complete Example}\label{sub-lcm}
Possibly the most popular complete problem for \Ack\ in use in
reductions, \probref{pb-lcn} \pbfont{Reachability} asks whether a
given configuration is reachable in a \emph{lossy counter machine}
(LCM)~\citep{phs-mfcs2010}.  Such counter machines are syntactically
identical to Minsky machines $\tup{Q,\mathtt C,\delta,q_0}$, where
transitions $\delta\subseteq Q\times
\mathtt C\times\{\zero,\incr,\decr\}\times Q$ operate on a set
$\mathtt C$ of counters through \emph{zero-tests} $\mathtt
c\zero$, \emph{increments} $\mathtt c\incr$ and \emph{decrements}
$\mathtt c\decr$.  The semantics of an LCM differ however from the
usual, ``reliable'' semantics of a counter machine in that the counter
values can decrease in an uncontrolled manner at any point of the
execution.  These unreliable behaviours make several problems decidable
on LCMs, contrasting with the situation with Minsky machines.

Formally, a configuration $\sigma=(q,\vec v)$ associates a control
location $q$ in $Q$ with a counter valuation $\vec v$ in $\+N^{\mathtt
C}$, i.e.\ counter values can never go negative.  A transition of the
form $(q,\mathtt c,\mathtt{op},q')$ defines a computation step $(q,\vec
v)\to(q',\vec v')$ where $\vec v(\mathtt c')\leq\vec v'(\mathtt c')$
for all $\mathtt c\neq\mathtt c'$ in $\mathtt C$, and
\begin{itemize}
\item if $\mathtt{op}=\zero$, then $\vec v(\mathtt c)\geq\vec v'(\mathtt
c)=0$,
\item if $\mathtt{op}=\incr$, then $\vec v(\mathtt c)+1\geq\vec
v'(\mathtt c)$, and
\item if $\mathtt{op}=\decr$, then $\vec v(\mathtt c)\geq\vec
v'(\mathtt c)+1$.
\end{itemize}

Let the initial configuration be $(q_0,\vec 0)$.  The reachability
problem for such a system asks whether a given configuration $\tau$
can be reached in a finite number of steps, i.e.\ whether $(q_0,\vec
0)\to^\ast \tau$.  The hardness proof of \citet{phs-mfcs2010}
immediately yields that this problem is \Ack-hard \citep[see
also][]{urquhart99,phs-IPL2002}, where \Ack\ is defined as an instance
of \eqref{eq-fast-class}: it is the class of problems decidable with
$F_\omega$ resources of some primitive-recursive function of the input
size:
\begin{equation}
  \Ack\eqdef \F\omega=\bigcup_{p\in\CC
    F\pr}\CC{DTime}\big(F_\omega(p(n))\big)\;.
\end{equation}

\subsubsection{Decidability of \probref{pb-lcn}} Lossy counter
machines define \emph{well-structured} transition systems over the set
of configurations $Q\times\+N^{\mathtt C}$, for which generic algorithms
have been designed~\citep{abdulla2000c,finkel98b}, which rely on the
existence of a well-quasi-ordering \citep[wqo, see][]{kruskal72} over
the set of configurations.  The particular variant of the algorithm we
present here is well-suited for a complexity analysis, and is taken
from \citep{concur/SchmitzS13}.

Call a sequence of configurations $\sigma_0,\sigma_{1},\dots,\sigma_n$
a \emph{witness} if $\sigma_0=\tau$ is the target configuration,
$\sigma_n=(q_0,\vec 0)$ is the initial configuration, and
$\sigma_{i+1}\to\sigma_{i}$ for all $0\leq i<n$.  An instance of
\probref{pb-lcn} is positive if and only if there exists a witness,
which we will search for backwards, starting from $\tau$ and
attempting to reach the initial configuration $(q_0,\vec 0)$.

Consider the ordering over configurations defined by $(q,\vec v)\leq
(q',\vec v')$ if and only if $q=q'$ and $\vec v\leq_\times\vec v'$,
the latter being defined as $\vec v(\mathtt c)\leq \vec
v'(\mathtt c)$ for all $\mathtt c$ in $\mathtt C$.  Observe that, if
$\sigma_0,\sigma_{1},\dots,\sigma_n$ is a
\emph{shortest} witness, then for all $i<j$,
$\sigma_i\not\leq\sigma_j$, i.e.\ it is a \emph{bad} sequence for $\leq$,
or we could have picked $\sigma_j$ at step $i$ and obtained a strictly
shorter witness.  Furthermore, if at some step $i$ there existed
$s'_i\leq s_i$ with $s'_i\to s_{i-1}$, then we could substitute $s'_i$
for $s_i$ and still have a witness, because $s_{i+1}\to s'_i$.  Thus,
if there exists a witness, then there is a \emph{minimal bad} one,
i.e.\ a bad one where for all $0<i<n$,
$\sigma_{i+1}\in\MinPre(\sigma_i)$ where $\MinPre(\sigma)\eqdef
\min_\leq\{\sigma'\mid \sigma'\to\sigma\}$.

{Now, because $Q$ and ${\mathtt C}$ are finite, $(Q\times\+N^{\mathtt
C},{\leq})$ is a well-quasi-order by Dickson's Lemma, thus
\renewcommand{\theenumi}{\roman{enumi}}
\renewcommand{\labelenumi}{(\theenumi)}
\begin{enumerate}
\item\label{wqo1} for all $i$, the set $\MinPre(\sigma_i)$ is finite,
  and
\item\label{wqo2} any bad sequence, i.e. any sequence
  $\sigma_0,\sigma_{1},\dots$ where $\sigma_i\not\leq\sigma_j$ for all
  $i<j$, is finite.
\end{enumerate}
Therefore, an algorithm for \probref{pb-lcn} can proceed by exploring
a tree of prefixes of potential minimal witnesses, which has
finite degree by~\eqref{wqo1} and finite height by~\eqref{wqo2}, hence
by K\H{o}nig's Lemma is finite.}

\subsubsection{Length Function Theorems}\label{ssub-length}
A nondeterministic version of this search for a witness for
\probref{pb-lcn} will see its complexity depend essentially on the
height of the tree, i.e.\ on the length of bad sequences.  Define the
size of a configuration as its infinity norm $|(q,\vec
v)|=\max_{\mathtt c\in\mathtt C}\vec v(\mathtt c)$, and note that any
$\sigma$ in $\MinPre(\sigma_{i})$ is of size
$|\sigma|\leq |\sigma_i|+1$.  This means that in any sequence
$\sigma_0,\sigma_1,\dots$ where $\tau=\sigma_0$ and
$\sigma_{i+1}\in\MinPre(\sigma_i)$ for all $i$,
$|\sigma_i|\leq |\tau|+i=\suc^i(|\tau|)$ the $i$th iterate of the
successor function $\suc(x)\eqdef x+1$.  We call such a sequence
\emph{controlled} by $\suc$.

What a \emph{length function theorem} provides is an upper bound on
the length of controlled bad sequences over a wqo, depending on the
control function---here the successor function---and the maximal
order type of the wqo---here $\omega^{|\mathtt C|}\cdot |Q|$.  In our
case, the theorems in \citep{SS-icalp2011,wqo} provide an
\begin{equation}\label{eq-length}
  F_{h,|\mathtt C|}^{|Q|}(|\tau|)\leq
  F_{h,\omega}(\max\{|\mathtt C|,|Q|,|\tau|\})\eqdef \ell
\end{equation}
upper bound on both this length and the maximal size of any
configuration in the sequence, where
\begin{itemize}
\item $h{:}\,\+N\to\+N$ is an increasing polynomial function (which
  depends on the control function) and
\item for any increasing $h{:}\,\+N\to\+N$, $(F_{h,\alpha})_\alpha$ is
  a \emph{relativized} fast-growing hierarchy that uses $h$ instead of
  the successor function as base function with index $0$:
\begin{align}\label{eq-rfg-def}
  F_{h,0}(x)&\eqdef h(x)\;,&
  F_{h,\alpha+1}(x)&\eqdef F_{h,\alpha}^{\omega(x)}(x)\;,&
  F_{h,\lambda}(x)&\eqdef F_{h,\lambda(x)}(x)\;.
\end{align}
\end{itemize}

\subsubsection{A Combinatorial Algorithm}\label{ssub-algo}
We have established an upper bound on the length of a shortest
minimal witness, entailing that if a witness exists, then it is of
length bounded by $\ell$ defined in~\eqref{eq-length}.  This bound can
be exploited by a nondeterministic \emph{forward} algorithm, which
\begin{enumerate}
\item computes $\ell$ in a first phase: as we will see
  with \autoref{th-constr}, this can be performed in time
  $F_{h,\omega}(e(n)))$ for some elementary function~$e$,
\item then nondeterministically explores the reachable configurations,
  starting from the initial configuration $(q_0,\vec 0)$ and
  attempting to reach the target configuration $\tau$---but aborts if
  the upper bound on the length is reached.  This second phase uses at
  most $\ell$ steps, and each step can be performed in time polynomial
  in the size of the current configuration, itself bounded by $\ell$.
  The whole phase can thus be performed in time polynomial in $\ell$,
  which is bounded by $F_{h,\omega}(f(n))$ for some
  primitive-recursive $f$ by \autoref{th-redc}.
\end{enumerate}
Thus the overall complexity of this algorithm can be bounded by
$F_{h,\omega}(p(n))$ where $h$ and $p$ are primitive-recursive.
Because by \autoref{cor-rel} and Equation~\eqref{eq-fast-space}, for any
primitive-recursive strictly increasing $h$,
\begin{equation}
  \Ack = \bigcup_{p\in\CC F\pr}\CC{NTime}\big(F_{h,\omega}(p(n))\big)\;,
\end{equation}
this means that \probref{pb-lcn} is in \Ack.

\subsubsection{Discussion}
The oldest statement of \Ack-completeness (under polynomial time
Turing reductions) we are aware of is due to \citet{clote} for
\probref{pb-fcp}, the finite containment problem for Petri nets; see
\autoref{ssub-vas}.  As observed by \citeauthor{clote}, his definition
of \Ack\ as $\CC{DTime}\big(F_\omega(n)\big)$ is somewhat problematic,
since the class is not robust under changes in the model of
computation, for instance RAM vs.\ multitape Turing machines.  A
similar issue arises with the definition
$\bigcup_{c<\omega}\CC{DTime}\big(F_\omega(n+c)\big)$ employed in
\citep{HSS-lics2012}: though robust under changes in the model of
computation, it is not closed under reductions.  Those classes
are too tight to be convenient.

Conversely, stating that a problem is ``in $\FGH{\omega}^\ast$ but not
in $\FGH{k}^\ast$ for any $k$''~\citep[e.g.][]{FFSS-lics2011} is much
less informative than stating that it is $\F\omega$-complete:
$\FGH\omega^\ast$ is too large to allow for completeness statements,
see \autoref{sec-strict}.

\section{Robustness}\label{sec-robust}
In the applications of fast-growing classes we discussed in
sections~\ref{sub-ws1s} and~\ref{sub-lcm}, we relied on both counts on
their ``robustness'' to minor changes in their definition.  More
precisely, we employed space or time hierarchies indifferently, and
alternative generative functions: first for the lower bound of
\pbfont{SFEq} and \pbfont{WS1S}, when we used the \tow\ function
instead of $F_3$ in the reduction, and later for the upper bound of
\probref{pb-lcn}, where we relied on a relativised version of
$F_\omega$.  In this section, we prove these and other small changes
to be innocuous.

\subsection{Generative Functions}\label{sub-robust}
There are many variants for the definition of the fast-growing
functions $(F_\alpha)_\alpha$, but they are all known to generate
essentially the same hierarchy $(\FGH\alpha)_\alpha$.\footnote{See
\citep{ritchie65} and \citep[pp.~48--51]{lob70} for such results---and
the works of Weiermann et al.\ on phase transitions for investigations
of when changes \emph{do} have an impact~\citep[e.g.][]{omri09}.}  
Nevertheless, because the fast-growing complexity classes $\F\alpha$
we defined are smaller, there is no guarantee for these classical
results to hold for them.

\subsubsection{Ackermann Hierarchy}\label{sub-ack}
We start with one particular variant, which is rather common in
the literature: define $A_\alpha{:}\,\+N\to\+N$ for $\alpha>0$ by:
\begin{align}\label{eq-ack-def}
  A_1(x) &\eqdef 2x\;,&
  A_{\alpha+1}(x) &\eqdef A_\alpha^x(1)\;,&
  A_\lambda(x) &\eqdef A_{\lambda(x)}(x)\;.
\end{align}
The hierarchy differs in the treatment of successor indices, where the
argument is reset to $1$ instead of keeping $x$ as
in \eqref{eq-f-succ}.  This definition results for instance in
$A_2(x)=2^x$ and $A_3(x)=\tow(x)$, and is typically used in lower
bound proofs.

We can define a hierarchy of decision problems generated from the
$(A_\alpha)_\alpha$ by analogy with \eqref{eq-fast-class}:
\begin{equation}
    \mathbf{A}_\alpha\eqdef\bigcup_{p\in\FGH{<\alpha}}\!\!\CC{DTime}\left(A_\alpha(p(n)\right).
\end{equation}
For two functions $g{:}\,\+N\to\+N$ and $h{:}\,\+N\to\+N$, let us
write $g\leq h$ if $g(x)\leq h(x)$ for all $x$ in $\+N$.  Because
$A_\alpha\leq F_\alpha$ for all $\alpha>0$, it follows that
$\mathbf{A}_\alpha\subseteq\F\alpha$.  The converse inclusion also
holds: in order to prove it, it suffices to exhibit for all $\alpha>0$
a function $p_\alpha$ in $\FGH{<\alpha}$ such that $F_\alpha\leq
A_\alpha\circ p_\alpha$.  It turns out that a uniform choice
$p_\alpha(x)\eqdef 6x+5$ fits those requirements---it is a linear
function in $\FGH{0}$ and $F_\alpha\leq A_\alpha\circ p_\alpha$ as
shown in \autoref{prop-ack}\mbox{---,} thus:
\begin{theorem}\label{th-ack}
  For all $\alpha>0$, $\mathbf{A}_\alpha=\F\alpha$.
\end{theorem}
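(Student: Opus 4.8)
The plan is to prove the two inclusions $\mathbf{A}_\alpha \subseteq \F\alpha$ and $\F\alpha \subseteq \mathbf{A}_\alpha$ separately. The first inclusion is essentially free: one already knows $A_\alpha \leq F_\alpha$ for all $\alpha > 0$ (this is standard, and I would cite it or prove it by a quick transfinite induction comparing the two definitions), so every set $\cc{DTime}(A_\alpha(p(n)))$ is contained in $\cc{DTime}(F_\alpha(p(n)))$, and taking unions over $p \in \FGH{<\alpha}$ gives $\mathbf{A}_\alpha \subseteq \F\alpha$.

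For the reverse inclusion $\F\alpha \subseteq \mathbf{A}_\alpha$, the key reduction is already spelled out in the excerpt: it suffices to produce, for each $\alpha > 0$, a function $p_\alpha \in \FGH{<\alpha}$ with $F_\alpha \leq A_\alpha \circ p_\alpha$. Granting this, take any $L \in \F\alpha$, so $L \in \cc{DTime}(F_\alpha(q(n)))$ for some $q \in \FGH{<\alpha}$. Then $F_\alpha(q(n)) \leq A_\alpha(p_\alpha(q(n)))$, and since $\FGH{<\alpha}$ is closed under composition (each $\FGH\beta$ is closed under composition, and $p_\alpha \in \FGH{0} \subseteq \FGH{<\alpha}$), the function $r \eqdef p_\alpha \circ q$ lies in $\FGH{<\alpha}$; hence $L \in \cc{DTime}(A_\alpha(r(n))) \subseteq \mathbf{A}_\alpha$. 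Monotonicity of $A_\alpha$ is what lets us push the bound $q(n) \leq p_\alpha(q(n))$ through, but in fact we only need $F_\alpha(q(n)) \leq A_\alpha(p_\alpha(q(n)))$, which follows from the pointwise inequality $F_\alpha \leq A_\alpha \circ p_\alpha$ applied at the point $q(n)$.

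The real content—and the main obstacle—is establishing $F_\alpha \leq A_\alpha \circ p_\alpha$ with the uniform linear choice $p_\alpha(x) = 6x + 5$; this is deferred to \autoref{prop-ack}, so for the proof of the theorem itself I would simply invoke it. The delicate point there is the mismatch at successor steps: $A_{\alpha+1}(x) = A_\alpha^x(1)$ resets the argument to $1$, whereas $F_{\alpha+1}(x) = F_\alpha^{x+1}(x)$ keeps the (larger) argument $x$ and iterates once more. The induction on $\alpha$ must absorb both the lost iteration and the gap between starting at $1$ versus at $x$; the slack $6x+5$ in $p_\alpha$ is exactly what is tuned to make the successor and limit cases go through uniformly, and verifying the limit case additionally requires compatibility with the fundamental sequences (using $0 < \lambda(x) < \lambda(x+1)$ and monotonicity of the indices).

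With \autoref{prop-ack} in hand, the proof is then just the two-line argument:
\begin{proof}
  The inclusion $\mathbf{A}_\alpha \subseteq \F\alpha$ follows from $A_\alpha \leq F_\alpha$, which gives $\cc{DTime}(A_\alpha(p(n))) \subseteq \cc{DTime}(F_\alpha(p(n)))$ for every $p \in \FGH{<\alpha}$. For the converse, by \autoref{prop-ack} the linear function $p_\alpha(x) = 6x+5$ lies in $\FGH{0}$ and satisfies $F_\alpha \leq A_\alpha \circ p_\alpha$. Given $L \in \F\alpha$, pick $q \in \FGH{<\alpha}$ with $L \in \cc{DTime}(F_\alpha(q(n)))$; then $F_\alpha(q(n)) \leq A_\alpha(p_\alpha(q(n)))$, and since $\FGH{<\alpha}$ is closed under composition, $p_\alpha \circ q \in \FGH{<\alpha}$, so $L \in \cc{DTime}(A_\alpha((p_\alpha\circ q)(n))) \subseteq \mathbf{A}_\alpha$.
\end{proof}
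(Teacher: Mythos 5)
Your proposal is correct and follows essentially the same route as the paper: the inclusion $\mathbf{A}_\alpha\subseteq\F\alpha$ from $A_\alpha\leq F_\alpha$, and the converse from the uniform linear bound $F_\alpha\leq A_\alpha\circ p_\alpha$ with $p_\alpha(x)=6x+5$ established in \autoref{prop-ack}, combined with closure of $\FGH{<\alpha}$ under composition. Your identification of where the real work lies (the argument reset to $1$ at successor stages in the Ackermann hierarchy, absorbed by the slack in $p_\alpha$) matches the paper's treatment in the appendix.
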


\subsubsection{Relativised Hierarchies}\label{ssub-rel}
Another means of defining a variant of the fast-growing functions is
to pick a different definition for $F_0$: recall the relativised
fast-growing functions employed in \eqref{eq-rfg-def}. The
corresponding relativised complexity classes are then defined by
\begin{equation}
  \F{h,\alpha}\eqdef\bigcup_{p\in\FGH{<\alpha}}\CC{DTime}\left(F_{h,\alpha}(p(n))\right).
\end{equation}
It is easy to check that,
if $g\leq h$, then $F_{g,\alpha}\leq F_{h,\alpha}$ for all $\alpha$.  Because we
assumed $h$ to be strictly increasing, this entails $F_\alpha\leq
F_{h,\alpha}$, and we have the inclusion $\F\alpha\subseteq
\F{h,\alpha}$ for all strictly increasing $h$.

The converse inclusion does not hold, since for instance $F_{h,1}$ is
non elementary for $h(x)=2^x$.  Observe however that, in this
instance, $h\leq F_2$, and we can see that $F_{F_2,k}=F_{2+k}$ for all
$k$ in $\+N$.  This entails that $\F{h,1}\subseteq\F3$ for $h(x)=2^x$.
Thus, when working with relativised classes, one should somehow
``offset'' the ordinal index by an appropriate amount.

There is nevertheless a difficulty with relativised functions: it is
rather straightforward to show that $F_{h,\alpha}\leq
F_{\beta+\alpha}$ if $h\leq F_\beta$, \emph{assuming} that the direct
sum $\beta+\alpha$ does not ``discard'' any summand from the \ref{CNF}
of $\beta$; e.g.\ $F_{F_1,k}=F_{k+1}$ and
$F_{F_\omega,\omega}=F_{\omega\cdot 2}$.  Observe however that
$F_{F_1,\omega}(x)=F_{F_1,x+1}(x)=F_{x+2}(x)>F_{x+1}(x)=F_\omega(x)$.
Thanks to the closure of $\F\alpha$ under reductions in $\FGH{<\alpha}$,
this issue can be solved by composing with an appropriate function,
e.g.\ $F_{F_1,\omega}(x)\leq F_\omega(x+1)$.  This idea is formalised
in \autoref{app-rel}, and allows to show:
\begin{theorem}\label{th-rel}\renewcommand{\theenumi}{\roman{enumi}}
  Let $h{:}\,\+N\to\+N$ be a strictly increasing function and
  $\alpha,\beta$ be two ordinals.
  \begin{enumerate}\renewcommand{\labelenumi}{(\roman{enumi})}
  \item\label{th-rel-1} If $h\in\FGH{\beta}$, then
    $\F{h,\alpha}\subseteq\F{\beta+1+\alpha}$.
  \item\label{th-rel-2} If $h\leq F_\beta$, then
    $\F{h,\alpha}\subseteq\F{\beta+\alpha}$.
  \end{enumerate}
\end{theorem}
\begin{proof}
  For \eqref{th-rel-1}, if $h$ is in $\FGH\beta$, then there exists
  $x_h$ in $\+N$ such that, for all $x\geq x_h$, $h(x)\leq
  F_{\beta+1}(x)$ \citep[\lemmaautorefname~2.7]{lob70}.  By
  \autoref{lem-rel}, this entails that for all $x\geq x_h$,
  $F_{h,\alpha}(x)\leq F_{\beta+1+\alpha}(F_\gamma(x))$ for some
  $\gamma<\beta+1+\alpha$.  Define the function $f_h$ by $f_h(x)\eqdef
  x+x_h$; then for all $x$, $F_{h,\alpha}(x)\leq
  F_{h,\alpha}(f_h(x))\leq F_{\beta+1+\alpha}(F_\gamma(f_h(x)))$.
  Observe that $F_\gamma\circ f_h$ is in $\FGH{<\beta+1+\alpha}$, thus
  $\F{h,\alpha}\subseteq\F{\beta+1+\alpha}$.

  For \eqref{th-rel-2}, if $\beta+\alpha=0$, then $\beta=\alpha=0$,
  thus $h(x)=x+1$ since it has to be strictly increasing, and
  $F_{h,0}=F_0$.  Otherwise, \autoref{lem-rel} shows that
  $F_{h,\alpha}\leq F_{\beta+\alpha}\circ F_\gamma$ for some
  $\gamma<\beta+\alpha$.  Observe that $F_\gamma$ is in
  $\FGH{<\beta+\alpha}$, thus $\F{h,\alpha}\subseteq\F{\beta+\alpha}$.
\end{proof}

The statement of \autoref{th-rel} is somewhat technical, but
easy to apply to concrete situations; for instance:
\begin{corollary}\label{cor-rel}
  Let $h{:}\,\+N\to\+N$ be a strictly increasing primitive recursive
  function and $\alpha\geq\omega$.  Then $\F{h,\alpha}=\F\alpha$.
\end{corollary}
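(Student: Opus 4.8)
The plan is to turn the corollary into an equality by pairing the ``easy'' inclusion, already available, with the ``hard'' inclusion, which I would extract from \autoref{th-rel}\eqref{th-rel-1}. The easy direction is essentially free: since $h$ is strictly increasing, the discussion just before \autoref{th-rel} gives $F_\alpha\leq F_{h,\alpha}$, hence $\F\alpha\subseteq\F{h,\alpha}$. So the entire content of the statement is the reverse inclusion $\F{h,\alpha}\subseteq\F\alpha$.

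To get it, I would first use that $h$, being primitive recursive, lies in $\cc{F\pr}=\FGH{<\omega}$, so there is a finite $k$ with $h\in\FGH{k}$. Applying \autoref{th-rel}\eqref{th-rel-1} with $\beta\eqdef k$ then yields $\F{h,\alpha}\subseteq\F{k+1+\alpha}$.

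It only remains to observe that $k+1+\alpha=\alpha$ whenever $\alpha\geq\omega$: writing $\alpha$ in \ref{CNF} as $\omega^{\alpha_1}\cdot c_1+\cdots$ with $\alpha_1\geq 1$, the finite prefix is absorbed on the left, since $(k+1)+\omega^{\alpha_1}\cdot c_1=\omega^{\alpha_1}\cdot c_1$. Hence $\F{h,\alpha}\subseteq\F\alpha$, which together with the first inclusion gives $\F{h,\alpha}=\F\alpha$.

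There is no real obstacle here; the one point that deserves a line is the ordinal identity $k+1+\alpha=\alpha$ for $\alpha\geq\omega$, which is precisely where the hypothesis $\alpha\geq\omega$ is used, to swallow the finite offset that \autoref{th-rel}\eqref{th-rel-1} introduces. This hypothesis is genuinely necessary: for $\alpha<\omega$ the offset is not absorbed, in line with the observation in \autoref{ssub-rel} that for $h(x)=2^x$ one has $\F{h,1}\subseteq\F3$ but not $\F{h,1}\subseteq\F1$.
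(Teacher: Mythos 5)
Your proof is correct and follows exactly the paper's own argument: the reverse inclusion via \autoref{th-rel}\eqref{th-rel-1} with $h\in\FGH{k}$ for some finite $k$, the absorption $k+1+\alpha=\alpha$ for $\alpha\geq\omega$, and the easy inclusion from $h$ being strictly increasing. Your added justification of the ordinal identity is a welcome extra detail, but the route is the same.
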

\begin{proof}
  The function $h$ is in $\FGH{k}$ for some $k<\omega$, thus
  $\F{h,\alpha}\subseteq\F{k+1+\alpha}=\F\alpha$ by \autoref{th-rel}.
  Conversely, since $h$ is strictly increasing,
  $\F\alpha\subseteq\F{h,\alpha}$.
\end{proof}

\subsubsection{Fundamental Sequences}  Our last example of a minor
variation is to change the assignment of fundamental sequences.
Instead of the standard assignment of \eqref{eq-fund-def}, we posit a
monotone function $s{:}\,\+N\to\+N$ and consider the assignment
\begin{align}
  \label{eq-fund-s}
  (\gamma+\omega^{\beta+1})(x)_s&\eqdef\gamma+\omega^\beta\cdot s(x)\;,&
  (\gamma+\omega^{\lambda})(x)_s&\eqdef\gamma+\omega^{\lambda(x)_s}\;.
\end{align}
Thus the standard assignment in \eqref{eq-fund-def} is obtained as the
particular case $s(x)=x+1$.  As previously, this gives rise to new
fast-growing functions
\begin{align}
  F_{0,s}(x)&\eqdef x+1\;,\quad F_{\alpha+1,s}(x)\eqdef
  F^{s(x)}_{\alpha,s}(x)\;,\quad F_{\lambda,s}(x)\eqdef F_{\lambda(x)_s,s}(x)
\shortintertext{and complexity classes}
  \F{\alpha,s}&\eqdef\bigcup_{p\in\FGH{<\alpha}}\CC{DTime}\left(F_{\alpha,s}(p(n))\right).
\end{align}

We obtain similar results with non standard fundamental sequences as
with relativised hierarchies (thus also yielding a statement similar
to that of \autoref{cor-rel}):
\begin{theorem}\label{th-fund}\renewcommand{\theenumi}{\roman{enumi}}
  Let $s{:}\,\+N\to\+N$ be a strictly increasing function and
  $\alpha,\beta$ be two ordinals.
  \begin{enumerate}\renewcommand{\labelenumi}{(\roman{enumi})}
  \item\label{th-fund-1} If $s\in\FGH{\beta}$, then
    $\F{\alpha,s}\subseteq\F{\beta+1+\alpha}$.
  \item\label{th-fund-2} If $s\leq F_\beta$, then
    $\F{\alpha,s}\subseteq\F{\beta+\alpha}$.
  \end{enumerate}
\end{theorem}
\begin{proof}
  By applying \autoref{th-rel} alongside \autoref{lem-fund}.
\end{proof}

The case where $s$ is the identity function $\mathrm{id}(x)\eqdef x$
is fairly common in the literature; we obtain in this particular case:
\begin{corollary}
  For all $\alpha$, $\F{\alpha,\mathrm{id}}=\F{\alpha}$.
\end{corollary}
\begin{proof}
  By \autoref{th-fund} and since $\mathrm{id}\leq F_0$, we have the
  inclusion $\F{\alpha,\mathit{id}}\subseteq\F\alpha$.  The converse
  inclusion stems from $F_\alpha\leq F_{\alpha,\mathrm{id}}\circ F_0$,
  as can be seen by transfinite induction over $\alpha$ (see
  \autoref{lem-id-fund}).
\end{proof}

\subsection{Computational Models and Reductions}\label{sub-redc}
In order to be used together with reductions in $\FGH{<\alpha}$, the
classes $\F\alpha$ need to be closed under such functions.  The main
technical lemma to this end states:

\begin{lemma}\label{th-redc}
  Let $f$ and $f'$ be two functions in $\FGH{<\alpha}$.  Then there
  exists $p$ in $\FGH{<\alpha}$ such that $f\circ
  F_\alpha\circ f'\leq F_\alpha\circ p$.
\end{lemma}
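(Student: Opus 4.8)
The plan is to treat $f\circ F_\alpha\circ f'$ in two stages: the inner $f'$ costs nothing because $F_\alpha$ is monotone, whereas the outer $f$ must be ``pushed through'' $F_\alpha$ at the price of slightly enlarging its argument. First I would take $\beta_1,\beta_2<\alpha$ with $f\in\FGH{\beta_1}$ and $f'\in\FGH{\beta_2}$, and invoke \citep[Theorem~2.10]{lob70} to get constants $c,c'$ with $f\le F_{\beta_1}^{c}$ and $f'\le F_{\beta_2}^{c'}$ pointwise (for functions of several arguments, replace the argument tuple by its maximum). Monotonicity of $F_\alpha$ and of $F_{\beta_1}^{c}$ then gives, for every $n$, $f(F_\alpha(f'(n)))\le F_{\beta_1}^{c}(F_\alpha(F_{\beta_2}^{c'}(n)))$, so the whole problem reduces to absorbing the outer $F_{\beta_1}^{c}$ into the argument of $F_\alpha$.

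The key ingredient I would establish is an \emph{absorption} property of the fast-growing hierarchy: for every $\beta<\alpha$ there is a rank $x_0$ with $F_\beta(F_\alpha(x))\le F_\alpha(x+1)$ for all $x\ge x_0$, and hence, iterating $c$ times, $F_{\beta}^{c}(F_\alpha(x))\le F_\alpha(x+c)$ for all sufficiently large $x$. For $\alpha=\delta+1$ a successor this is easy: $F_{\delta+1}(x+1)=F_\delta(F_\delta^{x+1}(x+1))\ge F_\delta(F_\delta^{x+1}(x))=F_\delta(F_{\delta+1}(x))$, which settles the case $\beta=\delta$ for all $x$, and for $\beta<\delta$ one first uses the eventual domination $F_\beta\le F_\delta$ recalled in \autoref{sec-fg}, applicable since $F_\alpha(x)\ge x$ is eventually large. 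For $\alpha$ a limit one argues from $F_\alpha(x)=F_{\alpha(x)}(x)$, the strict monotonicity $\alpha(x)<\alpha(x+1)$ of the standard fundamental sequence, and pointwise monotonicity of $F_{\cdot}$ along successor steps ($F_{\gamma+1}\ge F_\gamma$ pointwise); this limit case — which must cope with the failure of pointwise monotonicity of $F_{\cdot}$ in its ordinal index at limit stages — is the step I expect to be the main obstacle, although it is exactly the kind of statement already developed in Appendix~\ref{app-subrec} (see also \citep[Chapter~4]{sw12} and \citep[pp.~48--51]{lob70}). Everything else is bookkeeping, and any bound of the shape $F_{\beta}^{c}\circ F_\alpha\le F_\alpha\circ q$ with $q\in\FGH{<\alpha}$ would serve equally well.

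Granting the absorption lemma for $\beta_1$ and $c$, fix the corresponding rank $x_0$ and set $m(n)\eqdef F_{\beta_2}^{c'}(n)$; since $m$ is monotone and expansive, $m(n)\ge n$, so the only $n$ with $m(n)<x_0$ are finitely many, and on them $f\circ F_\alpha\circ f'$ is bounded by some constant $C$. I would then take $p(n)\eqdef F_{\beta_2}^{c'}(n)+c+C$, which lies in $\FGH{\beta_2}\subseteq\FGH{<\alpha}$ because $F_{\beta_2}\in\FGH{\beta_2}$, each $\FGH{\beta_2}$ is closed under composition, and $\FGH{\beta_2}$ contains the affine functions. Finally I would check $f(F_\alpha(f'(n)))\le F_\alpha(p(n))$ for every $n$ by splitting on whether $m(n)\ge x_0$: in that case $f(F_\alpha(f'(n)))\le F_{\beta_1}^{c}(F_\alpha(m(n)))\le F_\alpha(m(n)+c)\le F_\alpha(p(n))$ by absorption and monotonicity, and otherwise $F_\alpha(p(n))>p(n)\ge C\ge f(F_\alpha(f'(n)))$ by expansiveness of $F_\alpha$. (For $\alpha\le 1$ the statement is immediate, every function in sight being affine.)
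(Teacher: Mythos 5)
Your overall scaffolding is sound, and your route is genuinely different from the paper's: you aim for an \emph{eventual absorption} bound $F_{\beta}^{c}(F_\alpha(x))\leq F_\alpha(x+c)$, whereas the paper only needs, and only proves, the weaker \emph{commutation} $F_\beta^c\circ F_\alpha\leq F_\alpha\circ F_\beta^c$. That commutation is obtained in \autoref{cor-comp} in essentially one line from the Hardy-function identities: $F_\beta^c=H^{\omega^\beta\cdot c}$ and $F_\alpha=H^{\omega^\alpha}$ by \eqref{eq-F-h}, then $H^{\omega^\beta\cdot c}\circ H^{\omega^\alpha}\leq H^{(\omega^\beta\cdot c)\oplus\omega^\alpha}=H^{\omega^\alpha+\omega^\beta\cdot c}=H^{\omega^\alpha}\circ H^{\omega^\beta\cdot c}$ by \autoref{prop-comp} and \eqref{eq-comp}. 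Note that with this weaker inequality your own bookkeeping still goes through verbatim by taking $p\eqdef F_{\beta_1}^{c}\circ F_{\beta_2}^{c'}$ (plus the constant handling the finitely many exceptional inputs), which is essentially what the paper does with $p=g\circ f'$.

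The genuine gap is the limit case of your absorption lemma, which is exactly where all the difficulty of the statement is concentrated, and the sketch you give for it does not work as written. From $F_\lambda(x)=F_{\lambda(x)}(x)$ and $F_\lambda(x+1)=F_{\lambda(x+1)}(x+1)$, the mere ordinal inequality $\lambda(x)<\lambda(x+1)$ buys you nothing, because these hierarchies are not monotone in the ordinal index (cf.\ $H^{x+2}(x)>H^{\omega}(x)$ in \autoref{app-mon}); and ``pointwise monotonicity along successor steps'' does not apply either, since consecutive fundamental-sequence elements are in general not related by finitely many successor steps (e.g.\ $\omega^\omega(x)=\omega^{x+1}$ versus $\omega^\omega(x+1)=\omega^{x+2}$). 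What is actually needed is the pointwise ordering of \autoref{app-mon}: the relation $\lambda(x)\prec_{x+1}\lambda(x+1)$ from \eqref{eq-pointw-lim} together with the $F$-analogue of \eqref{eq-pointwise-mon}, plus the norm-based uniformity $\beta\prec_{N\beta}\gamma$ for every $\gamma>\beta$ --- without the latter, the threshold $x_0$ obtained for the pair $(\beta,\lambda(x))$ in a naive transfinite induction depends on $x$ and the induction collapses. None of this machinery is actually deployed in your argument, so the key inequality is asserted rather than proven. The cleanest fix is to drop the ``$+c$'' form altogether and derive the commutation form from \autoref{prop-comp}, which internalizes precisely these pointwise-ordering manipulations once and for all.
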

\begin{proof}
  By \autoref{cor-comp}, we know that there exists $g$ in
  $\FGH{<\alpha}$ such that $f\circ F_\alpha\leq F_\alpha\circ g$.  We
  can thus define $p\eqdef g\circ f'$, which is also in
  $\FGH{<\alpha}$ since the latter is closed under composition, to
  obtain the statement.
\end{proof}

\subsubsection{Computational Models}
Note that because we assume $\alpha\geq 3$, $\FGH{<\alpha}$ contains
all the elementary functions, thus \autoref{th-redc} also entails the
robustness of the $\F\alpha$ classes under changes in the model of
computation---e.g.\ RAM vs. Turing machines vs. Minsky machines,
deterministic or nondeterministic or alternating---or the type of
resources under consideration---time or space; e.g.
\begin{equation}\label{eq-fast-space}
  \F\alpha=\bigcup_{p\in\FGH{<\alpha}}\CC{NTime}\big(F_\alpha(p(n))\big)=\bigcup_{p\in\FGH{<\alpha}}\CC{Space}\big(F_\alpha(p(n))\big)\;.
\end{equation}

\subsubsection{Many-One Reductions}
For a function $f{:}\,\+N\to\+N$ and two languages $A$ and $B$, we say
that $A$ \emph{many-one} reduces to $B$ in time $f(n)$, written
$A\leq_m^f B$, if there exists a Turing transducer $T$ working in
deterministic time $f(n)$ such that, for all $x$, $x$ is in $A$ if and
only if $T(x)$ is in $B$.  For a class of functions $\?C$, we write
$A\leq_m^\?C B$ if there exists $f$ in $\?C$ such that $A\leq_m^f B$.
As could be expected given the definitions, each class $\F\alpha$ is
closed under many-one $\FGH{<\alpha}$ reductions:
\begin{theorem}\label{th-red-m}
  Let $A$ and $B$ be two languages.  If
  $A\leq_m^{\FGH{<\alpha}}B$ and $B\in\F\alpha$, then $A\in\F\alpha$.
\end{theorem}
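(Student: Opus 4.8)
The plan is to unfold the definitions and reduce to the case already handled by \autoref{th-redc}. Suppose $A\leq_m^{\FGH{<\alpha}}B$ via a transducer $T$ running in deterministic time $f(n)$ for some $f\in\FGH{<\alpha}$, and suppose $B\in\F\alpha$, so by \eqref{eq-fast-class} there is a deterministic machine $M_B$ deciding $B$ in time $F_\alpha(p(n))$ for some $p\in\FGH{<\alpha}$. I would build a machine $M_A$ for $A$ that, on input $x$ of length $n$, first runs $T$ to produce $y=T(x)$, then runs $M_B$ on $y$ and accepts iff $M_B$ accepts. Correctness is immediate from the definition of many-one reduction: $x\in A\iff y\in B\iff M_B$ accepts $y$.

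The only real content is the time bound for $M_A$. First I would note that since $T$ runs in time $f(n)$, the output length satisfies $|y|\leq f(n)+n$ (it can write at most one new symbol per step, plus the input is present), and since $\FGH{<\alpha}$ is closed under composition and contains the identity and successor, the function $n\mapsto f(n)+n$ is again in $\FGH{<\alpha}$; call it $f'$. Running $M_B$ on $y$ then costs $F_\alpha(p(|y|))\leq F_\alpha(p(f'(n)))$, using monotonicity of $F_\alpha$ and of $p$ (every function in $\FGH{<\alpha}$ can be taken monotone, or one composes with a monotone majorant). Adding the cost $f(n)$ of the first phase, the total running time of $M_A$ is bounded by $f(n)+F_\alpha(p(f'(n)))$. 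Now $f(n)\leq F_\alpha(n)$ for $n$ large enough, since $f\in\FGH{<\alpha}$ is eventually dominated by $F_\alpha$, so the total is bounded by $2\cdot F_\alpha(q(n))$ for $q\eqdef p\circ f'$ (absorbing the finitely many small $n$ by a constant and the factor $2$ into a slightly larger argument via expansiveness of $F_\alpha$). Finally, apply \autoref{th-redc} with the outer function $x\mapsto 2x$ (or any elementary function, which lies in $\FGH{<\alpha}$ since $\alpha\geq 3$) to the composition $x\mapsto 2x$, $F_\alpha$, $q$: there is $p^\star\in\FGH{<\alpha}$ with $2\cdot F_\alpha(q(n))\leq F_\alpha(p^\star(n))$, so $A\in\cc{DTime}(F_\alpha(p^\star(n)))\subseteq\F\alpha$.

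The main obstacle, such as it is, is bookkeeping rather than anything deep: one must be careful that the class of reductions $\FGH{<\alpha}$ is genuinely closed under the operations used (composition, addition, multiplication by a constant, and passage to a monotone majorant) and that the running-time bound of $M_A$ accounts for both phases and for the cost of moving the head / handling the intermediate tape — all of which is subsumed by the elementary slack available because $\alpha\geq 3$. The honesty of functions in $\FGH{<\alpha}$ (so that $f$ and $p$ are themselves computable within their own class, hence within $\F\alpha$) guarantees $M_A$ is a legitimate deterministic machine whose description is effectively obtained. Once the time bound $f(n)+F_\alpha(p(f'(n)))\leq F_\alpha(p^\star(n))$ is in hand, \autoref{th-redc} does all the work of collapsing the nested $\FGH{<\alpha}$-controlled composition back into a single application of $F_\alpha$.
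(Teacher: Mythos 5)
Your proof is correct and follows essentially the same route as the paper's: both construct the composed machine running $T$ then $M_B$, bound its running time by $f(n)+F_\alpha(p(\cdot))$ applied to an $\FGH{<\alpha}$-bounded output length, and invoke Lemma~\ref{th-redc} to fold everything back into a single $F_\alpha\circ p^\star$. Your version merely spells out the bookkeeping (output-length bound, absorbing the additive $f(n)$ term) that the paper leaves implicit.
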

\begin{proof}
  By definition, $A\leq_m^{\FGH{<\alpha}}B$ means that there exists a
  Turing transducer $T$ working in deterministic time $f(n)$ for some
  $f$ in $\FGH{<\alpha}$; note that this implies that the function
  implemented by $T$ is also in $\FGH{<\alpha}$ by \eqref{eq-honest}.
  Furthermore, $B\in\F\alpha$ entails the existence of a Turing
  machine $M$ that accepts $x$ if and only if $x$ is in $B$ and works
  in deterministic time $F_\alpha(p(n))$ for some $p$ in
  $\FGH{<\alpha}$.  We construct $T(M)$ a Turing machine which, given
  an input $x$, first computes $T(x)$ by simulating $T$, and then
  simulates $M$ on $T(x)$ to decide acceptance; $T(M)$ works in
  deterministic time $f(n)+F_\alpha(p(T(n)))$, which shows that $A$ is
  in $\F\alpha$ by \autoref{th-redc}.
\end{proof}

\subsubsection{Turing Reductions}\label{ssub-red}
We write similarly that $A\leq_T^f B$ if there exists a Turing machine
for $A$ working in deterministic time $f(n)$ with oracle calls to $B$,
and $A\leq_T^{\?C}B$ if there exists $f$ in $\?C$ such that
$A\leq_T^{f}B$.  It turns out that Turing reductions in
$\FGH{<\alpha}$ can be used instead of many-one reductions:
\begin{theorem}\label{th-red-T}
  Let $\alpha\geq 3$ and $A$ and $B$ be two languages.  If
  $A\leq_T^{\FGH{<\alpha}} B$ and $B\in\F\alpha$, then $A\in\F\alpha$.
\end{theorem}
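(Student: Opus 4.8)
The plan is to reduce the Turing-reduction case to the many-one case already handled in \autoref{th-red-m}, by absorbing all the oracle calls into a single machine whose running time can still be bounded by $F_\alpha\circ p$ for some $p\in\FGH{<\alpha}$. Suppose $A\leq_T^{\FGH{<\alpha}}B$ via a deterministic oracle machine $N$ running in time $f(n)$ for some $f\in\FGH{<\alpha}$, and suppose $B\in\F\alpha$ is decided by a machine $M$ running in time $F_\alpha(q(n))$ for some $q\in\FGH{<\alpha}$. The first step is the standard observation that, since $N$ runs in time $f(n)$, every oracle query it poses has length at most $f(n)$, and $N$ makes at most $f(n)$ such queries. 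The idea is then to simulate $N$ directly, and whenever $N$ writes a string $y$ on its oracle tape and enters the query state, instead run $M$ on $y$ to decide membership in $B$ and feed the answer back to $N$.

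The second step is the running-time bookkeeping, which is where the only real content lies. Each simulated oracle call on a query $y$ of length $|y|\le f(n)$ costs at most $F_\alpha(q(f(n)))$ steps (plus the cost of copying $y$, which is polynomial in $f(n)$ and hence negligible inside $\FGH{<\alpha}$ since $\alpha\ge 3$). There are at most $f(n)$ such calls, and the non-oracle work of $N$ costs $f(n)$ steps, so the total time is bounded by
\begin{equation*}
  f(n) + f(n)\cdot\bigl(F_\alpha(q(f(n))) + \mathrm{poly}(f(n))\bigr)\;.
\end{equation*}
I would now invoke the closure of $\FGH{<\alpha}$ under composition and the fact that $\alpha\ge 3$ (so that $\FGH{<\alpha}$ contains all elementary functions, in particular $x\mapsto x^2$ and all polynomials) to bound this expression: writing $g\eqdef q\circ f\in\FGH{<\alpha}$, the dominant term $f(n)\cdot F_\alpha(g(n))$ is at most $F_\alpha(g(n))^2$ for $n$ large, and then \autoref{th-redc} (applied with the squaring function on the outside, or more directly \autoref{cor-comp}) gives a $p\in\FGH{<\alpha}$ with $f(n)\cdot F_\alpha(g(n))\le F_\alpha(p(n))$. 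The remaining additive terms are elementary in $f(n)$ and are swallowed by enlarging $p$. Hence the composite machine runs in time $F_\alpha(p(n))$ for some $p\in\FGH{<\alpha}$, witnessing $A\in\F\alpha$.

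The main obstacle—though a mild one—is making sure the number of oracle calls and the per-call input sizes are both controlled by $f$ simultaneously, and that the ``glue'' cost of shuttling queries and answers between the simulation of $N$ and the simulation of $M$ stays elementary; this is exactly the point where the hypothesis $\alpha\ge 3$ is used, since for $\alpha<3$ the class $\FGH{<\alpha}$ need not absorb the polynomial overhead. One should also note, as in \autoref{th-red-m}, that a transducer (here, oracle machine) running in time $f$ computes only functions/queries of size $\le f(n)$, so \eqref{eq-honest} is not even needed here—the time bound itself suffices. Modulo these routine points, the reduction to \autoref{th-redc} closes the argument.
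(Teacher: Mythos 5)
Your proof is correct, but it takes a genuinely different route from the paper's. The paper does not simulate the oracle calls in place: it invokes the folklore fact that $A\leq_T^f B$ implies $A\leq_m^{2^f}B^{\rm tt}$, where $B^{\rm tt}$ is the truth-table version of $B$ --- the oracle machine is turned into a transducer that explores both outcomes of every oracle call and outputs a Boolean combination of queries ``$x\in B$'', at the cost of an exponential blowup in the running time. Since $\alpha\geq 3$ the class $\FGH{<\alpha}$ absorbs $2^f$, and since $B^{\rm tt}$ is in $\F\alpha$ whenever $B$ is, the statement then follows directly from \autoref{th-red-m}. Your argument instead inlines each oracle query as a run of the machine deciding $B$, and does the time bookkeeping by hand: at most $f(n)$ adaptive queries, each of length at most $f(n)$, each costing $F_\alpha(q(f(n)))$, giving a total of roughly $f(n)\cdot F_\alpha(q(f(n)))$, which \autoref{th-redc} (applied to the squaring function, available in $\FGH{<\alpha}$ because $\alpha\geq 3$) collapses back into $F_\alpha\circ p$. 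Your route avoids both the determinization step and the exponential blowup, and is tighter and more self-contained; the paper's route is more modular, piggybacking entirely on the many-one case and a classical recursion-theoretic reduction. The only points to make fully explicit in your version are the harmless normalizations you already gesture at: replace $g\eqdef q\circ f$ by $n\mapsto\max\{g(n),f(n),n\}$ (still in $\FGH{<\alpha}$) so that $f(n)\leq F_\alpha(g(n))$ holds by expansivity and the product really is bounded by $F_\alpha(g(n))^2$. Both proofs use $\alpha\geq 3$ for exactly the role you identify.
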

\begin{proof}
  It is a folklore result on queries in recursion theory that, if
  $A\leq_T^f B$, then $A\leq_m^{2^f}B^{\rm tt}$ where $2^f(n)\eqdef
  2^{f(n)}$ and $B^{\rm tt}$ is the \emph{truth table} version of the
  language $B$, which evaluates a Boolean combination of queries
  ``$x\in B$.''  Indeed, we can easily simulate the oracle machine for
  $A$ using a nondeterministic Turing transducer also in time $f(n)$
  that guesses the answers of the $B$ oracle and writes a conjunction
  of checks ``$x\in B$'' or ``$x\not\in B$'' on the output, to be
  evaluated by a $B^{\rm tt}$ machine.  This transducer can be
  determinised by exploring both outcomes of the oracle calls, and
  handling them through disjunctions in the output; it now works in
  time $2^f(n)$.

  Since $\alpha\geq 3$ and $f$ is in $\FGH{<\alpha}$,
  $2^f$ is also in $\FGH{<\alpha}$.  Furthermore, since $B$ is in
  $\F\alpha$, $B^{\rm tt}$ is also in $\F\alpha$.  The statement then
  holds by \autoref{th-red-m}.
\end{proof}

\section{Strictness}\label{sec-strict}
The purpose of this section is to establish the strictness of the
$(\F\alpha)_\alpha$ hierarchy (\autoref{sub-strict}).  As a first
step, we prove that the $F_\alpha$ functions are ``elementarily''
constructible (\autoref{sub-constr}), which is of independent interest
for combinatorial algorithms in the line of that
of \autoref{ssub-algo}.  We end this section with a remark on the case
$\alpha=2$ (\autoref{sub-2}).

\subsection{Elementary Constructivity}\label{sub-constr}

 The functions $F_\alpha$ are known to be
\emph{honest}, i.e.\ to be computable in time
$\FGH\alpha$~\citep{wainer70,fairtlough98}.  This is however not tight
enough for their use in length function theorems, as in
\autoref{ssub-algo}, where we want to compute their value in time
elementary in $F_\alpha$ itself.  Formally, we call a function $f$
\emph{elementarily constructible} if there exists an elementary
function $e$ in $\CC{FELem}=\FGH{<3}^\ast$ such that $f(n)$ can be
computed in time $e(f(n))$ for all $n$.

We present the statement in the more general case of relativised
fast-growing functions, defined in \eqref{eq-rfg-def} and discussed in
\autoref{ssub-rel}; since $F_0(x)=x+1$ is elementarily constructible,
this yields the result that all the $F_\alpha$ functions are
elementarily constructible:

\begin{theorem}\label{th-constr}
  Let $h{:}\,\+N\to\+N$ be an elementarily constructible strictly
  increasing function and $\alpha$ be an ordinal, then $F_{h,\alpha}$
  is also elementarily constructible.
\end{theorem}
\begin{proof}
  Assume that $h(n)$ can be computed in time $e(h(n))$ for some
  fixed elementary monotone function $e$.
  \autoref{prop-constr} shows that $F_{h,\alpha}$ can be computed in
  time $O(f(F_{h,\alpha}(n)))$ for the elementary function $f(x)\eqdef
  x\cdot(p\circ G_{\omega^\alpha}(x))+e(x))$, where $p\circ G_{\omega^\alpha}$ is an
  elementary function that takes the cost of manipulating (an encoding
  of) the ordinal indices into account.  \autoref{th-redc} then yields
  the result.
\end{proof}

\subsection{Strictness}\label{sub-strict}
Let us introduce yet another generalisation of the $(\F\alpha)_\alpha$
classes, which will allow for a characterisation of the
$(\FGH{\alpha}^\ast)_\alpha^{~}$ and $(\FGH{<\alpha}^\ast)_\alpha^{~}$
classes.  For an ordinal $\alpha$ and a finite $c>0$, define
\begin{equation}\label{eq-def-Fc}
  \F\alpha^c\eqdef \bigcup_{p\in\FGH{<\alpha}}\CC{DTime}\big(F_\alpha^c(p(n))\big)\;.
\end{equation}
Thus $\F\alpha$ as defined in \eqref{eq-fast-class} corresponds to the
case $c=1$.

\begin{proposition}\label{prop-Fc}
  For all $\alpha\geq
  2$, $$\FGH{\alpha}^\ast=\bigcup_{c}\F\alpha^c\;.$$
\end{proposition}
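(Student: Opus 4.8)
The plan is to prove the two inclusions of $\FGH{\alpha}^\ast=\bigcup_{c}\F\alpha^c$ separately. The guiding observation is that inserting a single ``lower'' reduction $p\in\FGH{<\alpha}$ inside a finite power of $F_\alpha$ costs at most one extra iteration of $F_\alpha$, and one extra iteration is invisible once we take the union over all finite $c$.

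For the inclusion $\FGH{\alpha}^\ast\subseteq\bigcup_{c}\F\alpha^c$, I would simply invoke the identity reduction: since $\alpha\geq 2$, the function $n\mapsto n$ lies in $\FGH{0}\subseteq\FGH{<\alpha}$ (it is linear, see \autoref{ssub-lin}), so for every $c$ the choice $p(n)\eqdef n$ in \eqref{eq-def-Fc} gives $\cc{DTime}(F_\alpha^c(n))\subseteq\F\alpha^c$; taking the union over $c$ and recalling the definition \eqref{eq-fgh-dec} of $\FGH{\alpha}^\ast$ yields the inclusion.

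For the converse $\bigcup_{c}\F\alpha^c\subseteq\FGH{\alpha}^\ast$, I would take a language $L\in\F\alpha^c$ for some $c$, decided by a deterministic Turing machine $M$ in time $F_\alpha^c(p(n))$ for some $p\in\FGH{<\alpha}$, and fix $\beta<\alpha$ with $p\in\FGH\beta$. Since every function of $\FGH\beta$ is eventually bounded by $F_\alpha$ when $\beta<\alpha$, there is a rank $n_0$ with $p(n)\leq F_\alpha(n)$ for all $n\geq n_0$; by strict monotonicity of $F_\alpha$ this gives $F_\alpha^c(p(n))\leq F_\alpha^c(F_\alpha(n))=F_\alpha^{c+1}(n)$ for $n\geq n_0$. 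It then only remains to deal with the finitely many inputs of length below $n_0$, which I would do by hard-coding their (constantly many) answers into a finite lookup table in $M$'s control; the extra cost is the $O(n)$ needed to read off the input length and consult the table, and since $F_\alpha$ is expansive this overhead is absorbed, so the patched machine decides $L$ in time $O(F_\alpha^{c+1}(n))$ and hence $L\in\cc{DTime}(F_\alpha^{c+1}(n))\subseteq\FGH{\alpha}^\ast$ by \eqref{eq-fgh-dec}. (If one reads $\cc{DTime}$ with a strict rather than an $O(\cdot)$ time bound, the $O(n)$ overhead is swallowed by one further iterate of $F_\alpha$, landing $L$ in $\cc{DTime}(F_\alpha^{c+2}(n))$, still inside $\FGH{\alpha}^\ast$.)

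The argument is short, and the only step calling for a little care is the behaviour on the small inputs $n<n_0$, where the eventual domination $p\leq F_\alpha$ is not yet available; this is the routine finite-table patch, and its cost vanishes into the union over finite exponents $c$. I would note in passing that \autoref{th-redc} does not short-cut this step: writing $F_\alpha^c\circ p=F_\alpha\circ(F_\alpha^{c-1}\circ p)$ is of no help, since $F_\alpha^{c-1}\circ p$ is not in $\FGH{<\alpha}$ for $c\geq 2$, so passing through the eventual bound $p\leq F_\alpha$ really is the natural route.
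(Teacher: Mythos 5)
Your proof is correct, and the first inclusion is handled exactly as in the paper (it is immediate from \eqref{eq-fgh-dec} via the identity reduction). For the converse inclusion, however, you take a genuinely different route from the paper's. The paper invokes the stronger, \emph{global} domination fact \citep[\theoremautorefname~2.10]{lob70}: every $p\in\FGH{\beta}$ with $\beta<\alpha$ satisfies $p\leq F_\alpha^d$ for some finite $d$ \emph{at every argument}, so monotonicity of $F_\alpha$ gives $F_\alpha^c\circ p\leq F_\alpha^{c+d}$ outright, with no case distinction and no machine surgery. You instead use only the weaker \emph{eventual} domination $p\leq F_\alpha$ above some rank $n_0$, and repair the finitely many small inputs with a hard-coded lookup table. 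Both arguments are sound; the paper's is shorter and purely a computation on functions, while yours relies only on the eventual-domination fact quoted explicitly in the main text (rather than on the cited theorem in its global form) and incidentally lands in the slightly tighter level $\F\alpha^{c+1}$ rather than $\F\alpha^{c+d}$ --- a distinction that evaporates under the union over $c$, which is precisely why the finite-patching overhead is harmless here. Your closing remark that \autoref{th-redc} cannot be used as a shortcut, because $F_\alpha^{c-1}\circ p$ is not in $\FGH{<\alpha}$, is accurate and worth making.
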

\begin{proof}
  The left-to-right inclusion is immediate by definition of
  $\FGH{\alpha}^\ast$ in \eqref{eq-fgh-dec}.  The converse inclusion
  stems from the fact that if $p$ is in $\FGH{\beta}$ for some
  $\beta<\alpha$, then there exists $d$ such that $p\leq
  F_\alpha^d$~\citep[\theoremautorefname~2.10]{lob70}, hence
  $F_\alpha^c\circ p\leq F_\alpha^{c+d}$ by monotonicity of
  $F_\alpha$.
\end{proof}

Let us prove the strictness of the $(\F\alpha^c)_{c,\alpha}$
hierarchy.  By \autoref{prop-Fc} it will also prove that of
$(\FGH\alpha^\ast)_\alpha^{~}$ along the way (note that it is not
implied by the strictness of $(\FGH\alpha)_\alpha$, since it would be
conceivable that none of the separating examples would be
$\{0,1\}$-valued):
\begin{theorem}[Strictness]\label{th-strict}
  For all $c>0$ and $2\leq\beta<\alpha$,
  $$\F\beta^c\subsetneq\F\beta^{c+1}\subsetneq\F\alpha\;.$$
\end{theorem}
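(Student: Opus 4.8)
The plan is to prove the two strict inclusions separately, both relying on a diagonalization/time-hierarchy argument combined with the bounds relating $\FGH{<\alpha}$ reductions to iterates of $F_\alpha$.

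For the first inclusion $\F\beta^c \subsetneq \F\beta^{c+1}$, I would argue as follows. The inclusion $\F\beta^c \subseteq \F\beta^{c+1}$ is immediate since $F_\beta$ is expansive and monotone, so $F_\beta^c(p(n)) \le F_\beta^{c+1}(p(n))$. For strictness, I would invoke the deterministic time hierarchy theorem: there is a language $L$ decidable in $\cc{DTime}(F_\beta^{c+1}(n))$ but not in $\cc{DTime}(t(n))$ for any $t$ with $t(n)\log t(n) = o(F_\beta^{c+1}(n))$. The key point is that every language in $\F\beta^c$ lies in $\cc{DTime}(F_\beta^c(p(n)))$ for some $p\in\FGH{<\beta}$; since $p$ is eventually bounded by $F_\beta$ on the diagonal, indeed $p\le F_\beta^{d}$ for some constant $d$ by \citep[Theorem~2.10]{lob70}, so $F_\beta^c\circ p \le F_\beta^{c+d}$. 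The obstacle is that $F_\beta^{c+d}$ may already dominate $F_\beta^{c+1}$ when $d\ge 1$, so a naive hierarchy argument at the level $c+1$ does not separate. The fix is to pad: I would instead separate $\F\beta^c$ from $\F\beta^{c+1}$ using the fact, provable via \autoref{th-constr}, that $F_\beta$ is elementarily constructible, so one can build a diagonal language inside $\cc{DTime}(F_\beta^{c+1}(n))$ that defeats every machine running in time $F_\beta^c(F_\beta^k(n)) = F_\beta^{c+k}(n)$ by choosing, for each fixed $k$, inputs long enough that $F_\beta^{c+1}(n)$ dwarfs the simulation overhead — this works because the family $\{F_\beta^{c+k} : k\}$ is not cofinal in the growth rate needed, the decisive gap being that $F_\beta^{c+1}$ beats $F_\beta^{c+k}$ only after rescaling the argument, and a single diagonal machine can absorb this by the constructibility bound. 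Concretely, I would cite \autoref{prop-Fc}: $\bigcup_c \F\beta^c = \FGH\beta^\ast$, and use that $F_{\beta}^{c+1}\in\FGH\beta$ together with a careful diagonalization relative to the enumeration of $\FGH{<\beta}$-time machines.

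For the second inclusion $\F\beta^{c+1}\subsetneq\F\alpha$, the inclusion itself follows from $2\le\beta<\alpha$: any language in $\F\beta^{c+1}$ is in $\cc{DTime}(F_\beta^{c+1}(p(n)))$ with $p\in\FGH{<\beta}\subseteq\FGH{<\alpha}$; moreover $F_\beta^{c+1}\circ p \in \FGH\beta \subseteq \FGH{<\alpha}$ is eventually dominated by $F_\alpha$ (since $\beta<\alpha$), hence $\cc{DTime}(F_\beta^{c+1}(p(n)))\subseteq \cc{DTime}(F_\alpha(q(n)))$ for a suitable $q\in\FGH{<\alpha}$ obtained by composing with a shift to handle the "eventually". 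For strictness, I would exhibit an explicit $\F\alpha$-complete problem not in $\F\beta^{c+1}$: by \autoref{prop-Fc}, $\bigcup_c \F\beta^c = \FGH\beta^\ast$, so it suffices to find a language in $\F\alpha\setminus\FGH\beta^\ast$. Since $\FGH\beta^\ast = \FGH{<\beta'}^\ast$ does not contain $F_{\beta+1}$-hard problems — more precisely, the standard problem $F_\alpha$-TM (\autoref{pb-tm}) is in $\F\alpha$ but requires deterministic time $F_\alpha(n)$, which eventually exceeds $F_\beta^{c+1}(p(n))$ for every $p\in\FGH{<\beta}$ because $F_\beta^{c+1}\circ p\in\FGH\beta$ is eventually bounded by $F_{\beta+1}\le F_{\alpha}\circ(\text{something strictly slower than identity's inverse})$ — in fact, because $F_\alpha\notin\FGH\beta$ for $\alpha>\beta$ by the strictness of $(\FGH\alpha)_\alpha$ recalled in the excerpt, no $\FGH\beta$-time bound captures it.

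The main obstacle, which I expect to occupy most of the actual proof, is the first strict inclusion $\F\beta^c\subsetneq\F\beta^{c+1}$: one must diagonalize carefully against all pairs (machine $M$ running in time $F_\beta^c$, reduction $p\in\FGH{<\beta}$) simultaneously, and the issue is that $\FGH{<\beta}$ contains functions like $F_{\beta-1}^k$ (or their limit-ordinal analogues) that grow arbitrarily fast within the level, so $F_\beta^c\circ p$ can already surpass $F_\beta^{c+1}(n)$ pointwise. The resolution is that $F_\beta^c\circ p\le F_\beta^{c+d}$ for a constant $d$ depending on $p$ but not on $n$, and $F_\beta^{c+1}$, while not dominating $F_\beta^{c+d}$, does dominate it after composing the \emph{input} with a fixed shift — so the diagonal language lives at input length where $F_\beta^{c+1}(n) > $ (time to simulate $F_\beta^{c+d}(n')$ steps for the relevant smaller $n'$). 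I would formalize this using elementary constructibility (\autoref{th-constr}) to bound the simulation overhead and a standard lazy-diagonalization/delayed-diagonalization argument; this is the technically delicate part, and I would expect to defer the detailed construction to an appendix, mirroring the paper's stated strategy of relegating subrecursive technicalities to Appendix~\ref{app-subrec}.
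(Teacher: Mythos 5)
Your overall strategy---diagonalize against an $F$-bounded acceptance language, using \autoref{th-constr} for the upper bound---is the paper's strategy, and your treatment of the \emph{inclusion} $\F\beta^{c+1}\subseteq\F\alpha$ matches the paper's (eventual domination of $F_\beta^{c+1}\circ p\in\FGH\beta$ by $F_\alpha$, with the shift $n\mapsto n+n_0$ absorbed into the reduction class). But your argument for $\F\beta^c\subsetneq\F\beta^{c+1}$ has a genuine flaw. The ``obstacle'' you construct is a phantom: you bound the reduction $p$ by $F_\beta^{d}$, which is the bound valid for functions of $\FGH{\beta}$, whereas \eqref{eq-def-Fc} only admits $p\in\FGH{<\beta}$, and every such $p$ is eventually bounded by a \emph{single} application of $F_\beta$ (the fact the paper announces it will use ``copiously''---you even state it before discarding it for the weaker bound). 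Hence $F_\beta^{c}(p(2n+1))\leq F_\beta^{c}(F_\beta(n))=F_\beta^{c+1}(n)$ for all large enough $n$, and the plain diagonalization against $L\eqdef\{\tup{M}\#x\mid M\text{ accepts }x\text{ in }F_\beta^{c+1}(|x|)\text{ steps}\}$ goes through with no padding and no lazy diagonalization; this is exactly what the paper does. Worse, the ``resolution'' you offer for the phantom obstacle is false: $F_\beta^{c+1}$ does \emph{not} dominate $F_\beta^{c+d}$ for $d\geq 2$ after composing the input with a fixed shift, since $F_\beta^{c+d}(n)=F_\beta^{c+1}\bigl(F_\beta^{d-1}(n)\bigr)$ and $F_\beta^{d-1}(n)$ outgrows $n+n_0$; no diagonal machine running in time $F_\beta^{c+1}$ can defeat all $F_\beta^{c+d}$-time machines. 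Had the obstacle been real, your plan would not repair it.

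A second, smaller gap concerns the strictness of $\F\beta^{c+1}\subsetneq\F\alpha$: you assert that $F_\alpha$-TM ``requires deterministic time $F_\alpha(n)$'' and try to derive this from $F_\alpha\notin\FGH\beta$. Non-membership of the \emph{function} $F_\alpha$ in the class $\FGH\beta$ does not by itself give a lower bound on the \emph{decision problem}; that lower bound is precisely what the diagonalization must establish. The paper carries it out explicitly: assuming a machine $K$ decides $L_\alpha$ in time $F_\beta^{c+1}$ (the reduction having been absorbed as above), it builds $N$ that simulates $K$ on $\tup{M}\#\tup{M}$ and flips the answer, pads the description $\tup{N}$ to length $n\geq n_0$ so that $F_\beta^{c+1}(2n+1)\leq F_\alpha(n)$, and derives a contradiction from running $N$ on $\tup{N}$. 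You should perform this step rather than appeal to the strictness of $(\FGH\alpha)_\alpha$, which (as the paper itself remarks before \autoref{th-strict}) does not imply strictness at the level of $\{0,1\}$-valued problems.
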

\begin{proof}[Proof of $\F\beta^{c+1}\subsetneq\F\alpha$]
  Consider first a language $L$ in $\F\beta^{c+1}$, accepted by a
  Turing machine working in time $F_\beta^{c+1}\circ p$ for some $p$
  in $\FGH{<\beta}$ that we can assume to be monotone.  Since
  $\beta<\alpha$ and $F_\beta^{c+1}\circ p$ is in $\FGH\beta$, there
  exists $n_0$ such that, for all $n\geq n_0$,
  $F_\beta^{c+1}(p(n))\leq F_\alpha(n)$, hence for all $n$,
  $F_\beta^{c+1}(p(n))\leq F_\beta^{c+1}(p(n+n_0))\leq
  F_\alpha(n+n_0)$ by monotonicity and expansivity of $F_\beta$.
  Observe that the function $n\mapsto n_0+n$ is in
  $\FGH0\subseteq\FGH{<\alpha}$, thus $L$ also belongs to $\F\alpha$.

  The strictness of the inclusion can be shown by a straightforward
  diagonalisation argument.  Define for this the language
  \begin{equation}\label{eq-strict-proof}
    L_\alpha\eqdef\{\tup M\#x\mid M \text{ accepts $x$ in
      $F_\alpha(|x|)$ steps}\}
  \end{equation}
  where $\tup M$ denotes a description of the Turing machine $M$ and
  $\#$ is a separator.  Then, by \autoref{th-constr}, $L_\alpha$
  belongs to $\F\alpha$, thanks to a Turing machine that first
  computes $F_\alpha$ in time $F_\alpha\circ e$ for some elementary
  function $e$, and then simulates $M$ in time elementary in
  $F_\alpha\circ e$.  Assume now for the sake of contradiction that
  $L_\alpha$ belongs to $\F\beta^{c+1}$, i.e.\ that there exists some
  $c$ and some Turing machine $K$ that accepts $L_\alpha$ in time
  $F_\beta^{c+1}$.  Again, since $\beta<\alpha$ and $F_\beta^{c+1}\circ F_1$
  is in $\FGH\beta$, there exists $n_0$ such that, for all $n\geq
  n_0$, $F_\beta^{c+1}(2n+1)\leq F_\alpha(n)$.  We exhibit a new Turing
  machine $N$
  \begin{enumerate}
  \item that takes as input the description $\tup M$ of a Turing machine
  and simulates $K$ on $\tup M\#\tup M$ but accepts if and
  only if $K$ rejects, and
  \item we ensure that a description $\tup N$ of $N$ has size $n\geq n_0$.
  \end{enumerate}
  Feeding this description $\tup N$ to $N$, it runs in time
  $F_\beta^{c+1}(2n+1)\leq F_\alpha(n)$, and we obtain a contradiction whether
  it accepts or not:
  \begin{itemize}
  \item if $N$ accepts, then $K$ rejects $\tup N\#\tup N$ which is
    therefore not in $L_\alpha$, thus $N$ does not accept $\tup N$ in
    at most $F_\alpha(n)$ steps, which is absurd;
  \item if $N$ rejects, then $K$ accepts $\tup N\#\tup N$ which is
    therefore in $L_\alpha$, thus $N$ accepts $\tup N$ in at most
    $F_\alpha(n)$ steps, which is absurd.\qedhere
  \end{itemize}
\end{proof}
\begin{proof}[Proof of $\F\beta^c\subsetneq\F\beta^{c+1}$]
  Similar to the previous proof; picking $F_\beta^{c+1}$ as the time
  bound instead of $F_\alpha$ in \eqref{eq-strict-proof} suffices to
  establish strictness.
\end{proof}

By \autoref{prop-Fc}, a first consequence of \autoref{th-strict} is
that
\begin{equation}
   \FGH{\beta}^\ast\subsetneq\F\alpha
\end{equation}
for all $2\leq \beta<\alpha$.  Another consequence is that
$(\F\alpha)_\alpha$ ``catches up'' with $(\FGH\alpha^\ast)_\alpha^{~}$ at
every limit ordinal:
\begin{corollary}\label{cor-strict}
  Let $\lambda$ be a limit ordinal, then
\begin{equation*}
  \FGH{<\lambda}^\ast=\bigcup_{\beta<\lambda}\F\beta\subsetneq\F\lambda\;.
\end{equation*}
\end{corollary}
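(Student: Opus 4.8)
The plan is to prove the two halves of the statement separately: first the equality $\FGH{<\lambda}^\ast=\bigcup_{\beta<\lambda}\F\beta$, and then the strict inclusion $\bigcup_{\beta<\lambda}\F\beta\subsetneq\F\lambda$; both will follow almost mechanically from \autoref{prop-Fc} and \autoref{th-strict}. Throughout I will use that $\lambda$, being a limit ordinal, satisfies $\lambda\geq\omega$, so that every $\beta<\lambda$ has $\beta+1<\lambda$; and since $2<\lambda$ and the classes indexed by $\beta<2$ are contained in $\F2\subseteq\FGH2^\ast$, I may freely restrict all unions to indices $\beta$ with $2\leq\beta<\lambda$.

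For the equality I would establish the sandwich $\F\beta\subseteq\FGH\beta^\ast\subseteq\F{\beta+1}$ for every $\beta$ with $2\leq\beta<\lambda$. The left inclusion is immediate, since $\F\beta=\F\beta^1$ is one of the terms in the union $\bigcup_c\F\beta^c$, which equals $\FGH\beta^\ast$ by \autoref{prop-Fc}. For the right inclusion, \autoref{th-strict} gives $\F\beta^c\subsetneq\F{\beta+1}$ for every $c>0$, whence $\FGH\beta^\ast=\bigcup_c\F\beta^c\subseteq\F{\beta+1}$, again by \autoref{prop-Fc}. Taking the union over $2\leq\beta<\lambda$ and using $\beta+1<\lambda$, the three unions $\bigcup_{\beta<\lambda}\F\beta$, $\bigcup_{\beta<\lambda}\FGH\beta^\ast$, and $\bigcup_{\beta<\lambda}\F{\beta+1}$ all coincide, and the middle one is $\FGH{<\lambda}^\ast$ by definition.

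For the inclusion $\bigcup_{\beta<\lambda}\F\beta\subseteq\F\lambda$ I would simply apply \autoref{th-strict} with $\alpha=\lambda$: for $2\leq\beta<\lambda$, $\F\beta=\F\beta^1\subsetneq\F\lambda$. For the strictness I would recycle the diagonal language $L_\lambda=\{\tup M\#x\mid M\text{ accepts }x\text{ in }F_\lambda(|x|)\text{ steps}\}$ of \eqref{eq-strict-proof}. The proof of \autoref{th-strict} already shows, via \autoref{th-constr} instantiated with the successor function as base (so that the relativized $F_{h,\lambda}$ is just $F_\lambda$), that $L_\lambda\in\F\lambda$; and its self-reference/padding argument shows $L_\lambda\notin\F\beta^c$ for all $2\leq\beta<\lambda$ and all $c>0$. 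By \autoref{prop-Fc} this means $L_\lambda\notin\FGH\beta^\ast$ for every such $\beta$, hence $L_\lambda\notin\bigcup_{\beta<\lambda}\FGH\beta^\ast=\FGH{<\lambda}^\ast$, which by the first part equals $\bigcup_{\beta<\lambda}\F\beta$. So $L_\lambda$ witnesses the strict inclusion.

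The one point needing genuine care — and the only thing I would write out in detail — is that the diagonalization in the proof of \autoref{th-strict} applies verbatim when the outer index is the limit ordinal $\lambda$ rather than a successor. This is unproblematic: that argument uses only that $\FGH{<\lambda}$ contains all elementary functions (true since $\lambda>2$) and that any $F_\beta^{c}\circ q$ with $\beta<\lambda$ and $q\in\FGH{<\beta}$ is eventually dominated by $F_\lambda$, which is the standard domination fact for the extended Grzegorczyk hierarchy, stated for arbitrary ordinal indices. Everything else is bookkeeping about the unions, so I expect no real obstacle beyond this check.
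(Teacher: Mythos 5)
Your proof is correct and follows essentially the same route as the paper's: the equality and the inclusion come from sandwiching $\FGH{\beta}^\ast$ between $\F\beta$ and $\F{\beta+1}$ via \autoref{prop-Fc} and \autoref{th-strict}, and strictness comes from the diagonalization underlying \autoref{th-strict}. If anything you are slightly more careful than the paper, which argues strictness by the looser contradiction ``$\F\lambda\subseteq\FGH{<\lambda}^\ast$ would entail $\F\lambda\subseteq\F\beta$ for some $\beta<\lambda$,'' whereas you correctly note that one should exhibit the single witness $L_\lambda$ lying outside every $\F\beta^c$ simultaneously.
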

\begin{proof}
  The equality $\FGH{<\lambda}^\ast=\bigcup_{\beta<\lambda}\F\beta$
  and the inclusion $\FGH{<\lambda}^\ast\subseteq\F\lambda$ can be checked
  by considering a problem in some $\FGH{\beta}^\ast$ for
  $\beta<\lambda$: it is in $\F{\beta}^c$ for some $c>0$
  by \autoref{prop-Fc}, hence in $\F{\beta+1}$ with $\beta+1<\lambda$
  by \autoref{th-strict}, and therefore in $\F\lambda$ again
  by \autoref{th-strict}.  Regarding the strictness of the inclusion,
  assume for the sake of contradiction
  $\F\lambda\subseteq\bigcup_{\beta<\lambda}\F\beta$: this would entail
  $\F\lambda\subseteq\F\beta$ for some $\beta<\lambda$,
  violating \autoref{th-strict}.
\end{proof}\noindent
\autoref{cor-strict} yields %
another characterisation of the primitive-recursive and
multiply-recursive problems as
\begin{align}
  \CC{PR}&=\bigcup_{k}\F{k}\;,&
  \CC{MR}&=\bigcup_{k}\F{\omega^k}\;.
\end{align}

Note that strictness implies that there are no
``$\FGH{\alpha}^\ast$-complete'' problems under $\FGH{<\alpha}$
reductions, since by \autoref{prop-Fc} such a problem would
necessarily belong to some $\F{\alpha}^c$ level, which would in turn
entail the collapse of the $(\F{\alpha}^c)_c^{~}$ hierarchy at the
$\F{\alpha}^c$ level and contradict \autoref{th-strict}.

Similarly, fix a limit ordinal $\lambda$ and some reduction class
$\FGH{\alpha}$ for some $\alpha<\lambda$: there cannot be any
meaningful ``$\FGH{<\lambda}^\ast$-complete'' problem under
$\FGH\alpha$ reductions, since such a problem would be in
$\FGH\beta^\ast$ for some $\alpha<\beta<\lambda$, hence contradicting
the strictness of the $(\FGH\beta^\ast)_{\beta<\alpha}^{~}$ hierarchy; in
particular, there are no ``\pr-complete'' nor ``\CC{MR}-complete''
problems.

\subsection{The Case {\boldmath $\alpha=2$}\eatpunct.}\label{sub-2}
\ifsubmission This case \fi is a bit particular.  We did not consider
it in the rest of the paper (nor the other cases for $\alpha<2$)
because it does not share the usual characteristics of the
$(\F\alpha)_\alpha$: for instance, the model of computation and the
kind of resources become important, as
\begin{equation}
  \F2\eqdef \bigcup_{p\in\FGH 1}\CC{DTime}\big(F_2(p(n))\big)
\end{equation}
would a priori be different if we were to define it through \CC{NTime}
or \CC{DSpace} computations; the following results are
artifacts of this particular choice of a definition.

\subsubsection{Recursion Schemes}\label{ssub-rec}In order to define
$\F2$ fully we need the original definition of the extended
Grzegorczyk hierarchy $(\FGH\alpha)_\alpha$ by \citet{lob70}---the
characterisation in \eqref{eq-fgh-char} is only correct for $\alpha\geq
2$.  This definition is based on the closure of a set of initial
functions under the operations of \emph{substitution}
and \emph{limited primitive recursion}.  More precisely, the set of
initial functions at level $\alpha$ comprises the
constant \define{zero function} $0$, the
\define{sum function} $+{:}\,x_1,x_2\mapsto x_1+x_2$, the
\define{projections} $\pi^n_i{:}\,x_1,\dots,x_n\mapsto x_i$ for all
$0<i\leq n$, and the fast-growing function $F_\alpha$.  New functions
are added to form the class $\FGH\alpha$ through two operations:
\begin{description}
  \item[substitution] if $h_0,h_1,\dots,h_p$
    belong to the class, then so does $f$ if
    \begin{equation*}
      f(x_1,\dots,x_n)=h_0(h_1(x_1,\dots,x_n),\dots,h_p(x_1,\dots,x_n))
      \:,
    \end{equation*}
  \item[limited primitive recursion]
    if $h_0$, $h_1$, and $g$ belong to the class, then so
    does $f$ if
    \begin{align*}
      f(0,x_1,\dots,x_n)&=h_0(x_1,\dots,x_n)
      \:,\\
      f(y+1,x_1,\dots,x_n)&=h_1(y,x_1,\dots,x_n,f(y,x_1,\dots,x_n))
      \:,\\
      f(y,x_1,\dots,x_n)&\leq g(\max\{y,x_1,\dots,x_n\})\;.
    \end{align*}
  \end{description}
Observe that primitive recursion is defined by ignoring the last
\emph{limitedness} condition in the previous definition.  See
the survey by \citet{clote99} on the relationships between
machine-defined and recursion-defined complexity classes.

\subsubsection{Linear Exponential Time}\label{ssub-lin}
Let us focus for now on $\FGH1$, which is the class of reductions used
in $\F2$.  First note that the successor function $\suc(x)=
x+1=x+F_1(0)$ belongs to $\FGH1$.

Call a function $f$ \emph{linear} if there exists a constant $c$ such
that $f(x_1,\dots,x_n)\leq c\cdot\max_i x_i$ for all $x_1,\dots,x_n$.
Observe that, for all $c$, the function $f_c(x)\eqdef c\cdot x$ is in
$\FGH 1$ since $f_c(0)=0$, $f_c(x+1)=\suc^c(0)+f_c(x)$, and $f_c(x)\leq
F_1^c(x)$; thus any linear function is bounded above by a function in
$\FGH1$.  Conversely, if $f$ is in $\FGH1$, then it is linear: this is
true of the initial functions, and preserved by the two operations of
substitution and limited primitive recursion.\footnote{Thus
$\FGH1\subsetneq\GH2$: the latter additionally contains the function
$x,y\mapsto (x+1)\cdot(y+1)$ as an initial function, and is equal to
$\CC{FLinSpace}$ \citep[\theoremautorefname~3.36]{ritchie63,clote99}.}

This entails that $\F2$ matches a well-known complexity class, since
furthermore $F_2(n)=2^{n+1+\log (n+1)}-1$ is in $2^{O(n)}$:
$\F2$ is the \emph{weak} (aka \emph{linear}) exponential-time
complexity class:
\begin{equation}
  \F2=\CC{E}\eqdef\CC{DTime}(2^{O(n)})\;.
\end{equation}

\ifsubmission\makeatletter\let\@period\relax\makeatother\fi
\section{A Short Catalogue}
\label{sec-bestiary}

Our introduction of the fast-growing complexity classes is motivated
by \emph{already known} decidability problems, arising for instance in
logic, verification, or database theory, for which no precise
classification could be provided in the existing hierarchies.  By
listing some of these problems, we hope to initiate the exploration of
this mostly uncharted area of complexity, and to foster the use of
reductions from known problems, rather than proofs from Turing
machines.  The following catalogue of complete problems does not attempt
to be exhaustive; \citet{friedman99} for instance presents many
problems ``of enormous complexity.''

Because examples for \textsc{Tower} are well-known and abound in the
literature, starting with a 1975 survey by
\citet{meyer74},\footnote{Of course \citeauthor{meyer74} does not
  explicitly state $\Tow$-completeness, but it follows immediately from
  the lower and upper bounds he provides.} we rather focus on the non
primitive-recursive levels, i.e.\ the $\F\alpha$ for
$\alpha\geq\omega$.  Interestingly, all these examples rely for their
upper bound on the existence of some well-quasi-ordering (of
\emph{maximal order type} $\omega^\alpha$~\citep[see][]{dejongh77}), and
on a matching length function theorem.

\subsection{{\boldmath $\F\omega$}-Complete Problems}
\label{sec-Fo}

We gather here some of the decision problems known to be \Ack-complete
at the time of this writing.  The common trait of all these problems
is their reliance on Dickson's Lemma over $\+N^d$ for some $d$ for
decidability, and on the associated length function
theorems \citep{mcaloon,clote,FFSS-lics2011,abriola} for \Ack\ upper
bounds. %

\subsubsection{Vector Addition Systems\nopunct}\label{ssub-vas}
(VAS, and equivalently Petri nets), provided the first known
Ackermannian decision problem: \probref{pb-fcp}.

A $d$-dimensional VAS is a pair $\tup{\vec{v}_0,\vec{A}}$ where
$\vec{v}_0$ is an initial configuration in $\+N^d$ and $\vec{A}$ is a
finite set of transitions in $\+Z^d$.  A transition $\vec{u}$ in
$\vec{A}$ can be applied to a configuration $\vec{v}$ in $\+N^d$ if
$\vec{v}'=\vec{v}+\vec{u}$ is in $\+N^d$; the resulting configuration
is then $\vec{v}'$.  The complexity of decision problems for VAS
usually varies from \textsc{ExpSpace}-complete
\citep{lipton76,rackoff78,mfcs/BlockeletS11} to $\F\omega$-complete
\citep{fct,jancar} to undecidable \citep{hack76,jancar95b}, via a key
problem, whose exact complexity is unknown: \pbfont{VAS
  Reachability}
\citep{mayr,kosa,lambert,leroux-popl2011,leroux15}.

\begin{problem}[FCP]{Finite Containment
    Problem}\label{pb-fcp}\hfill
\begin{description}[topsep=0pt,itemsep=0pt,partopsep=0pt,parsep=0pt]
\item[instance:] Two VAS $\?V_1$ and $\?V_2$ known to have finite sets
  $\mathrm{Reach}(\?V_1)$ and $\mathrm{Reach}(\?V_2)$ of reachable
  configurations.
\item[question:] Is $\mathrm{Reach}(\?V_1)$ included in
  $\mathrm{Reach}(\?V_2)$?
\item[lower bound:] \citet{fct}, from an $F_\omega$-bounded version of
  \pbfont{Hilbert's Tenth Problem}.  A simpler reduction is given by
  \citet{jancar} from $F_\omega\text-\mathsf{MM}$ the halting problem of
  $F_\omega$-bounded Minsky machines.
\item[upper bound:] Originally \citet{mcaloon} and \citet{clote}, or
  more generally using length function theorems for Dickson's
  Lemma \citep{FFSS-lics2011,abriola}.
\item[comment:] Testing whether the set of reachable configurations of
  a VAS is finite is \CC{ExpSpace}-complete
  \citep{lipton76,rackoff78}.  \probref{pb-fcp} has been generalised by
  \citet{jancar} to a large range of behavioural relations between
  two VASs.  Without the finiteness condition, these questions are
  undecidable \citep{hack76,jancar95b,jancar}.
\end{description}
\end{problem}

An arguably simpler problem on vector addition systems has recently
been shown to be \Ack-complete by \citet{hofman14}.  A \emph{labelled
  vector addition system with states} (VASS)
$\?V=\tup{Q,\Sigma,d,T,q_0,\vec{v}_0}$ is a VAS extended with a finite
set $Q$ of control states that includes a distinguished initial state
$q_0$.  The transitions in $T$ of such systems are furthermore
labelled with symbols from a finite alphabet $\Sigma$: transitions are
then defined as quadruples $q\xrightarrow{a,\vec u}q'$ for $a$ in
$\Sigma$ and $\vec u$ in $\+Z^d$.  Such a system defines an infinite
labelled transition system $\tup{Q\times\+N^d,\to,(q_0,\vec{v}_0)}$
where $(q,\vec v)\xrightarrow{a}(q',\vec v+\vec u)$ if
$q\xrightarrow{a,\vec u}q'$ is in $T$ and $\vec v+\vec u\geq\vec 0$.
The \emph{set of traces} of $\?V$ is the set of finite sequences
$L(\?V)\eqdef\{a_1\cdots a_n\in\Sigma^\ast\mid\exists (q,\vec v)\in
Q\times\+N^d.(q_0,\vec{v}_0)\xrightarrow{a_1\cdots a_n}(q,\vec v)\}$.
\begin{problem}[1VASSU]{One-Dimensional VASS Universality}\label{pb-1vassu}\hfill
\begin{description}[topsep=0pt,itemsep=0pt,partopsep=0pt,parsep=0pt]
\item[instance:] A one-dimensional labelled VASS
  $\?V=\tup{Q,\Sigma,1,T,q_0,\vec{x}_0}$.
\item[question:] Does $L(\?V)=\Sigma^\ast$, i.e.\ is every finite
  sequence over $\Sigma$ a trace of $\?V$?
\item[lower bound:] \citet{hofman14} by reduction from reachability in
  gainy counter machines, see \probref{pb-lcn}.
\item[upper bound:] \citet{hofman14} using length function theorems for
  Dickson's Lemma.%
\item[comment:] One-dimensional VASS are also called ``one counter
  nets'' in the literature.  More generally, the \emph{inclusion}
  problem $L\subseteq L(\?V)$ for some rational language $L$ is still
  \Ack-complete.
\end{description}
\end{problem}

\subsubsection{Unreliable Counter Machines.}
A \emph{lossy counter machine} (LCM) is syntactically a Minsky
machine, but its operational semantics are different: its counter
values can decrease nondeterministically at any moment during
execution.  See \autoref{sub-lcm} for details.
\begin{problem}[LCM]{Lossy Counter Machines
    Reachability}\label{pb-lcn}\hfill
\begin{description}[topsep=0pt,itemsep=0pt,partopsep=0pt,parsep=0pt]
\item[instance:] A lossy counter machine $M$ and a configuration
  $\sigma$.
\item[question:] Is $\sigma$ reachable in $M$ with lossy semantics?
\item[lower bound:] \citet{phs-mfcs2010}, by a direct reduction from
  $F_\omega$-bounded Minsky machines.  The first proofs were given
  independently by \citeauthor{urquhart99}
  in~\citeyear{urquhart99}~\citep{urquhart99} and
  \citeauthor{phs-IPL2002} in~\citeyear{phs-IPL2002}~\citep{phs-IPL2002}.
\item[upper bound:] Length function theorem for Dickson's
  Lemma.%
\item[comment:] Completeness also holds for terminating LCMs (meaning
  that every computation starting from the initial configuration
  terminates), coverability in Reset or Transfer Petri nets, and
  for reachability in \emph{gainy} counter machines, where counter
  values can increase nondeterministically.
\end{description}
\end{problem}

\subsubsection{Relevance Logics\nopunct} provide different semantics of
implication, where a fact $B$ is said to follow from $A$, written
``$A\rightarrow B$'', only if $A$ is actually \emph{relevant} in the
deduction of $B$.  This excludes for instance $A\rightarrow(B\rightarrow A)$,
$(A\wedge\neg A)\rightarrow B$, etc.---see \citet{dunn02} for more
details.  Although the full logic $\mathbf{R}$ is undecidable
\citep{urquhart84}, its conjunctive-implicative fragment
$\mathbf{R}_{\rightarrow,\wedge}$ is decidable, and \Ack-complete:
\begin{problem}[CRI]{Conjunctive Relevant
    Implication}\label{pb-rec}\hfill
\begin{description}[topsep=0pt,itemsep=0pt,partopsep=0pt,parsep=0pt]
\item[instance:] A formula $A$ of $\mathbf{R}_{\rightarrow,\wedge}$.
\item[question:] Is $A$ a theorem of $\mathbf{R}_{\rightarrow,\wedge}$?
\item[lower bound:] \citet{urquhart99}, from a variant of
  \probref{pb-lcn}: the emptiness problem of \defstyle{alternating
    expansive counter systems}, for which he proved
    $\F\omega$-hardness directly from $F_\omega\text-\mathsf{MM}$ the
    halting problem in $F_\omega$-bounded Minsky machines.
\item[upper bound:] \citet{urquhart99} using length function theorem
    for Dickson's Lemma.%
\item[comment:] Hardness also holds for any
  intermediate logic between $\mathbf{R}_{\rightarrow,\wedge}$ and
  $\mathbf{T}_{\rightarrow,\wedge}$, which might include some
  undecidable fragments.  The related \emph{contractive propositional
  linear logic} LLC and its additive-multiplicative fragment MALLC are
  also \Ack-complete~\citep{lazic14}.
\end{description}
\end{problem}

\subsubsection{Data Logics \& Register
  Automata\nopunct}\label{sub-datal} are concerned
with structures like words or trees with an additional equivalence
relation over the positions.  The motivation for this stems in
particular from XML processing, where the equivalence stands for
elements sharing the same \emph{datum} from some infinite data domain
$\+D$.  Enormous complexities often arise in this context, both
for automata models (register automata and their variants, when
extended with alternation or histories) and for logics (which include
logics with \emph{freeze} operators and XPath fragments)---the two
views being tightly interconnected.

\begin{problem}[A1RA]{Emptiness of Alternating
    1-Register Automata}\label{pb-ara}\hfill
\begin{description}[topsep=0pt,itemsep=0pt,partopsep=0pt,parsep=0pt]
\item[instance:] An A1RA $\?A$.
\item[question:] Is the data language $L(\?A)$ empty?
\item[lower bound:] \citet{demri09}, from reachability in gainy
  counter machines \probref{pb-lcn}.
\item[upper bound:] \citet{demri09}, by reducing to reachability in gainy
  counter machines \probref{pb-lcn}.
\item[comment:] There exist many variants of the A1RA model, and
  hardness also holds for the corresponding data logics
  \citep[e.g.][]{jurdzinski2007,demri09,FigSeg-mfcs09,tan2010,figueira12,tzelevekos13}.
  See \probref{pb-ata} for the case of linearly ordered data,
  and \probref{pb-odl} for data logics using multiple attributes with a
  hierarchical policy.
\end{description}
\end{problem}

\subsubsection{Metric Temporal Logic\nopunct}
(MTL) allows %
to reason on \defstyle{timed
words} over $\Sigma\times\+R$, where $\Sigma$ is a finite alphabet and
the real values are non decreasing \emph{timestamps} on
events \citep{koymans90}.  When considering infinite timed words, one
usually focuses on \emph{non-Zeno} words, where the timestamps are
increasing and unbounded.  MTL is an extension of linear temporal
logic where temporal modalities are decorated with real intervals
constraining satisfaction; for instance, a timed word $w$ satisfies
the formula $\mathsf{F}_{[3,\infty)}\varphi$ at position $i$, written
$w,i\models\mathsf{F}_{[3,\infty)}\varphi$, only if $\varphi$ holds at
some position $j>i$ of $w$ with timestamp $\tau_j-\tau_i\geq 3$.
The \emph{safety} fragment of MTL restricts the intervals decorating
``until'' modalities to be right-bounded.

\begin{problem}[SMTL]{Satisfiability of Safety
    Metric Temporal Logic}\label{pb-smtl}
\begin{description}[topsep=0pt,itemsep=0pt,partopsep=0pt,parsep=0pt]
\item[instance:] A safety MTL formula $\varphi$.
\item[question:] Does there exist an infinite non-Zeno timed word $w$ s.t.\
  $w,0\models\varphi$?
\item[lower bound:] \citet{lazic13}, by a direct reduction from
  $F_\omega$-bounded Turing machines.
\item[upper bound:] \citet{lazic13} by resorting to length function
  theorems for Dickson's Lemma.
\item[comment:] The complexity bounds are established through
reductions to and from the \emph{fair termination} problem for
insertion channel systems, which \citet{lazic13} show to
be \Ack-complete; see \probref{pb-lcst}.
\end{description}
\end{problem}

\subsubsection{Ground Term Rewriting\nopunct.}
A \emph{ground term rewrite system with state} (sGTRS) maintains a
finite ordered labelled tree along with a control state from some
finite set.  While most questions about ground term rewrite systems
are decidable~\citep{dauchet90}, the addition of a finite set of
control states yields a Turing-powerful formalism.  Formally, a sGTRS
$\tup{Q,\Sigma,R}$ over a ranked alphabet $\Sigma$ and a finite set of
states $Q$ is defined by a finite set of rules $R\subseteq (Q\times
T(\Sigma))^2$ of the form $(q,t)\to(q',t')$ acting over pairs of
states and trees, which rewrite a configuration $(q,C[t])$ into
$(q',C[t'])$ in any context~$C$.

\Citet{hague14} adds \emph{age} labels in $\+N$ to every node of the
current tree.  In the initial configuration, every tree node has age
zero, and at each rewrite step $(q,C[t])\to(q',C[t'])$, in the
resulting configuration the nodes in $t'$ have age zero, and the nodes
in $C$ see their age increment by one if $q\neq q'$ or remain with the
same age as in $(q,C[t])$ if $q=q'$.  A \emph{senescent} sGTRS with
\emph{lifespan} $k$ in $\+N$ restricts rewrites to only occur in
subtrees of age at most $k$, i.e.\ when matching $C[t]$ the age of the
root of $t$ is $\leq k$.
\begin{problem}[SGTRS]{State Reachability in Senescent Ground Term
    Rewrite Systems}\label{pb-sgtrs}\hfill\vspace*{-1em}
\begin{description}[topsep=0pt,itemsep=0pt,partopsep=0pt,parsep=0pt]
\item[instance:] A senescent sGTRS $\tup{Q,\Sigma,R}$ with lifespan
  $k$, two states $q_0$ and $q_f$ in $Q$, and an initial tree $t_0$ in
  $T(\Sigma)$.
\item[question:] Does there exist a tree $t$ in $T(\Sigma)$ such that
  $(q_f,t)$  is reachable from $(q_0,t_0)$?
\item[lower bound:] \citet{hague14}, from coverability in reset
  Petri nets, see~\probref{pb-lcn}.
\item[upper bound:] \citet{hague14}, by reducing to coverability in
reset Petri nets, see~\probref{pb-lcn}.
\end{description}
\end{problem}

\subsubsection{Interval Temporal Logics\nopunct} provide a formal
framework for reasoning about temporal intervals.
\Citet{halpern91interval} define a logic with modalities expressing
the basic relationships that can hold between two temporal intervals,
$\tup{B}$ for ``begun by'', $\tup{E}$ for ``ended by'', and their
inverses $\tup{\bar{B}}$ and $\tup{\bar{E}}$.  This logic, and even
small fragments of it, has an undecidable satisfiability problem, thus
prompting the search for decidable restrictions and variants.
\Citet{montanari10} show that the logic with relations
$A\bar{A}B\bar{B}$---where $\tup{A}$ expresses that the two intervals
``meet'', i.e.\ share an endpoint---, has an $\F\omega$-complete
satisfiability problem over finite linear orders:
\begin{problem}[ITL]{Finite Linear Satisfiability of
    $A\bar{A}B\bar{B}$}\label{pb-itl}\hfill
\begin{description}[topsep=0pt,itemsep=0pt,partopsep=0pt,parsep=0pt]
\item[instance:] An $A\bar{A}B\bar{B}$ formula $\varphi$.
\item[question:] Does there exist an interval structure $\?S$ over
  some finite linear order and an interval $I$ of $\?S$ s.t.\
  $\?S,I\models\varphi$?
\item[lower bound:] \citet{montanari10}, from reachability in lossy
  counter systems~(\probref{pb-lcn}).
\item[upper bound:] \citet{montanari10}, by reducing to reachability
in lossy counter systems~(\probref{pb-lcn}).
\item[comment:] Hardness already holds for the fragments $\bar{A}B$
  and $\bar{A}\bar{B}$ \citep{bresolin2012}.
\end{description}
\end{problem}

\subsection{{\boldmath $\F{\omega^\omega}$}-Complete Problems}
\label{sec-Foo}

The following problems are known to be complete for \hack.  In most
cases they have been proven decidable thanks to Higman's Lemma over
some finite alphabet, and the complexity upper bounds stem from the
length function theorems of \citet{weiermann94,cichon98,SS-icalp2011}.

\subsubsection{Lossy Channel Systems\nopunct}\label{ssub-lcs}
(LCS) are finite labelled transition systems $\tup{Q,M,\delta,q_0}$
where transitions in $\delta\subseteq Q\times \{?,!\}\times M\times Q$
read and write on an unbounded channel.  This would lead to a
Turing-complete model of computation, but the operational semantics of
LCS are ``lossy'': the channel loses symbols in an uncontrolled
manner.  Formally, the configurations of an LCS are pairs $(q,x)$,
where $q$ in $Q$ holds the current state and $x$ in $M^\ast$ holds the
current contents of the channel.  A read $(q,{?}m,q')$ in $\delta$
updates this configuration into $(q,x')$ if there exists some $x''$
s.t.\ $x'\leq_\ast x''$ and $mx''\leq_\ast x$---where $\leq_\ast$
denotes subword embedding---, while a write transition $(q,{!}m,q')$
updates it into $(q',x')$ with $x'\leq_\ast xm$; the initial
configuration is $(q_0,\varepsilon)$, with empty initial channel
contents.

Due to the unboundedness of the channel, there might be infinitely
many configurations reachable through transitions.  Nonetheless, many
problems are decidable \citep{abdulla96b,cece95} using Higman's Lemma
and what would later become known as the theory of
\emph{well-structured transition systems}
(WSTS)~\citep{finkel87c,abdulla2000c,finkel98b}.  LCS are also the
primary source of problems hard for $\F{\omega^\omega}$:

\begin{problem}[LCS]{LCS Reachability}\label{pb-lcs}\hfill
\begin{description}[topsep=0pt,itemsep=0pt,partopsep=0pt,parsep=0pt]
\item[instance:] A LCS and a configuration $(q,x)$ in $Q\times
  M^\ast$.
\item[question:] Is $(q,x)$ reachable from the initial configuration?
\item[lower bound:] \citet{lcs}, by a direct reduction from
  $F_{\omega^\omega}\text-\mathsf{MM}$ the halting problem in
  $F_{\omega^\omega}$-bounded Minsky machines.
\item[upper bound:] \citet{lcs} using the length function theorem
  of \citet{cichon98}, or more generally using length function
  theorems for Higman's Lemma \citep{weiermann94,SS-icalp2011}.
\item[comment:] Hardness holds already for the (semantically defined)
  class of terminating systems, and for reachability
  in \emph{insertion channel systems}, where symbols are
  nondeterministically inserted in the channel at arbitrary positions
  instead of being lost.  The bounds are refined and parametrised in
  function of the size of the alphabet $M$
  in~\citep{fossacs/KarandikarS13}.
\end{description}
\end{problem}

There are many interesting applications of this question; let us
mention one in particular: \citet{tsoreach} show how concurrent
finite programs communicating through \emph{weak} shared
memory---i.e.\ prone to reorderings of read or writes, modelling the
actual behaviour of microprocessors, their instruction pipelines, and
cache levels---have an $\F{\omega^\omega}$-complete control-state
reachability problem, through reductions to and from
\probref{pb-lcs}.

\begin{problem}[LCST]{LCS Termination}\label{pb-lcst}\hfill
\begin{description}[topsep=0pt,itemsep=0pt,partopsep=0pt,parsep=0pt]
\item[instance:] A LCS.
\item[question:] Is every sequence of transitions from the initial
  configuration finite?
\item[lower bound:] \citet{lcs}, by a reduction from terminating
  instances of \probref{pb-lcs}.
\item[upper bound:] Length function theorems for Higman's
Lemma.%
\item[comment:] Unlike \pbfont{Reachability}, \pbfont{Termination} is
  sensitive to switching from lossy semantics to insertion semantics: it
  becomes \textsc{NL}-complete in general~\citep{cece95},
  \textsc{Tower}-complete when the channel system is equipped with
  \emph{channel tests}~\citep{bouyer12}, and \Ack-complete when one asks for
  \emph{fair} non termination, where the channel contents are read
  infinitely often~\citep{lazic13}.
\end{description}
\end{problem}

\subsubsection{Embedding Problems\nopunct}
have been introduced by \citet{pepreg},
motivated by decidability problems in various classes of channel
systems mixing lossy and reliable channels.  These problems are
centred on the subword embedding relation $\leq_\ast$ and called
\pbfont{Post Embedding Problems}.  There is a wealth of variants and
applications, see e.g.\ \citep{CS-fossacs08,KS-csr12,fossacs/KarandikarS13}. 

We give here a slightly different viewpoint, taken from
\citep{barcelo12,fossacs/KarandikarS13}, that uses regular relations
(i.e.\ definable by synchronous finite transducers) and rational
relations (i.e.\ definable by finite transducers):

\begin{problem}[RatEP]{Rational Embedding
    Problem}\label{pb-ratep}\hfill
\begin{description}[topsep=0pt,itemsep=0pt,partopsep=0pt,parsep=0pt]
\item[instance:] A rational relation $R$ included in
  $\Sigma^\ast\times\Sigma^\ast$.
\item[question:] Is $R\cap{\leq_\ast}$ non empty?
\item[lower bound:] \citet{pepreg}, from reachability in lossy
channel systems~(\probref{pb-lcs}).
\item[upper bound:] Length function theorems for Higman's
Lemma.%
\item[comment:] \citet{pepreg} call this problem the \pbfont{Regular
    Post Embedding Problem}, not to be mistaken with \probref{pb-gip}.
  An equivalent presentation uses a rational language $L$ included in
  $\Sigma^\ast$ and two homomorphisms
  $u,v{:}\,\Sigma^\ast\to\Sigma^\ast$, and asks whether there exists
  $w$ in $L$ s.t.\ $u(w)\leq_\ast v(w)$.  The bounds are refined and
  parametrised in function of the size of the alphabet $\Sigma$
  in~\citep{fossacs/KarandikarS13}.
\end{description}
\end{problem}

\begin{problem}[GEP]{Generalised Embedding
    Problem}\label{pb-gip}\hfill
\begin{description}[topsep=0pt,itemsep=0pt,partopsep=0pt,parsep=0pt]
\item[instance:] A regular relation $R$ included in
  $(\Sigma^\ast)^m$ and a subset $I$ of $\{1,...,m\}^2$.
\item[question:] Does there exist $(w_1,\dots,w_m)$ in $R$ s.t.\ for
  all $(i,j)$ in $I$, $w_i\leq_\ast w_j$?
\item[lower bound:] \citet{barcelo12}, from \probref{pb-ratep}.
\item[upper bound:] Length function theorems for Higman's
Lemma.%
\item[comment:] The \pbfont{Regular Embedding Problem}
  (\pbfont{RegEP}) corresponds to the case where $m=2$ and
  $I=\{(1,2)\}$, and is already $\F{\omega^\omega}$-hard; see
  \citep{fossacs/KarandikarS13} for refined bounds.  \Citet{barcelo12}
  use \probref{pb-gip} to show the $\F{\omega^\omega}$-hardness of
  querying graph databases using particular extended conjunctive
  regular path queries.
\end{description}
\end{problem}

\subsubsection{Timed Automata\nopunct} \citep{alur94} are finite
  automata able to recognise timed words.  They are extended
with \emph{clocks} that evolve synchronously through time, and can be
reset and compared against some time interval by the transitions of
the automaton.  The model can be extended with alternation, and is
then called an ATA.  Satisfiability problems for MTL reduce to
emptiness problems for ATAs.  \citet{mtl} and \citet{ata}
prove using WSTS techniques that, in the case of a single clock,
emptiness of ATAs is decidable.  Note that the \emph{safety} fragment
of MTL has an \Ack-complete satisfiability problem,
see \probref{pb-smtl}.
\ifsubmission\relax\else\pagebreak\fi
\begin{problem}[A1TA]{Emptiness of Alternating
    1-Clock Timed Automata}\label{pb-ata}
\begin{description}[topsep=0pt,itemsep=0pt,partopsep=0pt,parsep=0pt]
\item[instance:] An A1TA $\?A$.
\item[question:] Is the timed language $L(\?A)$ empty?
\item[lower bound:] \citet{ata}, from reachability in insertion channel systems
  (\probref{pb-lcs}).
\item[upper bound:] Length function theorems for Higman's
 Lemma.%
\item[comment:] Hardness already holds for universality of
  nondeterministic 1-clock timed automata.
\end{description}
\end{problem}

\begin{problem}[fMTL]{Finite Satisfiability of
    Metric Temporal Logic}\label{pb-mtl}
\begin{description}[topsep=0pt,itemsep=0pt,partopsep=0pt,parsep=0pt]
\item[instance:] An MTL formula $\varphi$.
\item[question:] Does there exist a finite timed word $w$ s.t.\
  $w,0\models\varphi$?
\item[lower bound:] \citet{mtl}, from reachability in insertion channel
systems~(\probref{pb-lcs}).
\item[upper bound:] Length function theorems for Higman's
 Lemma.%
\item[comment:] Satisfiability for infinite timed words is
  undecidable~\citep{ouaknine06}.
\end{description}
\end{problem}
Note that recent work on data automata over linearly ordered domains
has uncovered some strong ties with timed automata
\citep{FHL10,figueira12}.%

\subsubsection{Unordered Data Nets\nopunct} are a generalisation of
Petri nets where each token carries some datum from some infinite data
domain, which can be tested for equality against the data of other
tokens when firing the transitions of the system.  This is a
restriction over the more general \emph{data nets}~\citep{datanets},
where the data domain is deemed to be densely linearly ordered;
see~\probref{pb-enc}.  Like general data nets, unordered data nets
allow so-called ``whole-place'' operations, endowing them with
generalised reset capabilities; the exact complexity of coverability
for \emph{unordered Petri data nets}, where such operations are not
available, is unknown at the moment (\Tow-hardness is shown
by \citet{datanets}).

\begin{problem}[UDN]{Unordered Data Nets
Coverability}\label{pb-udn}\hfill%
\begin{description}[topsep=0pt,itemsep=0pt,partopsep=0pt,parsep=0pt]
\item[instance:] An unordered data net $\?N$ and a place $p$ of the net.
\item[question:] Is there a reachable marking with a least one token
  in $p$?
\item[lower bound:] \citet{rosavelardo14}, by a direct reduction from the
  halting problem in $F_{\omega^{\omega}}$-bounded Minsky
  machines.
\item[upper bound:] \citet{rosavelardo14}, by proving a length
  function theorem for $\mathbb{M}_{\textrm{fin}}(\+N^d)$ the set of
  finite multisets of vectors of naturals, ordered by multiset
  embedding.
\end{description}
\end{problem}
This is the only instance in this list of a \hack-complete problem
that does not explicitly rely on Higman's Lemma.

\subsection{{\boldmath $\F{\omega^{\omega^\omega}}$}-Complete Problems}
\label{sec-Fooo}

Currently, all the known $\F{\omega^{\omega^\omega}}$-complete
problems are related to extensions of Petri nets called
\defstyle{enriched nets}, which include timed-arc Petri nets
\citep{abdulla01}, ordered data nets and ordered Petri data nets
\citep{datanets}, and constrained multiset rewriting systems
\citep{abdulla06}.  Reductions between the different classes of
enriched nets can be found in \citep{abdulla2011,rr-lsv-10-23}.
Defining these families of nets here would take too much space; see
the referenced papers for details.  These models share one
characteristic: they define well-structured transition systems over
finite sequences of vectors of natural numbers, which have an
$\omega^{\omega^{\omega^\omega}}$ maximal order type.

\begin{problem}[ENC]{Enriched Net Coverability}\label{pb-enc}\hfill%
\begin{description}[topsep=0pt,itemsep=0pt,partopsep=0pt,parsep=0pt]
\item[instance:] An enriched net $\?N$ and a place $p$ of the net.
\item[question:] Is there a reachable marking with a least one token
  in $p$?
\item[lower bound:] \citet{HSS-lics2012}, by a direct reduction from the
  halting problem in $F_{\omega^{\omega^\omega}}$-bounded Minsky
  machines.
\item[upper bound:] \citet{HSS-lics2012}, using length function theorems
  for finite sequences of vectors of natural numbers and Higman's
  Lemma~\citep{SS-icalp2011}.
\end{description}
\end{problem}

\subsection{{\boldmath $\F\ezero$}-Complete Problems\nopunct}
Problems complete for $\F\ezero$ are untractable in a distinctive
sense: although there exists a Turing machine able to answer on every
instance, the termination proof of this Turing machine implies a
totality proof for a function akin to $F_{\ezero}$: the latter is
however known to be independent of Peano
Arithmetic~\citep[e.g.][]{fairtlough98}.

\subsubsection{Priority Channel Systems\nopunct} 
are defined similarly to lossy channel
systems (c.f.\ \autoref{ssub-lcs}), but the message alphabet $M$ is
linearly ordered to represent message \emph{priorities}.  Rather than
message losses, the unreliable behaviours are now \emph{message
  supersedings}, i.e.\ applications of the rewrite rules $ab\to b$ for
$b\geq a$ in $M$ on the channel contents.

\begin{problem}[PCS]{PCS Reachability}\label{pb-pcs}\hfill%
\begin{description}[topsep=0pt,itemsep=0pt,partopsep=0pt,parsep=0pt]
\item[instance:] A PCS and a configuration $(q,x)$ in $Q\times
  M^\ast$.
\item[question:] Is $(q,x)$ reachable from the initial configuration?
\item[lower bound:] \citet{HaaseSS13}, by a direct reduction from the
  halting problem in $F_{\ezero}$-bounded Turing machines.
\item[upper bound:] \citet{HaaseSS13}, using length function theorems
for nested applications of Higman's Lemma~\citep{SS-icalp2011}.
\end{description}
\end{problem}

\subsubsection{Nested Counter Systems \& Hierarchical Multi-Attributed
Data Logics.} Finite data words may in general carry several data
values from some infinite data domain in addition to a label from some
finite alphabet.  The satisfiability of data logics over such data
words becomes undecidable, even for the restricted logics discussed
in \autoref{sub-datal}.  However, decidability can be recovered when
the logic is restricted by a hierarchical discipline on its attributes
$\{0,\dots,k\}$, where attribute $i$ can only be tested for equality
on two positions of the word if all the attributes $0,\dots,i-1$ are
also simultaneously tested.
\begin{problem}[LTL$^\downarrow_{[k]}$]{Satisfiability of Freeze LTL with
  Ordered Attributes}\label{pb-odl}\hfill
\begin{description}[topsep=0pt,itemsep=0pt,partopsep=0pt,parsep=0pt]
\item[instance:] A formula $\varphi$ of Freeze LTL with one register
and $k$ hierarchical attributes.
\item[question:] Does there exist a $k$-attributed finite data word
$w$ s.t.\ $w\models\varphi$?
\item[lower bound:] \citet{decker15} by a direct reduction from
$F_{\ezero}$-bounded Minsky machine.
\item[upper bound:] \citet{decker15} by a reduction to reachability
in priority channel systems (\probref{pb-pcs}).
\item[comment:] The complexity bounds are established through the
coverability problem for a class of nested counter
systems~\citep{decker15}.
\end{description}
\end{problem}

\section{Concluding Remarks}
The classical complexity hierarchies are limited to elementary
problems, in spite of a growing number of natural problems that
require much larger computational resources.  We propose in this paper
a definition for fast-growing complexity classes
$(\F\alpha)_{\alpha}$, which provide accurate enough notations for
many non elementary decision problems: they allow to express some
important landmarks, like $\Tow=\F3$, $\Ack=\F\omega$, or
$\hack=\F{\omega^\omega}$, and are close enough to the extended
Grzegorczyck hierarchy so that complexity statements in terms of
$\FGH\alpha$ can often be refined as statements in terms of
$\F\alpha$.  These definitions allow to employ the familiar vocabulary
of complexity theory, reductions and completeness, instead of the more
ad-hoc notions used this far.  This will hopefully foster the reuse of
``canonical problems'' in establishing high complexity results, rather
than proofs from first principles, i.e.\ resource-bounded Turing
machines.

A pattern emerges in the list of known $\F\alpha$-complete problems,
allowing to answer a natural concern already expressed by
\citet{clote}: ``what do complexity classes for such rapidly growing
functions really mean?''  Indeed, beyond the intellectual satisfaction
one might find in establishing a problem as complete for some class,
being $\F\alpha$-complete brings additional information on the problem
itself: that it relies in some essential way on the ordinal
$\omega^\alpha$ being well-ordered.  All the problems in
\autoref{sec-bestiary} match this pattern, as their decision
algorithms rely on well-quasi-orders with maximal order type
$\omega^\alpha$ for their termination, for which length function
theorems then allow to derive $\F\alpha$ bounds.

Finally, we remark that there are currently no known natural problem
of ``intermediate'' complexity, for instance between \elem\ and \Ack,
or between the latter and \hack.  Parametric versions of
\probref{pb-lcn} or \probref{pb-lcs} seem like good candidates for
this, but so far the best lower and upper bounds do not quite
match~\citep[see e.g.][]{fossacs/KarandikarS13}.  It would be
interesting to find examples that exercise the intermediate levels of
the $(\F\alpha)_\alpha$ hierarchy.

\appendix
\section{Subrecursive Hierarchies}
\label{app-subrec}
This section presents the technical background and proofs missing from
the main text.

\subsection{Hardy Functions} Let $h{:}\,\+N\to\+N$ be a strictly
increasing function.  The \emph{Hardy functions}
$(h^\alpha)_{\alpha<\ezero}$ controlled by $h$ are defined
inductively by
\begin{align}\label{eq-hardy-rel}
  h^0(x)&\eqdef x\;,&
  h^{\alpha+1}(x)&\eqdef h^\alpha\left(h(x)\right),&
  h^\lambda(x)&\eqdef h^{\lambda(x)}(x)\;.
\end{align}

A definition related to fundamental sequences is that of the
\emph{predecessor} at $x$ of an ordinal greater than $0$, which
recursively considers the $x$th element in the fundamental sequence of
limit ordinals, until a successor ordinal is found:
\begin{align}\label{eq-pred-def}
  P_x(\alpha+1)&\eqdef \alpha\;,&
  P_x(\lambda)&\eqdef P_x(\lambda(x))\;.
\end{align}
Using predecessors, the definition of the Hardy functions becomes even
simpler: for $\alpha>0$,
\begin{align}
  \label{eq-hardy-pred}
  h^{\alpha}(x)&\eqdef h^{P_x(\alpha)}\left(h(x)\right)\;.
\end{align}
Observe for instance that $h^k(x)$ for some finite $k$ is the $k$th
iterate of $h$.  This intuition carries over: $h^\alpha$ is a
transfinite iteration of the function $h$, using diagonalisation to
handle limit ordinals.  The usual Hardy functions $H^\alpha$ are then
obtained by fixing $H(x)\eqdef\suc(x)=x+1$.

The Hardy functions enjoy a number of properties; see
\citep{fairtlough92,cichon98}.  They are \emph{expansive}, and
\emph{monotonic} with respect to both
the base function $h$ and to the argument $x$: for all $g\leq h$,
$x\leq y$, and $\alpha$,
\begin{align}\label{eq-exp-mono}
 x&\leq h^\alpha(x)\;,&
 g^\alpha(x)&\leq h^\alpha(x)\;,&
 h^\alpha(x)&\leq h^\alpha(y)\;.
\end{align}
As often with subrecursive functions, what the Hardy functions lack is
monotonicity in the ordinal index, see \autoref{app-mon}.

By transfinite induction on ordinals, we also find several identities:
\begin{align}
  \label{eq-F-h}
  h^{\omega^\alpha\cdot c}&=F^c_{h,\alpha}\;,\\
  \label{eq-comp}
  h^{\alpha+\beta}&=h^\alpha\circ h^\beta\;%
  .
\end{align}
Note that~\eqref{eq-F-h} entails the expansiveness and monotonicity of
the fast-growing functions.

Equation~\eqref{eq-comp} is extremely
valuable: it shows that%
---up to some extent---the \emph{composition}
of Hardy functions can be internalised in
the ordinal index.
Here we run however into a limitation of considering ``set-theoretic''
ordinal indices: informally, \eqref{eq-comp} is %
implicitly restricted to ordinals $\alpha+\beta$ %
``in \ref{CNF}''.  Formally, it requires %
$\alpha+\beta=\alpha\oplus\beta$%
, where ``$\oplus$'' %
denotes the natural sum operation.  For %
instance, it fails in %
$H^{1}(H^\omega(x))=H^{1}(H^x(x+1))=2x+2>2x+1=H^\omega(x)$, although
$1+\omega=\omega$%
.  We will discuss this point further in \autoref{sub-comp}.

\begin{remark}
  Thanks to~\eqref{eq-F-h}, the definitions of the
  $(\FGH{<\alpha})_\alpha$ and $(\F\alpha)_\alpha$ classes can be
  restated purely in terms of the Hardy functions.  Indeed,
  \begin{align*}
    \FGH{<\alpha}&=\bigcup_{\beta<\alpha,c<\omega}\!\CC{FDTime}\big(F^c_\beta(n)\big)\\
    &=\bigcup_{\beta<\alpha,c<\omega}\!\CC{FDTime}\big(H^{\omega^\beta\cdot
      c}(n)\big)\\
    &=\bigcup_{\gamma<\omega^\alpha}\!\CC{FDTime}\big(H^{\gamma}(n)\big)\;,\\
    \F\alpha&=\bigcup_{p\in\FGH{<\alpha}}\CC{DTime}\left(H^{\omega^\alpha}(p(n))\right)\;.
  \end{align*}
\end{remark}

\subsection{Monotonicity}\label{app-mon}
One of the issues of most subrecursive hierarchies of functions is
that they are not monotone in the ordinal index: $\beta<\alpha$ does
not necessarily imply $H^{\beta}\leq H^\alpha$; for instance,
$H^{x+2}(x)=2x+2>2x+1=H^\omega(x)$.  What is true however is that they
are \emph{eventually} monotone: if $\beta<\alpha$, then there exists
$n_0$ such that, for all $x\geq n_0$, $H^\beta(x)\leq H^\alpha(x)$.
This result (and others) can be proven using a \emph{pointwise
  ordering}: for all $x$, define the $\prec_x$ relation as the
transitive closure of
\begin{align}
  \alpha&\prec_x\alpha+1\;, &\lambda(x)\prec_x\lambda\;.
\end{align}
The relation ``$\beta\prec_x\alpha$'' is also noted
``$\beta\in\alpha[x]$'' in \citep[pp.~158--163]{sw12}, where the
results of this section are proven.

The $\prec_x$ relations form a strict hierarchy of refinements of the
ordinal ordering $<$:
\begin{equation}
  {\prec_0}\subsetneq{\prec_1}\subsetneq \cdots\subsetneq
  {\prec_x}\subsetneq\cdots\subsetneq {<}\;.
\end{equation}
We are going to use two main properties of the pointwise ordering:
\begin{align}\label{eq-pointw-lim}
  x&<y&\text{ implies }&&\lambda(x)&\prec_y\lambda(y)\;,\\
  \label{eq-pointwise-mon}
  \beta&\prec_x\alpha&\text{ implies }&&H^\beta(x)&\leq H^\alpha(x)\;.
\end{align}

For a first application, define the \emph{norm} of an ordinal term as
the maximal coefficient that appears in its normal form: if
$\alpha=\omega^{\alpha_1}\cdot c_1+\cdots+\omega^{\alpha_m}\cdot c_m$
with $\alpha_1>\cdots>\alpha_m$ and $c_1,\dots,c_m>0$, then
$N\alpha\eqdef\max\{c_1,\dots,c_m,N\alpha_1,\dots,N\alpha_m\}$.  Then
$\beta<\alpha$ implies
$\beta\prec_{N\beta}\alpha$~\citep[p.~158]{sw12}.  Together with
\eqref{eq-pointwise-mon}, this entails that, for all $x\geq N\beta$,
$H^\beta(x)\leq H^\alpha(x)$.

\subsection{Ackermann Functions}\label{app-ack}
We prove in this section some basic properties of the Ackermann
hierarchy of functions $(A_\alpha)_\alpha$ defined in
\autoref{sub-ack}.  Its definition is less uniform than the
fast-growing and Hardy functions, leading to slightly more involved
proofs.
\begin{lemma}\label{lem-ack-zero}
  For all $\alpha>0$, $A_\alpha(0)\leq 1$.
\end{lemma}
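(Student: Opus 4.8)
The plan is to prove $A_\alpha(0)\leq 1$ for all $\alpha>0$ by transfinite induction on $\alpha$, following the three cases of the defining recursion \eqref{eq-ack-def}. The base case $\alpha=1$ is immediate: $A_1(0)=2\cdot 0=0\leq 1$. For a successor ordinal $\alpha+1$ with $\alpha\geq 1$, we have $A_{\alpha+1}(0)=A_\alpha^0(1)=1\leq 1$, since iterating $A_\alpha$ zero times on the argument $1$ just returns $1$; note this case does not even use the induction hypothesis. For a limit ordinal $\lambda$, the definition gives $A_\lambda(0)=A_{\lambda(0)}(0)$, and since $\lambda(0)$ is an ordinal with $0<\lambda(0)<\lambda$ (by the properties of the standard fundamental sequence assignment, which ensures $\lambda(x)>0$), the induction hypothesis applied to $\lambda(0)$ yields $A_{\lambda(0)}(0)\leq 1$, hence $A_\lambda(0)\leq 1$.

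The only subtlety worth flagging is in the limit case: one must confirm that $\lambda(0)$ is a legitimate ordinal strictly below $\lambda$ and strictly above $0$, so that the transfinite induction is well-founded and the hypothesis "for all $\alpha>0$ below $\lambda$" actually applies to $\lambda(0)$. Both facts follow from the remarks after \eqref{eq-fund-def}, where the standard assignment is noted to satisfy $0<\lambda(x)<\lambda(y)$ for $x<y$; in particular $0<\lambda(0)$. There is no real obstacle here — the proof is a short structural induction — but it is precisely the kind of place where one should not silently assume $\lambda(0)>0$, since the Ackermann recursion at successor indices resets the argument to $1$ rather than preserving it, and the whole point of the lemma is that this reset forces the value down to $1$ at argument $0$.

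I expect the successor case to be the "content" of the lemma (it is where the constant $1$ comes from, via the empty iteration $A_\alpha^0(1)=1$), while the limit case is pure bookkeeping and the base case is a trivial computation. Since no earlier result in the excerpt is needed beyond the definitions \eqref{eq-ack-def} and the basic properties of fundamental sequences, the proof can be written in just a few lines.
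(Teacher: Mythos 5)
Your proof is correct and follows exactly the same route as the paper's: the trivial computation $A_1(0)=0$ for the base case, the empty iteration $A_{\alpha+1}(0)=A_\alpha^0(1)=1$ for successors, and the induction hypothesis applied to $\lambda(0)$ for limits. The extra care you take in verifying $0<\lambda(0)<\lambda$ is a legitimate (if minor) point that the paper leaves implicit.
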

\begin{proof}
  By transfinite induction over $\alpha$.  For $\alpha=1$,
  $A_1(0)=0\leq 1$.  For a successor ordinal $\alpha+1$,
  $A_{\alpha+1}(0)=1$.  For a limit ordinal $\lambda$,
  $A_\lambda(0)=A_{\lambda(0)}(0)\leq 1$ by ind.\ hyp.
\end{proof}

As usual with subrecursive hierarchies, the main issue with the
Ackermann functions is to prove various monotonicity properties in the
argument and in the index.
\begin{lemma}\label{lem-ack}\renewcommand{\theenumi}{\roman{enumi}}
  For all $\alpha,\beta>0$ and $x,y$:
  \begin{enumerate}\renewcommand{\labelenumi}{(\roman{enumi})}
  \item\label{lem-ack-exp}if $\alpha>1$, $A_\alpha$ is strictly
    expansive: $A_\alpha(x)>x$,
  \item\label{lem-ack-mon}$A_\alpha$ is strictly monotone in its
    argument: if $y>x$, $A_\alpha(y)>A_\alpha(x)$,
  \item\label{lem-ack-px}$(A_\alpha)_\alpha$ is pointwise monotone in its
    index: if $\alpha\succ_x\beta$, $A_{\alpha}(x)\geq A_{\beta}(x)$.
  \end{enumerate}
\end{lemma}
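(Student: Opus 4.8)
The plan is to prove (i), (ii) and (iii) simultaneously, by a single transfinite induction on $\alpha$: they are genuinely interdependent, because---as will become clear---monotonicity of $A_\lambda$ in its argument at a limit index $\lambda$ must appeal to the pointwise monotonicity in the index (iii) at the smaller ordinals $\lambda(x)$. At stage $\alpha$ I would prove (i), then (ii), then (iii), each time using the same three statements at all strictly smaller ordinals, together with \autoref{lem-ack-zero} ($A_\beta(0)\le1$) and the pointwise-ordering apparatus of \autoref{app-mon}, in particular the implication $x<y\Rightarrow\lambda(x)\prec_y\lambda(y)$ from \eqref{eq-pointw-lim}. It is cleanest to prove the slightly sharper bound $A_{\gamma+1}(x)\ge x+1$ at successor indices, and to bear in mind that $A_1(x)=2x$ is only \emph{weakly} expansive, with $A_1(z)\ge z+1$ holding precisely for $z\ge1$.

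For (i), at a successor index $\gamma+1$ (so $\gamma\ge1$; this already handles $\alpha=2$, with $A_2(x)=2^x$) I would run an inner induction on $x$: $A_{\gamma+1}(0)=1$, and $A_{\gamma+1}(x{+}1)=A_\gamma\!\big(A_{\gamma+1}(x)\big)\ge A_{\gamma+1}(x)+1\ge x+2$, where the first inequality needs only $A_\gamma(z)\ge z+1$ for $z\ge1$---immediate for $\gamma=1$, a consequence of (i) at $\gamma$ otherwise. At a limit index $\lambda$ one first observes $\lambda(x)\ge2$ whenever $x\ge1$, so $A_\lambda(x)=A_{\lambda(x)}(x)>x$ by (i) at $\lambda(x)$; the value $A_\lambda(0)$ is then controlled by \autoref{lem-ack-zero}.

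For (ii), the case $\alpha=1$ is immediate ($2x<2y$). At a successor index $\gamma+1$ I would write $A_{\gamma+1}(y)=A_\gamma^{\,y-x}\!\big(A_{\gamma+1}(x)\big)$ with $A_{\gamma+1}(x)\ge1$ by (i), and iterate, $y-x$ times, the strict expansiveness of $A_\gamma$ on inputs $\ge1$, each iterate remaining $\ge1$. At a limit index $\lambda$ with $x<y$ I would chain $A_\lambda(x)=A_{\lambda(x)}(x)<A_{\lambda(x)}(y)\le A_{\lambda(y)}(y)=A_\lambda(y)$, the strict step being (ii) at $\lambda(x)$ and the weak step being (iii) at $\lambda(y)$ applied to $\lambda(x)\prec_y\lambda(y)$. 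This last chain is precisely where the simultaneous-induction design is essential.

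Finally, for (iii), transitivity of $\prec_x$ reduces matters to single reduction steps along a chain from $\alpha$ down to a positive $\beta$, all intermediate ordinals staying positive. The step $\lambda(x)\prec_x\lambda$ is vacuous, since $A_\lambda(x)=A_{\lambda(x)}(x)$ by definition, and a longer tail below a limit $\lambda$ is absorbed by (iii) at $\lambda(x)$; the only real content is the successor step $A_\gamma(x)\le A_{\gamma+1}(x)$ for $\gamma\ge1$. For $\gamma=1$ this is $2x\le2^x$; for $\gamma\ge2$ it is $1=A_{\gamma+1}(0)\ge A_\gamma(0)$ at $x=0$ by \autoref{lem-ack-zero}, and $A_{\gamma+1}(x)=A_\gamma\!\big(A_{\gamma+1}(x{-}1)\big)\ge A_\gamma(x)$ for $x\ge1$, using $A_{\gamma+1}(x{-}1)\ge x$ from (i) and the monotonicity of $A_\gamma$ from (ii). I expect the main difficulty to be organisational rather than computational: keeping the simultaneous induction watertight, so that (ii) at a limit may legitimately invoke (iii) strictly below it, and disposing of the non-uniformities at small arguments, where $A_1$ is not strictly expansive and the ``reset to~$1$'' of the successor clause must be handled through \autoref{lem-ack-zero}.
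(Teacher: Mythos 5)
Your proposal follows essentially the same route as the paper's proof: a simultaneous transfinite induction on $\alpha$ establishing (i)--(iii) together, using \autoref{lem-ack-zero} for the reset-to-$1$ base cases, the pointwise-ordering fact $x<y\Rightarrow\lambda(x)\prec_y\lambda(y)$ for the limit case of (ii), and transitivity of $\prec_x$ to reduce (iii) to single successor and fundamental-sequence steps. If anything, you are more explicit than the paper about the non-uniformities at small arguments (the merely weak expansiveness of $A_1$), so the argument is sound and matches the intended one.
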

\begin{proof}
  Let us first consider the case $\alpha=1$: $A_1$ is
  strictly monotone, proving \eqref{lem-ack-mon}.  Regarding
  \eqref{lem-ack-exp} for $\alpha=2$, $A_2(x)=2^x>x$ for all $x$.

  We prove now the three statements by simultaneous transfinite
  induction over $\alpha$.  Assume they hold for all $\beta<\alpha$
  (and thus for all $\beta\prec_x\alpha$ for all $x$).

  For~\eqref{lem-ack-exp},
  \begin{itemize}
  \item if $\alpha$ is a successor ordinal $\beta+1$,
    then $A_{\beta+1}(x)\geq A_\beta(x)>x$ by
    ind.\ hyp.~\eqref{lem-ack-px} and~\eqref{lem-ack-exp} on
    $\beta\prec_x\alpha$. 
  \item If $\alpha$ is a limit ordinal $\lambda$, then
    $A_\lambda(x)=A_{\lambda(x)}(x)>x$ by
    ind.\ hyp.~\eqref{lem-ack-exp} on $\lambda(x)\prec_x\alpha$.
  \end{itemize}

  For~\eqref{lem-ack-mon}, it suffices to prove the result for $y=x+1$.
  \begin{itemize}
  \item If $\alpha$ is a successor ordinal $\beta+1$, then
    $A_\alpha(x+1)=A_\beta\big(A_\alpha(x)\big)> A_\alpha(x)$ by
    ind.\ hyp.~\eqref{lem-ack-exp} on $\beta\prec_x\alpha$.
  \item If $\alpha$ is a limit ordinal $\lambda$, then
    $A_\lambda(x+1)=A_{\lambda(x+1)}(x+1)\geq A_{\lambda(x)}(x+1)$ by
    ind.\ hyp.~\eqref{lem-ack-px} on
    $\lambda(x)\prec_{x+1}\lambda(x+1)$ (recall
    \autoref{eq-pointw-lim}), hence the result by ind.\
    hyp.~\eqref{lem-ack-mon} on $\lambda_x\prec_x\alpha$.
  \end{itemize}

  For \eqref{lem-ack-px}, it suffices to prove the result for
  $\alpha=\beta+1$ and $\beta=\alpha(x)$ and rely on transitivity.
  \begin{itemize}
  \item If $\alpha=\beta+1$, then we show \eqref{lem-ack-px} by
    induction over $x$: the base case $x=0$ stems from
    $A_\alpha(0)=A^0_\beta(1)=1\geq A_\beta(0)$ by
    \autoref{lem-ack-zero}; the induction step $x+1$ stems from
    $A_\alpha(x+1)=A_\beta\big(A_\alpha(x)\big)\geq A_\beta(x+1)$
    using the ind.\ hyp.~on $x$ and~\eqref{lem-ack-mon} on
    $\beta\prec_{A_\alpha(x)}\alpha$.
  \item If $\beta=\alpha(x)$, then $A_\alpha(x)=A_\beta(x)$ by
    definition.\qedhere
  \end{itemize}
\end{proof}

Our main interest in the Ackermann functions is their relation with
the fast-growing ones:
\begin{lemma}\label{prop-ack}
  For all $\alpha>0$ and all $x$, $A_\alpha(x)\leq F_\alpha(x)\leq
  A_\alpha(6x+5)$.
\end{lemma}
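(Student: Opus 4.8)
The plan is to prove both inequalities by transfinite induction on $\alpha$, the inequality $A_\alpha\le F_\alpha$ being routine and $F_\alpha(x)\le A_\alpha(6x+5)$ requiring an auxiliary ``absorption'' claim.

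First I would treat $A_\alpha\le F_\alpha$. For $\alpha=1$ this is $2x\le 2x+1$. At a successor index $\alpha+1$ with $\alpha\ge 1$, the induction hypothesis $A_\alpha\le F_\alpha$ together with the monotonicity of $F_\alpha$ gives $A_\alpha^x(1)\le F_\alpha^x(1)$ by a side induction on the iteration count, and then $F_\alpha^x(1)\le F_\alpha^x(F_\alpha(x))=F_{\alpha+1}(x)$ since $F_\alpha(x)\ge 1$, using~\eqref{eq-f-succ}. At a limit index $\lambda$, $A_\lambda(x)=A_{\lambda(x)}(x)\le F_{\lambda(x)}(x)=F_\lambda(x)$ directly from the induction hypothesis (recall $0<\lambda(x)$).

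For the second inequality $F_\alpha(x)\le A_\alpha(6x+5)$, the base case $\alpha=1$ is $2x+1\le 12x+10$, and the limit case is handled by the pointwise monotonicity of $(A_\alpha)_\alpha$ in the ordinal index (\autoref{lem-ack}): the induction hypothesis gives $F_\lambda(x)=F_{\lambda(x)}(x)\le A_{\lambda(x)}(6x+5)$, and since $x<6x+5$ one has $\lambda(x)\prec_{6x+5}\lambda(6x+5)$ by~\eqref{eq-pointw-lim}, whence $A_{\lambda(x)}(6x+5)\le A_{\lambda(6x+5)}(6x+5)=A_\lambda(6x+5)$. The successor case rests on the auxiliary claim that $6A_\alpha(z)+5\le A_\alpha(6z+5)$ for all $\alpha\ge 1$ and all $z$, which I would prove by its own transfinite induction: the base $\alpha=1$ is $12z+5\le 12z+10$; its successor step is reduced, via a side induction on $k$ using the claim at level $\alpha$ and the monotonicity of $A_\alpha$, to the iterated form $6A_\alpha^k(1)+5\le A_\alpha^k(11)\le A_\alpha^{k+4}(1)$ (the last step from $A_\alpha^4(1)\ge 2^4\ge 11$, which follows from $A_\alpha(y)\ge A_1(y)=2y$, an instance of the pointwise index monotonicity), so that $6A_{\alpha+1}(z)+5=6A_\alpha^z(1)+5\le A_\alpha^{z+4}(1)\le A_\alpha^{6z+5}(1)=A_{\alpha+1}(6z+5)$; its limit step is again~\eqref{eq-pointw-lim} plus pointwise index monotonicity.

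With the absorption claim available, the successor case of $F_\alpha(x)\le A_\alpha(6x+5)$ is a counting argument: writing $F_{\alpha+1}(x)=F_\alpha^{x+1}(x)$ via~\eqref{eq-f-succ}, I would show by a side induction on $k$ that $F_\alpha^k(x)\le A_\alpha^{x+5k}(1)$, the base case $k=0$ using $x\le A_\alpha^x(1)$ and the step combining monotonicity of $F_\alpha$, the outer hypothesis $F_\alpha(z)\le A_\alpha(6z+5)$, and the iterated absorption bound $6A_\alpha^j(1)+5\le A_\alpha^{j+4}(1)$ to get $F_\alpha^{k+1}(x)\le A_\alpha\big(6A_\alpha^{x+5k}(1)+5\big)\le A_\alpha^{x+5(k+1)}(1)$; taking $k=x+1$ then yields $F_{\alpha+1}(x)\le A_\alpha^{6x+5}(1)=A_{\alpha+1}(6x+5)$. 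The hard part will be the absorption claim and checking that the bookkeeping closes: the constants $6$ and $5$ are exactly tuned so that one extra application of $A_\alpha$ undoes each application of $F_\alpha$, four more absorb the affine overhead $6(\cdot)+5$, and the resulting five iterations of $A_\alpha$ per iteration of $F_\alpha$ over $x+1$ iterations, plus $x$ more to climb from $1$ to $x$, sum to $6x+5$. The remaining ingredients — strict monotonicity and expansiveness of the $A_\alpha$ and of their iterates, and $A_\alpha(0)\le 1$ — are supplied by \autoref{lem-ack-zero} and \autoref{lem-ack}.
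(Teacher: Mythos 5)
Your proof is correct, and its overall architecture is the same as the paper's: transfinite induction on $\alpha$, the same base case, the same limit case via \eqref{eq-pointw-lim} and pointwise index monotonicity (\autoref{lem-ack}), and the same quantitative idea in the successor case --- five applications of $A_\alpha$ absorb one application of $F_\alpha$ together with the affine overhead $z\mapsto 6z+5$, summing to $6x+5$ iterations. The one genuine difference is how the absorption is justified. You prove a standalone claim $6A_\alpha(z)+5\le A_\alpha(6z+5)$ by its own transfinite induction (with a nested induction on the iteration count) and then run the side induction ``from below'' as $F_\alpha^k(x)\le A_\alpha^{x+5k}(1)$. The paper gets the needed inequality in one line: for $y>0$, $A_\alpha^4(y)\ge A_1^4(y)=16y\ge 6y+5$ by pointwise index monotonicity, hence $A_\alpha\bigl(A_\alpha^4(y)\bigr)\ge A_\alpha(6y+5)\ge F_\alpha(y)$ by the outer induction hypothesis; it then shows $A_\alpha^{5j}(x)\ge F_\alpha^j(x)$ by induction on $j$ and composes with $A_{\alpha+1}(6x+5)=A_\alpha^{5(x+1)}\bigl(A_\alpha^x(1)\bigr)\ge A_\alpha^{5(x+1)}(x)$. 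Your absorption lemma is more than is needed: its iterated form $6A_\alpha^j(1)+5\le A_\alpha^{j+4}(1)$ already follows from $6y+5\le 16y=A_1^4(y)\le A_\alpha^4(y)$ for $y\ge 1$, with no extra transfinite induction. So your route is sound but carries an avoidable auxiliary lemma; a minor compensating benefit is that your arrangement treats $x=0$ uniformly (the paper disposes of it separately up front) and that you spell out the easy inequality $A_\alpha\le F_\alpha$, which the paper leaves to the reader.
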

\begin{proof}
  We only prove the second inequality, as the first one can be deduced
  from the various monotonicity properties of $F_\alpha$ and
  $A_\alpha$.  The case $x=0$ is settled for all $\alpha>0$ by
  checking that $F_\alpha(0)=1\leq 10=A_1(5)\leq A_\alpha(5)$, since
  $1\preceq_x\alpha$ for all $\alpha>0$ and we can therefore apply
  \autoref{lem-ack}.\eqref{lem-ack-px}.  Assume now $x>0$; we prove
  the statement by transfinite induction over $\alpha>0$.
  \begin{itemize}
  \item For the base case $\alpha=1$, $F_1(x)=2x+1\leq
    12x+10=A_1(6x+5)$.
  \item For the successor case $\alpha+1$,
    $A_{\alpha+1}(6x+5)=A_{\alpha}^{5(x+1)}\big(A_\alpha^x(1)\big)\geq
    A_\alpha^{5(x+1)}(x)$ by \autoref{lem-ack}.

    We show by induction over $j$ that $A_\alpha^{5j}(x)\geq
    F^j_\alpha(x)$.  This holds for the base case $j=0$, and for the
    induction step, $A_\alpha^5\big(A_\alpha^{5j}(x)\big)\geq
    A_\alpha^5\big(F^j_\alpha(x)\big)$ by ind.\ hyp.\ on $j$ and
    \autoref{lem-ack}.\eqref{lem-ack-mon}.  Furthermore, for all
    $y>0$, $A_\alpha\big(A_\alpha^4(y)\big)\geq
    A_\alpha\big(A_1^4(y)\big)=A_\alpha(16y)\geq A_\alpha(6y+5)\geq
    F_\alpha(y)$ by ind.\ hyp.\ on $\alpha$, which shows that
    $A_\alpha^5\big(F^j_\alpha(x)\big)\geq F_\alpha^{j+1}(x)$ when
    choosing $y=F^j_\alpha(x)>0$.  Then $A_\alpha^{5(x+1)}(x)\geq
    F_\alpha^{x+1}(x)=F_{\alpha+1}(x)$, thus completing the proof in
    the successor case.
  \item For the limit case $\lambda$,
    $A_\lambda(6x+5)=A_{\lambda(6x+5)}(6x+5)\geq
    A_{\lambda(x)}(6x+5)\geq F_{\lambda(x)}(x)=F_\lambda(x)$, using
    successively \autoref{lem-ack}.\eqref{lem-ack-px} on
    $\lambda(x)\prec_{6x+5}\lambda(6x+5)$ and the ind.\ hyp.\ on
    $\lambda(x)<\lambda$.\qedhere
  \end{itemize}
\end{proof}

\subsection{Relativised Functions}\label{app-rel}We prove here the
missing lemma from the proof of \autoref{th-rel}:

\begin{lemma}\label{lem-rel}\renewcommand{\theenumi}{\roman{enumi}}
  Let $h{:}\,\+N\to\+N$ be a function, $\alpha,\beta$ be two ordinals,
  and $x_0$ be a natural number.  If for all $x\geq x_0$, $h(x)\leq
  F_\beta(x)$, then there exists an ordinal $\gamma$ such that
  \begin{enumerate}\renewcommand{\labelenumi}{(\roman{enumi})}
  \item\label{lem-rel-1} for all $x\geq x_0$, $F_{h,\alpha}(x)\leq
    F_{\beta+\alpha}(F_{\gamma}(x))$, and
  \item\label{lem-rel-2} $\gamma<\beta+\alpha$ whenever $\beta+\alpha>0$.
  \end{enumerate}
\end{lemma}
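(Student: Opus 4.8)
The plan is to prove the Lemma by transfinite induction on $\alpha$, after first peeling off the hypothesis. Since $F_{h,\alpha}$ is only meaningful for increasing $h$ (\eqref{eq-rfg-def}), we may assume $h$ strictly increasing, so $h(x)\ge x$; recalling that $F_{h,\delta}=h^{\omega^\delta}$ (\eqref{eq-F-h}) is expansive and monotone in its argument (\eqref{eq-exp-mono}), every iterate of $h$ or of any $F_{h,\delta}$ started at some $x\ge x_0$ stays $\ge x_0$. A short transfinite induction on $\alpha$ --- base case being the hypothesis $h(x)\le F_\beta(x)$ for $x\ge x_0$, and the successor and limit cases obtained by iterating, respectively diagonalising, while keeping arguments $\ge x_0$ --- then yields $F_{h,\alpha}(x)\le F_{F_\beta,\alpha}(x)$ for all $x\ge x_0$. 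It therefore suffices to bound $F_{F_\beta,\alpha}$, i.e.\ to prove the Lemma in the case $h=F_\beta$, $x_0=0$; the witness $\gamma$ obtained there works verbatim in the general case.

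For this, I would run a second transfinite induction on $\alpha$, carrying the strengthened invariant that $\gamma$ can be chosen a \emph{natural number} --- automatically $<\beta+\alpha$ once $\beta+\alpha\ge\omega$, and in fact $\gamma\le1$ as long as $\alpha$ is $0$ or a successor. The base case $\alpha=0$ is immediate with $\gamma=0$, since $F_\beta\le F_\beta\circ F_0$. For a successor $\alpha=\delta+1$ one unfolds $F_{F_\beta,\delta+1}(x)=F_{F_\beta,\delta}^{\,x+1}(x)$, applies the induction hypothesis ($F_{F_\beta,\delta}\le F_{\beta+\delta}\circ F_{\gamma_\delta}$ with $\gamma_\delta$ finite), iterates this bound $x+1$ times (licit by the expansivity remark), and then \emph{internalises} the resulting finite composition into the ordinal index: with $\nu\eqdef\omega^{\beta+\delta}+\omega^{\gamma_\delta}$, repeated use of \eqref{eq-comp} gives $(F_{\beta+\delta}\circ F_{\gamma_\delta})^{x+1}(x)=H^{\nu\cdot(x+1)}(x)$, and since $\nu\cdot(x+1)\le\omega^{\beta+\delta}\cdot\big(F_1(x)+1\big)=(\omega^{\beta+\delta+1})\big(F_1(x)\big)$, monotonicity in the argument together with pointwise monotonicity in the index (\autoref{app-mon}) give $F_{F_\beta,\delta+1}(x)\le F_{\beta+\delta+1}\big(F_1(x)\big)$; so $\gamma=1$ works (and $\gamma=0$ in the degenerate case $\beta=\delta=0$, where $F_0$ is the successor function). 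The limit case $\alpha=\lambda$ is the delicate one: one unfolds $F_{F_\beta,\lambda}(x)=F_{F_\beta,\lambda(x)}(x)$ and invokes the induction hypothesis at $\lambda(x)<\lambda$; for all but finitely many $x$, the bound transfers from index $\lambda(x)$ to $\lambda$ using $(\beta+\lambda)(y)=\beta+\lambda(y)$ (when no summand is discarded) together with $\lambda(x)\prec_{F_1(x)}\lambda(F_1(x))$ (from \eqref{eq-pointw-lim}, since $F_1(x)>x$) and \eqref{eq-pointwise-mon}; the finitely many remaining small values of $x$ produce only finitely many finite numbers, which get swallowed by enlarging $\gamma$ to a large enough natural number --- admissible precisely because $\beta+\lambda\ge\omega$ leaves room for any finite $\gamma$.

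It then remains to check $\gamma<\beta+\alpha$ whenever $\beta+\alpha>0$: for $\beta+\alpha\ge\omega$ this is automatic for the finite $\gamma$ just produced, while for $\beta+\alpha$ finite only the base and successor cases occur, giving $\gamma\le1<\beta+\alpha$ (the borderline $\beta+\alpha=1$ being covered by the $\gamma=0$ sub-cases). Combining with the first induction proves the Lemma. I expect the main obstacle to be exactly the limit case: the diagonalisation makes the ordinal index $\lambda(x)$ --- and hence a priori the induction hypothesis's witness --- depend on $x$, so one must ensure these witnesses remain uniformly finite; this is entangled with the ``discarding'' phenomenon in ordinal addition ($H^{1}\big(H^{\omega}(x)\big)>H^{\omega}(x)$ although $1+\omega=\omega$, cf.\ \autoref{sub-comp}), which is precisely what forces the auxiliary offset $F_\gamma$ in the statement and is absorbed by composing with it, in the spirit of $F_{F_1,\omega}(x)\le F_\omega(x+1)$.
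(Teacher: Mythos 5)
Your first reduction, to the case $h=F_\beta$ via $F_{h,\alpha}(x)\leq F_{F_\beta,\alpha}(x)$ for $x\geq x_0$, is exactly the paper's step \eqref{eq-rel-h}, and the overall plan (transfinite induction on $\alpha$ to bound $F_{F_\beta,\alpha}$ by $F_{\beta+\alpha}\circ F_\gamma$) is also the paper's. The divergence is in the choice of $\gamma$ and, crucially, in the induction invariant, and this is where there is a genuine gap. Your invariant is ``$\gamma$ can be chosen to be a natural number.'' At a limit $\lambda$ you unfold $F_{F_\beta,\lambda}(x)=F_{F_\beta,\lambda(x)}(x)$ and invoke the induction hypothesis at $\lambda(x)$ for every $x$; this yields, for each $x$, \emph{some} finite witness $\gamma_{\lambda(x)}$, but nothing in the invariant bounds $\sup_x\gamma_{\lambda(x)}$, and without such a uniform bound no single finite $\gamma_\lambda$ with $F_{\gamma_\lambda}(x)\geq F_{\gamma_{\lambda(x)}}(x)$ exists. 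You flag this yourself (``one must ensure these witnesses remain uniformly finite''), but the proposal contains no argument for it; the remark about ``the finitely many remaining small values of $x$'' only handles the exceptional small arguments, not the uniformity over the infinitely many large ones. To make your route work you would need to strengthen the invariant to an absolute bound (e.g.\ $\gamma\leq 1$ uniformly in $\alpha$ and $\beta$) and then verify, at every limit stage, that the discarded tail of $\beta$ --- which re-emerges inside $\beta+\lambda(x)$ whenever $\lambda(x)$ has a smaller leading exponent than $\lambda$ --- can always be absorbed by that one fixed application of $F_1$; that absorption is essentially the entire difficulty of the lemma and is nowhere established.

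The paper sidesteps the uniformity problem by a different choice: $\gamma$ is taken to be precisely the tail $\omega^{\beta_{i+1}}+\cdots+\omega^{\beta_n}$ of $\beta$ that is discarded in forming $\beta+\alpha$ --- an ordinal below $\omega^{\alpha_1}$, in general infinite (e.g.\ $\gamma=\omega$ for $\beta=\omega^2+\omega$ and $\alpha=\omega^2$) --- and the inequality actually proven is the two-sided $F_{\beta'+\alpha}\circ F_\gamma\geq F_\gamma\circ F_{F_\beta,\alpha}$, with $F_\gamma$ wrapping the right-hand side as well. That wrapping is what lets the successor case iterate the hypothesis and lets the limit case use $\lambda(x)\prec_{F_\gamma(x)}\lambda(F_\gamma(x))$ with one fixed $\gamma$ throughout the unfolding, so no uniformity over a family of witnesses is ever needed. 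A secondary, repairable slip in your successor case: $(F_{\beta+\delta}\circ F_{\gamma_\delta})^{x+1}=H^{\nu\cdot(x+1)}$ is not an equality, since $\omega^{\gamma_\delta}+\omega^{\beta+\delta}$ discards a summand when $\gamma_\delta<\beta+\delta$ and \eqref{eq-comp} then fails; only the inequality $(H^{\nu})^{x+1}\leq H^{\nu\oplus\cdots\oplus\nu}$ from \autoref{prop-comp} is available, which happens to be the direction you need.
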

\begin{proof}
  Let us first fix some notations: write
  $\alpha=\omega^{\alpha_1}+\cdots+\omega^{\alpha_m}$ with
  $\alpha_1\geq\cdots\geq\alpha_m$ and
  $\beta=\omega^{\beta_1}+\cdots+\omega^{\beta_n}$ with
  $\beta_1\geq\cdots\geq\beta_n$, and let $i$ be the maximal index in
  $\{1,\dots,n\}$ such that $\beta_i\geq\alpha_1$, or set $i=0$ if
  this does not occur.  Define
  $\beta'\eqdef\omega^{\beta_1}+\cdots+\omega^{\beta_i}$ and
  $\gamma\eqdef\omega^{\beta_{i+1}}+\cdots+\omega^{\beta_n}$ (thus
  $\beta'=0$ if $i=0$); then $\beta=\beta'+\gamma$ and
  $\beta+\alpha=\beta'+\alpha$.  Note that this implies
  $\gamma<\omega^{\alpha_1}\leq\alpha\leq\beta+\alpha$, unless
  $\alpha=0$ and then $\gamma=0$, thus fulfilling \eqref{lem-rel-2}.

  We first prove by transfinite induction over $\alpha$ that
  \begin{equation}\label{eq-rel-proof}
    F_{\beta'+\alpha}\circ F_{\gamma}\geq F_\gamma\circ F_{F_\beta,\alpha}\;.
  \end{equation}
  \begin{proof}[Proof of \eqref{eq-rel-proof}]
    For the base case $\alpha=0$, then $\gamma=0$ and $\beta'=\beta$,
    and indeed
    \begin{align*}
      F_\beta(F_0(x))&=F_\beta(x+1)\\
      &\geq F_\beta(x)+1&&\text{by monotonicity of $F_\beta$}\\
      &=F_0(F_\beta(x))\\&=F_0(F_{F_\beta,0}(x))\;.
    \end{align*}
    For the successor case $\alpha+1$ and assuming it holds for
    $\alpha$, let us first show by induction over $j$ that, for all $y$,
    \begin{equation}\label{eq-rel-j}
      F^j_{\beta'+\alpha}(F_\gamma(y))\geq
      F_\gamma(F^j_{F_\beta,\alpha}(y))\;.
    \end{equation}
    This immediately holds for the base case $j=0$, and for the
    induction step,
    \begin{align*}
      F_{\beta'+\alpha}\big(F^j_{\beta'+\alpha}(F_\gamma(y))\big)&\geq
      F_{\beta'+\alpha}\big(F_\gamma(F^j_{F_\beta,\alpha}(y))\big)&&\text{by
        ind.\ hyp.~\eqref{eq-rel-j} on $j$}\\
      &\geq
      F_\gamma\big(F_{F_\beta,\alpha}(F^j_{F_\beta,\alpha}(y))\big)&&\text{by
        ind.\ hyp.~\eqref{eq-rel-proof} on $\alpha<\alpha+1$.}
    \end{align*}
    This yields the desired inequality:
    \begin{align*}
      F_{\beta'+\alpha+1}(F_\gamma(x))&=F_{\beta'+\alpha}^{F_\gamma(x)+1}(F_\gamma(x))\\
      &\geq F_{\beta'+\alpha}^{x+1}(F_\gamma(x))\\
      &\geq F_\gamma(F_{F_\beta,\alpha}^{x+1}(x))\\&=F_\gamma(F_{F_\beta,\alpha+1}(x))
    \end{align*}
    using \eqref{eq-rel-j} with $j=x+1$ and $y=x$.
    
    For the limit case $\lambda$,
    \begin{align*}
      F_{\beta'+\lambda}(F_\gamma(x))&=F_{\beta'+\lambda(F_\gamma(x))}(F_\gamma(x))\\
      &\leq F_{\beta'+\lambda(x)}(F_\gamma(x))&&\text{since
        $\lambda(x)\prec_{F_\gamma(x)}\lambda(F_\gamma(x))$}\\
      &\leq F_\gamma(F_{F_\beta,\lambda(x)}(x))&&\text{by ind.\
        hyp.~\eqref{eq-rel-proof} on $\lambda(x)<\lambda$}\\
      &= F_\gamma(F_{F_\beta,\lambda}(x))\;.&&\qedhere
    \end{align*}
  \end{proof}

  Returning to the main proof, a simple induction over $\alpha$ shows
  that, for all $x\geq x_0$,
  \begin{equation}\label{eq-rel-h}
    F_{h,\alpha}(x)\leq F_{F_\beta,\alpha}(x)\;.
  \end{equation}
  We then conclude for \eqref{lem-rel-1} that, for all $x\geq x_0$,
  \begin{align*}
    F_{h,\alpha}(x)&\leq F_{F_\beta,\alpha}(x)&&\text{by \eqref{eq-rel-h}}\\
    &\leq F_\gamma(F_{F_\beta,\alpha}(x))&&\text{by expansivity of
      $F_\gamma$}\\
    &\leq F_{\beta'+\alpha}(F_\gamma(x))&&\text{by \eqref{eq-rel-proof}.}\qedhere
  \end{align*}
\end{proof}

\subsection{Non-standard Assignment of Fundamental Sequences}
\label{ap--fund}We show here the omitted details of the proof of
\autoref{th-fund}:
\begin{lemma}\label{lem-fund}
  Let $s{:}\+N\to\+N$ be a monotone function and $\alpha$ be an
  ordinal.
  \begin{itemize}
    \item If $s$ is strictly expansive, then $F_{\alpha,s}\leq
      F_{s,\alpha}\circ s$, and
    \item otherwise $F_{\alpha,s}\leq F_{\alpha}$.
  \end{itemize}
\end{lemma}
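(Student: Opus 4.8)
The plan is to prove both inequalities by transfinite induction on $\alpha$, after a preliminary step fixing the basic properties of the functions $F_{\alpha,s}$. First I would check, by a simultaneous transfinite induction entirely parallel to the treatment of the Ackermann functions in \autoref{app-ack}, that for any monotone $s$ each $F_{\alpha,s}$ is expansive and strictly monotone in its argument, and that the family is pointwise monotone in the index, i.e.\ $\beta\prec_x\gamma$ implies $F_{\beta,s}(x)\le F_{\gamma,s}(x)$; likewise for the relativised functions $F_{h,\alpha}$ of \eqref{eq-rfg-def}, so that in particular $F_{s,0}=s\le F_{s,\alpha}$ for every $\alpha>0$ (since $0\prec_x\alpha$). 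Two more easy remarks: $F_{\alpha,s}$ is monotone in the parameter $s$ (if $s\le s'$ pointwise then $\lambda(x)_s\le\lambda(x)_{s'}$ as ordinals, hence $F_{\alpha,s}\le F_{\alpha,s'}$), and the choice $s(x)=x+1$ gives back $F_{\alpha,s}=F_\alpha$, since \eqref{eq-fund-s} then collapses to \eqref{eq-fund-def} and the iteration count $s(x)$ to $\omega(x)$.

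For the expansive case ($s(x)\ge x+1$ for all $x$) I would prove the slightly stronger $F_{\alpha,s}(x)\le F_{s,\alpha}(s(x))$ for all $x$, by transfinite induction on $\alpha$. The base case is $F_{0,s}(x)=x+1\le s(s(x))=F_{s,0}(s(x))$, using expansivity of $s$ twice. At a successor $\alpha+1$, an inner induction on $k$ establishes $F_{\alpha,s}^{k}(x)\le F_{s,\alpha}^{k}(s(x))$: the step feeds the inner hypothesis through the monotone $F_{\alpha,s}$, then applies the outer hypothesis $F_{\alpha,s}(z)\le F_{s,\alpha}(s(z))$, and finally uses $s\le F_{s,\alpha}$ from the first step to replace the stray $s$ by an $F_{s,\alpha}$; instantiating at $k=s(x)\le\omega(s(x))$ then yields $F_{\alpha+1,s}(x)=F_{\alpha,s}^{s(x)}(x)\le F_{s,\alpha}^{s(x)}(s(x))\le F_{s,\alpha}^{\omega(s(x))}(s(x))=F_{s,\alpha+1}(s(x))$. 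At a limit $\lambda$, the outer hypothesis applied to $\lambda(x)_s<\lambda$ gives $F_{\lambda,s}(x)=F_{\lambda(x)_s,s}(x)\le F_{s,\lambda(x)_s}(s(x))$, and then pointwise monotonicity in the index lets one pass to $F_{s,\lambda(s(x))}(s(x))=F_{s,\lambda}(s(x))$, \emph{provided} $\lambda(x)_s\prec_{s(x)}\lambda(s(x))$. This proviso is the heart of the matter.

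The remaining case, where $s$ does not dominate the successor, is the dual and is easy: from $s\le(x\mapsto x+1)$ and monotonicity in the parameter one obtains directly $F_{\alpha,s}\le F_{\alpha,\,x\mapsto x+1}=F_\alpha$; alternatively, one reruns the induction above with every comparison reversed, now using $s(x)\le\omega(x)$ at successor steps and the dual proviso that $\lambda(x)_s$ is $\prec_x$-dominated by $\lambda(x)$ at limit steps.

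The main obstacle is thus the limit step, i.e.\ the comparison of the non-standard fundamental sequences with the standard ones in the pointwise ordering: $\lambda(x)_s\prec_{s(x)}\lambda(s(x))$ when $s(x)>x$, and its dual when $s(x)\le x+1$. I would prove these by induction on the Cantor normal form of $\lambda$, using two facts about $\prec_x$ in the spirit of \autoref{app-mon}: (i) $\delta\prec_x\delta+\eta$ for every $x$ whenever $\eta>0$ and $\delta+\eta$ is in \ref{CNF}, and (ii) $\prec_x$ is preserved by $\mu\mapsto\gamma+\omega^{\mu}$. When $\lambda=\gamma+\omega^{\beta+1}$ one computes $\lambda(s(x))=\lambda(x)_s+\omega^{\beta}$ and invokes (i); when $\lambda=\gamma+\omega^{\beta}$ with $\beta$ a limit one applies the inductive hypothesis to $\beta$ and then (ii). Keeping the relevant norms small enough for \eqref{eq-pointwise-mon}-type comparisons to apply is the usual subtlety; once past it, the rest is bookkeeping with the monotonicity and expansivity facts collected in the first step.
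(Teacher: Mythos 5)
Your overall strategy (transfinite induction on $\alpha$, an inner induction on the number of iterations at successor stages, and the pointwise-ordering comparison $\lambda(x)_s\prec_{s(x)}\lambda(s(x))$ at limit stages) is the same as the paper's, and your treatment of the limit step actually supplies details the paper only asserts. However, there is a genuine gap in your successor case: the induction hypothesis you carry, namely $F_{\alpha,s}(z)\le F_{s,\alpha}(s(z))$, is too weak to close the inner induction. Follow your own recipe for the step $k\to k+1$: you get
\begin{equation*}
  F_{\alpha,s}^{k+1}(x)\;\le\;F_{\alpha,s}\bigl(F_{s,\alpha}^{k}(s(x))\bigr)\;\le\;F_{s,\alpha}\Bigl(s\bigl(F_{s,\alpha}^{k}(s(x))\bigr)\Bigr)\;\le\;F_{s,\alpha}^{\,k+2}(s(x))\,,
\end{equation*}
because absorbing the stray inner $s$ costs an \emph{extra} application of $F_{s,\alpha}$ (you cannot discard it, since $s$ is expansive). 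So each step of the inner induction inflates the exponent by two, yielding only $F_{\alpha,s}^{k}(x)\le F_{s,\alpha}^{2k-1}(s(x))$ or thereabouts; at $k=s(x)$ this exceeds the $s(x)+1$ iterations available in $F_{s,\alpha+1}(s(x))$, and the successor case does not go through. The defect is structural: your hypothesis $F_{\alpha,s}\le F_{s,\alpha}\circ s$ turns an argument of the form ``$s$ of something'' into a bare $F_{s,\alpha}$-value, so it cannot be chained.

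The repair is to strengthen the induction hypothesis to $s\circ F_{\alpha,s}\le F_{s,\alpha}\circ s$ (which implies the lemma since $s$ is expansive). This invariant maps ``$s$ of something'' to ``$s$ of something'' and therefore composes with itself: the inner induction then establishes $s\bigl(F_{\alpha,s}^{j}(x)\bigr)\le F_{s,\alpha}^{j}(s(x))$ cleanly, with no extra applications, and the successor case closes. This is exactly what the paper's proof does. Two further small remarks: your claim that ``$F_{\alpha,s}\le F_{\alpha,s'}$ whenever $s\le s'$, because $\lambda(x)_s\le\lambda(x)_{s'}$ as ordinals'' is not justified as stated---these hierarchies are not monotone in the ordinal index, only pointwise so---but your alternative of rerunning the induction with reversed inequalities for the second bullet is sound and is what the paper does. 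Your proof of $\lambda(x)_s\prec_{s(x)}\lambda(s(x))$ by induction on the Cantor normal form, using closure of $\prec_x$ under $\mu\mapsto\gamma+\omega^{\mu}$, is correct and goes beyond what the paper writes down.
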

\begin{proof}
  For the first point, let us show that
  \begin{equation}
    s(F_{\alpha,s}(x))\leq F_{s,\alpha}(s(x))
  \end{equation}
  for all monotone $s$ with $s(x)>x$, all $\alpha$ and all $x$, which
  entails the lemma since $s$ is expansive.  We
  proceed by transfinite induction over $\alpha$.  For the base case,
  $F_{s,0}(s(x))=s(s(x))\geq s(x+1)=s(F_{0,s}(x))$ since $s$ is
  monotone and strictly expansive.  For the successor case,
  $F_{s,\alpha+1}(s(x))=F_{s,\alpha}^{s(x)+1}(s(x))\geq
  s(F_{\alpha,s}^{s(x)}(x))=s(F_{\alpha+1,s}(x))$, where the middle
  inequality stems from the fact that $F_{s,\alpha}^j(s(x))\geq
  s(F_{\alpha,s}^j(x))$, as can be seen by induction on $j$ using the
  induction hypothesis on $\alpha<\alpha+1$.  For the
  limit case, observe that $\lambda(x)_s\prec_{s(x)}\lambda(s(x))$,
  thus $F_{s,\lambda}(s(x))=F_{s,\lambda(s(x))}(s(x))\geq
  F_{s,\lambda(x)_s}(s(x))\geq
  s(F_{\lambda(x)_s,s}(x))=s(F_{\lambda,s}(x))$ using the induction
  hypothesis on $\lambda(x)_s<\lambda$.

  The second point is straightforward by induction over $\alpha$.
\end{proof}

\begin{lemma}\label{lem-id-fund}
  For all $\alpha$, $F_0\circ F_{\alpha}\leq F_{\alpha,\mathrm{id}}\circ F_0$.
\end{lemma}
\begin{proof}
  By induction over $\alpha$.  For the zero case,
  $F_0(F_0(x))=x+2=F_{0,\mathrm{id}}(F_0(x))$.  For the successor
  case, we can check that $F^j_{\alpha,\mathrm{id}}(x+1)\geq
  F^j_{\alpha}(x)+1$ for all $j$ using the induction hypothesis on
  $\alpha$, thus
  $F_{\alpha,\mathrm{id}}(x+1)=F^{x+1}_{\alpha,\mathrm{id}}(x+1)\geq
  F^{x+1}_\alpha(x)+1=F_{\alpha+1}(x)+1$.  For the limit case, note
  that $\lambda(x)\prec_{x+1}\lambda(x+1)$ thus
  $F_{\lambda,\mathrm{id}}(x+1)=F_{\lambda_{x+1},\mathrm{id}}(x+1)\geq
  F_{\lambda(x+1)}(x)+1\geq F_{\lambda(x)}(x)+1=F_\lambda(x)+1$.
\end{proof}

\subsection{Composing Hardy Functions}\label{sub-comp}The purpose of
this section is to provide the technical details for the proof of
\autoref{th-redc}.

The \emph{natural sum} $\alpha\oplus\beta$ of two ordinals written as
$\alpha=\omega^{\alpha_1}+\cdots+\omega^{\alpha_m}$ with
$\alpha_1\geq\cdots\geq\alpha_m$ and
$\beta=\omega^{\beta_1}+\cdots\omega^{\beta_n}$ with
$\beta_1\geq\cdots\geq\beta_n$ can be defined as the ordinal
$\omega^{\gamma_1}+\cdots+\omega^{\gamma_{m+n}}$ where the
$\gamma_i$'s range over $\{\alpha_j\mid 1\leq j\leq
m\}\cup\{\beta_k\mid 1\leq k\leq n\}$ in non-increasing order.  For
instance, $\omega^2+\omega^{\omega}=\omega^\omega$ but
$\omega^2\oplus\omega^\omega=\omega^\omega+\omega^2$.

\begin{lemma}\label{prop-comp}
  For all ordinals $\alpha$ and $\beta$, and all functions $h$,
  $$h^\alpha\circ h^\beta\leq h^{\alpha\oplus\beta}\;.$$
\end{lemma}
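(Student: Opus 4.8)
The plan is to reduce the statement to a single ``commutation'' inequality between two relativized fast‑growing functions, exploiting that composition of Hardy functions is already internalized in the ordinal index whenever the sum is in Cantor normal form, i.e.\ \eqref{eq-comp}: $h^{\mu+\nu}=h^\mu\circ h^\nu$ as soon as $\mu+\nu=\mu\oplus\nu$, together with \eqref{eq-F-h}: $h^{\omega^\gamma\cdot c}=F^c_{h,\gamma}$. Write $\alpha$ and $\beta$ in \ref{CNF}, let $\gamma_1\geq\cdots\geq\gamma_{m+n}$ be the multiset of all exponents occurring in $\alpha$ and in $\beta$ listed in non‑increasing order, so that $\alpha\oplus\beta=\omega^{\gamma_1}+\cdots+\omega^{\gamma_{m+n}}$ is in \ref{CNF}. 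Applying \eqref{eq-comp} and \eqref{eq-F-h} repeatedly (every prefix of a \ref{CNF} expression being a CNF‑respecting sum) I would get $h^{\alpha\oplus\beta}=F_{h,\gamma_1}\circ\cdots\circ F_{h,\gamma_{m+n}}$, while the same identities applied to $\alpha$ and to $\beta$ separately give $h^\alpha\circ h^\beta=F_{h,\alpha_1}\circ\cdots\circ F_{h,\alpha_m}\circ F_{h,\beta_1}\circ\cdots\circ F_{h,\beta_n}$.

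Thus $h^\alpha\circ h^\beta$ and $h^{\alpha\oplus\beta}$ are compositions of the \emph{same} building blocks $F_{h,\gamma_i}$, the first in the order ``exponents of $\alpha$ (decreasing), then exponents of $\beta$ (decreasing)'', the second in globally non‑increasing order of the exponents. Since each $F_{h,\gamma}$ is monotone in its argument (by \eqref{eq-exp-mono} and \eqref{eq-F-h}), it then suffices to show that re‑ordering a composition of such blocks into non‑increasing exponent order can only increase its value; by monotonicity of the outer blocks this would follow from the single adjacent‑transposition step: for all ordinals $\sigma\leq\tau$ and all $x$,
\begin{equation}\label{eq-prop-comp-comm}
  F_{h,\sigma}\big(F_{h,\tau}(x)\big)\;\leq\;F_{h,\tau}\big(F_{h,\sigma}(x)\big)\;,
\end{equation}
applied inside fixed monotone contexts along a bubble sort of the block order.

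It remains to prove \eqref{eq-prop-comp-comm}, which is the heart of the matter: informally it says that feeding the very fast‑growing $F_{h,\tau}$ the slightly larger input $F_{h,\sigma}(x)\geq x$ already overtakes a subsequent single application of the slower $F_{h,\sigma}$ on the left. I would argue by transfinite induction on $\tau$. The case $\sigma=\tau$ is trivial. For $\tau=\tau'+1$ with $\sigma\leq\tau'$ one unfolds $F_{h,\tau}(y)=F^{y+1}_{h,\tau'}(y)$ on both sides and compares iteration counts, using that $F_{h,\tau'}$ is expansive and monotone (so the larger starting point and the shifted count balance out) and that $F_{h,\sigma}\leq F_{h,\tau'}$ on all large enough arguments (from $\omega^\sigma<\omega^{\tau'}$, hence $\omega^\sigma\prec_{N(\omega^\sigma)}\omega^{\tau'}$, and \eqref{eq-pointwise-mon}). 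For $\tau$ a limit one uses $F_{h,\tau}(x)=F_{h,\tau(x)}(x)$, applies the induction hypothesis at the smaller index $\tau(x)$, and then bounds $F_{h,\tau(x)}(z)\leq F_{h,\tau(z)}(z)=F_{h,\tau}(z)$ for $z=F_{h,\sigma}(x)\geq x$ via $\tau(x)\prec_z\tau(z)$ (from \eqref{eq-pointw-lim}) and the pointwise monotonicity of $\gamma\mapsto\omega^\gamma$ and of Hardy functions.

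The main obstacle, and the step deserving real care, is exactly \eqref{eq-prop-comp-comm}: because the relativized fast‑growing functions are \emph{not} monotone in their ordinal index, the ``$F_{h,\sigma}\leq F_{h,\tau'}$'' estimates only hold beyond thresholds, so the finitely many small arguments below those thresholds have to be treated separately, using e.g.\ that $F_{h,\eta}(0)=h(0)$ independently of $\eta$ together with the strict monotonicity of $F_{h,\eta}$ in its argument. An alternative organisation dispenses with the explicit sorting and runs a direct transfinite induction on $\beta$ — the base and successor cases being immediate from \eqref{eq-comp} — but its limit case, after reduction via \eqref{eq-comp} to $\beta=\omega^\delta$ and a further induction on $\delta$, still requires the same commutation estimates and the same small‑argument bookkeeping.
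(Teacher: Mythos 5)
Your overall architecture is sound and genuinely different from the paper's. You decompose both sides into the same multiset of blocks $F_{h,\gamma_i}=h^{\omega^{\gamma_i}}$ via \eqref{eq-F-h} and \eqref{eq-comp}, and reduce the lemma to the single two-block commutation $F_{h,\sigma}\circ F_{h,\tau}\leq F_{h,\tau}\circ F_{h,\sigma}$ for $\sigma\leq\tau$, applied along a bubble sort inside monotone contexts; this reduction is correct, and the commutation itself is exactly the lemma specialized to $\alpha=\omega^\sigma$, $\beta=\omega^\tau$. The paper instead runs one transfinite induction on $\beta$ with $\alpha$ held arbitrary, splitting $\alpha\oplus\beta$ at the last occurrence of $\beta$'s smallest exponent into $\gamma+\gamma'$; its only limit-case subtlety is $\gamma(x)\prec_{h^{\gamma'}(x)}\gamma\big(h^{\gamma'}(x)\big)$, which is immediate from \eqref{eq-pointw-lim} because $h^{\gamma'}(x)\geq x$. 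The structural advantage of that organization is that the induction hypothesis (for $\beta(x)<\beta$ and \emph{any} $\alpha$) is always applicable.

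That advantage is precisely what your proof of the commutation inequality lacks, and this is where the genuine gap lies. In your limit case you unfold $F_{h,\lambda}(x)=F_{h,\lambda(x)}(x)$ and ``apply the induction hypothesis at the smaller index $\lambda(x)$''; but the statement you are inducting on is asserted only for pairs with $\sigma\leq\tau$, so it applies to $(\sigma,\lambda(x))$ only when $\sigma\leq\lambda(x)$. For the finitely many $x$ with $\lambda(x)<\sigma<\lambda$ (e.g.\ $\sigma=x+5$, $\lambda=\omega$) the hypothesis is unavailable, and it is false with the roles of $\sigma$ and $\lambda(x)$ swapped, so you cannot reverse it. You flag this threshold problem yourself, but the patch you offer ($F_{h,\eta}(0)=h(0)$ plus strict monotonicity) only disposes of $x=0$; for $0<x$ with $\lambda(x)<\sigma$ no argument is given, and none of the cited facts closes it ($\lambda(x)<\sigma$ does not yield $\lambda(x)\prec_x\sigma$, so pointwise monotonicity does not apply). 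By contrast, your successor case needs no eventual-domination estimates at all: $F_{h,\sigma}\big(F^{x+1}_{h,\tau'}(x)\big)\leq F^{x+1}_{h,\tau'}\big(F_{h,\sigma}(x)\big)\leq F^{y+1}_{h,\tau'}(y)$ with $y=F_{h,\sigma}(x)\geq x$ follows by commuting $F_{h,\sigma}$ past one copy of $F_{h,\tau'}$ at a time using the induction hypothesis, monotonicity and expansiveness. To repair the limit case you must strengthen what is being proved by induction; the simplest fix is to establish $h^\alpha\circ h^\beta\leq h^{\alpha\oplus\beta}$ for arbitrary $\alpha$ directly by transfinite induction on $\beta$, which is the paper's route and renders the sorting step unnecessary.
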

\begin{proof}
  Write $\alpha=\omega^{\alpha_1}+\cdots+\omega^{\alpha_m}$ with
  $\alpha_1\geq\cdots\geq\alpha_m$ and
  $\beta=\omega^{\beta_1}+\cdots+\omega^{\beta_n}$ with
  $\beta_1\geq\cdots\geq\beta_n$, then
  $\alpha\oplus\beta=\omega^{\gamma_1}+\cdots+\omega^{\gamma_{m+n}}$.
  We prove the lemma by transfinite induction over $\beta$: it holds
  immediately for the base case since $\alpha\oplus 0=\alpha$ and for
  the successor case since
  $\alpha\oplus(\beta+1)=(\alpha\oplus\beta)+1$.  For the limit case,
  let $i$ be the last index of $\beta_n$ among the $\gamma_j$ in the
  \ref{CNF} of $\alpha\oplus\beta$.  If $i=m+n$, then
  $\alpha\oplus(\beta(x))=(\alpha\oplus\beta)(x)$ and the statement
  holds.  Otherwise, define
  $\gamma\eqdef\omega^{\gamma_1}+\cdots+\omega^{\gamma_i}$ and
  $\gamma'\eqdef\omega^{\gamma_{i+1}}+\cdots+\omega^{\gamma_{m+n}}$.
  For all $x$,
  \begin{align*}
    h^{\alpha\oplus\beta}&=h^{\gamma}(h^{\gamma'}(x))&&\text{by \eqref{eq-comp}}\\
    &= h^{\gamma(h^{\gamma'}(x))}(h^{\gamma'}(x))&&\text{since
      $\gamma$ is a limit ordinal}\\
    &\geq h^{\gamma(x)}(h^{\gamma'}(x))&&\text{since $\gamma(x)\prec_{[h^{\gamma'}\!(x)]}\gamma(h^{\gamma'}(x))$}\\
    &= h^{\alpha\oplus(\beta(x))}(x)&&\text{by \eqref{eq-comp}}\\
    &\geq h^\alpha(h^{\beta(x)}(x))&&\text{by ind.\ hyp.\ on $\beta(x)<\beta$}\\
    &= h^\alpha(h^\beta(x))\;.&&\qedhere
  \end{align*}
\end{proof}

\begin{corollary}\label{cor-comp}
  Let $\alpha$ be an ordinal and $f$ a function in $\FGH{<\alpha}$.
  Then there exists $g$ in $\FGH{<\alpha}$ such that $f\circ
  F_\alpha\leq F_\alpha\circ g$.
\end{corollary}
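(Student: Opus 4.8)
The plan is to absorb the composition $f\circ F_\alpha$ into the ordinal index, exploiting the Hardy-function identities recorded in the appendix, after first replacing $f$ by a finite iterate of a \emph{lower} fast-growing function. Since $f\in\FGH{<\alpha}$, it belongs to $\FGH\beta$ for some $\beta<\alpha$; by cumulativity of the extended Grzegorczyk hierarchy (and $\FGH0=\FGH1$) together with \citep[Theorem~2.10]{lob70}, there are an ordinal $\gamma$ with $\beta\le\gamma<\alpha$ and a constant $c$ such that $f\le F_\gamma^c$ pointwise (for $\alpha\ge 2$ one may always take $\gamma=\max(\beta,1)<\alpha$, and the routine cases $\alpha\le 1$, where $\FGH{<\alpha}$ consists only of eventually linear functions, are checked directly). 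It then suffices to find $g\in\FGH{<\alpha}$ with $F_\gamma^c\circ F_\alpha\le F_\alpha\circ g$, and I expect $g\eqdef F_\gamma^c$ itself to do the job.

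To verify this, let $H$ denote the successor function. As $F_{H,0}=F_0$, the relativized functions based on $H$ agree with $(F_\delta)_\delta$, so \eqref{eq-F-h} gives $F_\delta=H^{\omega^\delta}$ for every $\delta$ and in particular $F_\gamma^c=H^{\omega^\gamma\cdot c}$. Substituting $F_\alpha(x)$ for $x$ in $f\le F_\gamma^c$ and then applying \autoref{prop-comp},
\begin{align*}
  f\circ F_\alpha\;\le\;F_\gamma^c\circ F_\alpha
  &=H^{\omega^\gamma\cdot c}\circ H^{\omega^\alpha}
  \;\le\;H^{(\omega^\gamma\cdot c)\oplus\omega^\alpha}\\
  &=H^{\omega^\alpha+\omega^\gamma\cdot c}
  =H^{\omega^\alpha}\circ H^{\omega^\gamma\cdot c}
  =F_\alpha\circ F_\gamma^c\;.
\end{align*}
The equality $(\omega^\gamma\cdot c)\oplus\omega^\alpha=\omega^\alpha+\omega^\gamma\cdot c$ holds---and displays the exponent already in \ref{CNF}---precisely because $\gamma<\alpha$, and it is this that makes the ensuing application of \eqref{eq-comp} legitimate; the two flanking equalities are instances of \eqref{eq-F-h}. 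Finally $F_\gamma^c$ lies in $\FGH\gamma\subseteq\FGH{<\alpha}$, since $F_\gamma\in\FGH\gamma$ and $\FGH\gamma$ is closed under composition, so $g\eqdef F_\gamma^c$ witnesses the statement.

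The step I expect to require the most care is the appeal to \eqref{eq-comp}: as the remark following that identity warns, $h^{\delta+\epsilon}=h^\delta\circ h^\epsilon$ holds only when $\delta+\epsilon$ coincides with the natural sum $\delta\oplus\epsilon$, i.e.\ when no summand of $\delta$ is absorbed. Here $\omega^\alpha$ dominates $\omega^\gamma\cdot c$ exactly because $\gamma<\alpha$, which is the only place the hypothesis that $f$ sits ``below'' $F_\alpha$ is genuinely used; everything else is bookkeeping with the identities for Hardy functions and with the closure of $\FGH{<\alpha}$ under composition.
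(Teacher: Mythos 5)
Your proposal is correct and follows essentially the same route as the paper's own proof: bound $f$ by $F_\gamma^c=H^{\omega^\gamma\cdot c}$ via \citep[Theorem~2.10]{lob70} and \eqref{eq-F-h}, take $g\eqdef F_\gamma^c$, and commute it past $F_\alpha=H^{\omega^\alpha}$ using \autoref{prop-comp} together with the fact that $(\omega^\gamma\cdot c)\oplus\omega^\alpha=\omega^\alpha+\omega^\gamma\cdot c$ is in \ref{CNF}, so that \eqref{eq-comp} applies. The only deviation is your replacement of $\beta$ by $\gamma=\max(\beta,1)$, which is harmless bookkeeping.
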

\begin{proof}
  As $f$ is in some $\FGH\beta$ for $\beta<\alpha$, $f\leq F_\beta^c$
  for some finite $c$ by \citep[\theoremautorefname~2.10]{lob70}, thus
  $f\leq H^{\omega^\beta\cdot c}$ by \eqref{eq-F-h}, and we let
  $g\eqdef H^{\omega^\beta\cdot c}$, which indeed belongs to
  $\FGH\beta\subseteq\FGH{<\alpha}$.  Still by \eqref{eq-F-h},
  $F_\alpha=H^{\omega^\alpha}$.  Observe that $\omega^\beta\cdot
  c<\omega^\alpha$, hence $(\omega^\beta\cdot
  c)\oplus\omega^\alpha=\omega^\alpha+\omega^\beta\cdot c$.  By
  \eqref{eq-comp}, $H^{\omega^\alpha+\omega^\beta\cdot
    c}=H^{\omega^\alpha}\circ H^{\omega^\beta\cdot c}$. Applying
  \eqref{eq-F-h} and \autoref{prop-comp}, we obtain that $f\circ
  F_\alpha\leq g\circ F_\alpha\leq F_\alpha\circ g$.
\end{proof}

\subsection{Computing Hardy Functions}\label{sub-honest}
We explain in this section how to compute Hardy functions, thus
providing the background material for the proof of
\autoref{th-constr}.  This type of results is pretty standard---see
for instance \citep{wainer70}, \citep{fairtlough98}, or
\citep[pp.~159--160]{sw12}---, but the particular way we employ is
closer in spirit to the viewpoint employed in
\citep{HSS-lics2012,fossacs/KarandikarS13,HaaseSS13}.

\subsubsection{Hardy Computations.}
Using \eqref{eq-hardy-pred}, let us call a \emph{Hardy computation}
for $h^\alpha(n)$ a sequence of pairs
$\tup{\alpha_0,n_0},\tup{\alpha_1,n_1},\dots,\tup{\alpha_\ell,n_\ell}$
where $\alpha_0=\alpha$, $n_0=n$, $\alpha_\ell=0$, and at each step
$0<i\leq\ell$, $\alpha_i=P_{n_{i-1}}(\alpha_{i-1})$ and
$n_i=h(n_{i-1})$.  An invariant of this computation is that
$h^{\alpha_i}(n_i)=h^\alpha(n)$ at all steps $0\leq i\leq\ell$, hence
$n_\ell=h^\alpha(n)$.  Since $h$ is increasing, the $n_i$ values
increase throughout this computation, while the $\alpha_i$ values
decrease, and termination is guaranteed.

Our plan is to implement the Hardy computation of $h^\alpha(n)$ using
a Turing machine, which essentially needs to implement the $\ell$
steps
$\tup{\alpha_i,n_i}\to\tup{P_{n_{i-1}}(\alpha_{i-1}),h(n_{i-1})}$.  We
assume $h$ to be an elementarily constructible expansive function,
such that $h(n)$ can be computed in $e(h(n))$ for some fixed monotone
elementary function $e$.  Then, the complexity of a single step will
depend mainly on $h(n_{i-1})\leq h^\ell(n)$ and on the complexity of
updating $\alpha_i$.

\subsubsection{Cicho\'n Functions.}In order to measure the length
$\ell$ of a Hardy computation for $h^\alpha(n)$, we define a family
$(h_\alpha)_\alpha$ of functions $\+N\to\+N$ by induction on the
ordinal index:
\begin{align}
  h_0(x)&\eqdef 0\;,&
  h_{\alpha+1}(x)&\eqdef 1+h_\alpha(h(x))\;,&
  h_{\lambda}(x)&\eqdef h_{\lambda(x)}(x)\;.
\end{align}
This family is also known as the \emph{length hierarchy} and was
defined by \citet{cichon98}.  It satisfies several interesting
identities:
\begin{align}
  h^\alpha(x)&=h^{h_\alpha(x)}(x)\;,&
  h^\alpha(x)&\geq h_\alpha(x)+x\;.
\end{align}
Its main interest here is that it measures the length of Hardy
computations: $\ell=h_\alpha(n)\leq h^\alpha(n)$ by the above
equations, which in turn implies $h^\ell(n)=h^\alpha(n)$.

\subsubsection{Encoding Ordinal Terms.} It remains to bound the
complexity of computing $\alpha_i=P_{n_{i-1}}(\alpha_{i-1})$.
Assuming some reasonable string encoding of the terms denoting the
$\alpha_{i}$~\citep[e.g.][]{HaaseSS13}, we will consider that each
$\alpha_i$ can be computed in time $p(|\alpha_i|)$ a monotone
polynomial function of the size $|\alpha_i|$ of its term
representation, and will rather concentrate on bounding this size.  We
define it by induction on the term denoting $\alpha_i$:
\begin{align}
  |0|&\eqdef 0\;,&
  |\omega^\alpha|&\eqdef 1+|\alpha|\;,&
  |\alpha+\alpha'|&\eqdef |\alpha|+|\alpha'|\;.
\end{align}

Let us also recall the definition of the \emph{slow-growing hierarchy}
$(G_\alpha)_\alpha$:
\begin{align}
  G_0(x)&\eqdef 0\;,&
  G_{\alpha+1}(x)&\eqdef 1+G_\alpha(x)\;,&
  G_\lambda(x)&\eqdef G_{\lambda(x)}(x)\;.
\end{align}
The slow-growing function satisfy several natural identities
\begin{align}
  \label{eq-pred-G}G_{\alpha}(x)&=1+G_{P_x(\alpha)}(x)\;,\\
  \label{eq-mon-G}G_{\alpha}(x+1)&>G_\alpha(x)\;,\\
  \label{eq-pw-G}\text{if }\beta\prec_x\alpha\text{ then } G_{\beta}(x)&\leq
  G_\alpha(x)\;.
\end{align}
  Furthermore,
\begin{align}
  G_{\alpha+\alpha'}(x)&=G_{\alpha}(x)+G_{\alpha'}(x)\;,&
  G_{\omega^\alpha}(x)&=(x+1)^{G_\alpha(x)}\;.
\end{align}
Hence, $G_\alpha(x)$ is the elementary function which results from
substituting $x+1$ for every occurrence of $\omega$ in the Cantor
normal form of $\alpha$~\citep[p.~159]{sw12}.

\begin{lemma}\label{lem-norm-G}
  Let $x>0$.  Then $|\alpha|\leq G_\alpha(x)$.
\end{lemma}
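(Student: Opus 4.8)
The plan is to prove $|\alpha|\leq G_\alpha(x)$ for $x>0$ by induction on the structure of the term denoting $\alpha$ in Cantor normal form, exploiting that both sides are \emph{additive}: by definition $|\alpha+\alpha'|=|\alpha|+|\alpha'|$, and by the identity recalled just before the lemma $G_{\alpha+\alpha'}(x)=G_\alpha(x)+G_{\alpha'}(x)$. Hence if $\alpha=\omega^{\gamma_1}+\cdots+\omega^{\gamma_k}$ is its \ref{CNF} (writing out the coefficients as repetitions), it suffices to establish the bound $|\omega^\beta|\leq G_{\omega^\beta}(x)$ for a single exponential term and then sum: $|\alpha|=\sum_i|\omega^{\gamma_i}|\leq\sum_i G_{\omega^{\gamma_i}}(x)=G_\alpha(x)$. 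The base case $\alpha=0$ is immediate since $|0|=0=G_0(x)$.

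The only real content is the case $\alpha=\omega^\beta$. First I would invoke the identity $G_{\omega^\beta}(x)=(x+1)^{G_\beta(x)}$. Since we assume $x>0$, we have $x+1\geq 2$, so $(x+1)^{G_\beta(x)}\geq 2^{G_\beta(x)}\geq G_\beta(x)+1$, the last step being the elementary inequality $2^n\geq n+1$ valid for every $n\in\+N$. By the induction hypothesis applied to the proper subterm $\beta$, $G_\beta(x)\geq|\beta|$, whence $G_{\omega^\beta}(x)\geq|\beta|+1=|\omega^\beta|$, as desired. The additive case then follows purely formally from the induction hypotheses on the (proper) subterms $\omega^{\gamma_1}$ and $\omega^{\gamma_2}+\cdots+\omega^{\gamma_k}$ together with the two additivity identities above.

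There is no genuine obstacle here; the induction is routine. The one point that must not be glossed over is that the hypothesis $x>0$ is used essentially in the $\omega^\beta$ case and cannot be dropped: for $x=0$ one only gets $G_{\omega^\beta}(0)=(0+1)^{G_\beta(0)}=1$, which already fails for $\alpha=\omega^\omega$ (where $|\omega^\omega|=3$). It is also worth noting, for precision, that the induction is carried out on the term structure / \ref{CNF} of $\alpha$ rather than on its ordinal value, since $|\cdot|$ is defined by recursion on terms; unique readability of \ref{CNF} guarantees that this is a well-founded induction.
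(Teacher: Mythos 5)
Your proof is correct and follows essentially the same route as the paper's: a structural induction on the term denoting $\alpha$, with the base and additive cases handled identically and the $\omega^\beta$ case settled by combining $G_{\omega^\beta}(x)=(x+1)^{G_\beta(x)}$, the inequality $2^n\geq n+1$ (using $x>0$), and the induction hypothesis. The only cosmetic difference is that the paper applies $n+1\leq(x+1)^n$ to $n=|\beta|$ and then the induction hypothesis inside the exponent, whereas you apply it to $n=G_\beta(x)$ and then the induction hypothesis outside; the two chains are interchangeable.
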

\begin{proof}
  By induction over the term denoting $\alpha$: $|0|=0=G_0(x)$,
  $|\omega^\alpha|=1+|\alpha|\leq (x+1)^{|\alpha|}\leq
  (x+1)^{G_\alpha(x)}=G_{\omega^\alpha}(x)$, and
  $|\alpha+\alpha'|=|\alpha|+|\alpha'|\leq
  G_{\alpha}(x)+G_{\alpha'}(x)=G_{\alpha+\alpha'}(x)$.
\end{proof}

\begin{lemma}\label{lem-term-G}
  If $\tup{\alpha_0,n_0},\dots,\tup{\alpha_\ell,n_\ell}$ is a Hardy
  computation for $h^\alpha(n)$ with $n>0$, then for all $0\leq
  i\leq\ell$, $|\alpha_i|\leq G_{\alpha}(n_\ell)$.
\end{lemma}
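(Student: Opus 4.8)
The plan is to pass through the slow-growing functions: by \autoref{lem-norm-G} it suffices to show $G_{\alpha_i}(n_\ell)\le G_\alpha(n_\ell)$ for every $i$, since $n_\ell\ge n_0=n>0$ will then give $|\alpha_i|\le G_{\alpha_i}(n_\ell)\le G_\alpha(n_\ell)$. The driving observation is that along a Hardy computation the ordinal indices descend in the pointwise orderings $\prec_x$, and $G_{\bullet}$ is monotone for $\prec_x$ by \eqref{eq-pw-G}.

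First I would record a preliminary fact about predecessors: for every $x$ and every ordinal $\gamma>0$, $P_x(\gamma)\prec_x\gamma$. This is a short transfinite induction on $\gamma$ from the definition \eqref{eq-pred-def}: the successor case $P_x(\gamma'+1)=\gamma'$ is immediate since $\gamma'\prec_x\gamma'+1$, and the limit case $P_x(\lambda)=P_x(\lambda(x))$ follows from the induction hypothesis at $\lambda(x)<\lambda$, the defining clause $\lambda(x)\prec_x\lambda$, and transitivity of $\prec_x$ (which is a transitive closure by construction).

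Then I would argue as follows. Since $h$ is strictly increasing, hence expansive, the $n_i$ are non-decreasing, so $n_{i-1}\le n_\ell$ for all $i\le\ell$ and $n_\ell\ge n_0=n>0$. For $1\le i\le\ell$ we have $\alpha_{i-1}>0$ (otherwise $P_{n_{i-1}}$ would not be applied), so the preliminary fact gives $\alpha_i=P_{n_{i-1}}(\alpha_{i-1})\prec_{n_{i-1}}\alpha_{i-1}$; since the relations $\prec_x$ increase with $x$, this yields $\alpha_i\prec_{n_\ell}\alpha_{i-1}$. By transitivity of $\prec_{n_\ell}$ we obtain $\alpha_i\prec_{n_\ell}\alpha$ for every $i\ge 1$, while $\alpha_0=\alpha$ trivially. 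Now \eqref{eq-pw-G} gives $G_{\alpha_i}(n_\ell)\le G_\alpha(n_\ell)$, and \autoref{lem-norm-G} applied at $x=n_\ell>0$ gives $|\alpha_i|\le G_{\alpha_i}(n_\ell)$, whence $|\alpha_i|\le G_\alpha(n_\ell)$.

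I do not foresee any serious obstacle here. The only mildly delicate point is the ``telescoping'' of the descending chain $\alpha_\ell\prec\cdots\prec\alpha_0$ down to a single common argument $n_\ell$: this uses exactly the monotonicity of $\prec_x$ in $x$ (to move each one-step relation $\alpha_i\prec_{n_{i-1}}\alpha_{i-1}$ up to $\prec_{n_\ell}$) together with transitivity of the individual relation $\prec_{n_\ell}$. Everything else is direct unwinding of the definitions of $P_x$, $G_\alpha$, and the norm $|\cdot|$.
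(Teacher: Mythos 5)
Your proof is correct and follows essentially the same route as the paper: bound $|\alpha_i|$ via \autoref{lem-norm-G}, use the pointwise descent of the $\alpha_i$ together with \eqref{eq-pw-G} to compare with $G_\alpha$, and the monotonicity facts to land at the argument $n_\ell$. The only cosmetic difference is that you lift every one-step relation to $\prec_{n_\ell}$ and evaluate all the $G$'s at $n_\ell$ directly, whereas the paper works at $n_{i-1}$ (using \eqref{eq-pred-G} for the last step) and only then applies \eqref{eq-mon-G}; your explicit telescoping of $\alpha_{i-1}\prec_{n_{i-1}}\alpha$ is in fact a point the paper leaves implicit.
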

\begin{proof}
  We distinguish two cases.  If $i=0$, then
  $|\alpha_0|=|\alpha|\leq G_\alpha(n)$ by \autoref{lem-norm-G} since
  $n>0$, hence $|\alpha_0|\leq G_\alpha(n_\ell)$ since $n_\ell\geq n$
  by \eqref{eq-mon-G}.
  If $i>0$, then
  \begin{align*}
    |\alpha_i|&=|P_{n_{i-1}}(\alpha_{i-1})|\\
    &\leq G_{P_{n_{i-1}}(\alpha_{i-1})}(n_{i-1})&&\text{by
      \autoref{lem-norm-G} since $n_{i-1}\geq n>0$}\\
    &<G_{\alpha_{i-1}}(n_{i-1})&&\text{by \eqref{eq-pred-G}}\\
    &\leq G_\alpha(n_{i-1})&&\text{since
      $\alpha_{i-1}\prec_{n_{i-1}}\alpha$ by \eqref{eq-pw-G}}\\
    &\leq G_\alpha(n_\ell)&&\text{since $n_{i-1}\leq n_\ell$ by \eqref{eq-mon-G}}\qedhere
  \end{align*}
\end{proof}
The restriction to $n>0$ in \autoref{lem-term-G} is not a big issue:
either $h(0)=0$ and then $h^\alpha(0)=0$, or $h(0)>0$ and then
$h^{\gamma+\omega^\beta}(0)=h^\gamma(h(0))$ and we can proceed from
$\gamma$ instead of $\gamma+\omega^\beta$ as initial ordinal of our
computation.

\subsubsection{Wrapping up.}
To conclude, each of the $\ell\leq h^\alpha(n)$ steps of a Hardy
computation for $h^\alpha(n)$ needs to compute
\begin{itemize}
\item $\alpha_i$, in time $p(G_\alpha(h^\alpha(n)))$ since
  $|\alpha_i|\leq G_\alpha(h^\alpha(n))$ and $p$ was assumed monotone,
  and
\item $n_i$, in time $e(h^\alpha(n))$ since $h(n_{i-1})\leq
  h^\alpha(n)$ and $e$ was assumed monotone.
\end{itemize}
This yields the following statement:
\begin{proposition}\label{prop-constr}
  The Hardy function $h^\alpha$ can be computed in time
  \begin{equation*}O\!\!\left(h^{\alpha}(n)\cdot\big(p(G_{\alpha}(h^\alpha(n)))+e(h^\alpha(n))\big)\right)\;.\end{equation*}
\end{proposition}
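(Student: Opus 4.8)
The plan is to compile the Hardy computation for $h^{\alpha}(n)$, as defined via \eqref{eq-hardy-pred} and \eqref{eq-pred-def}, directly into a Turing machine, and to charge its running time against the two estimates already in hand: the length bound $\ell = h_\alpha(n) \le h^\alpha(n)$ (from the identity $h^\alpha(x)\ge h_\alpha(x)+x$ recorded above) and the term-size bound $|\alpha_i| \le G_{\alpha}(h^\alpha(n))$ of \autoref{lem-term-G}.

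Concretely, I would have the machine maintain the current pair $\tup{\alpha_i, n_i}$, with $\alpha_i$ stored in the reasonable term encoding discussed above and $n_i$ on a work tape; it starts from $\tup{\alpha, n}$ and iterates the transition $\tup{\alpha_{i-1}, n_{i-1}} \mapsto \tup{P_{n_{i-1}}(\alpha_{i-1}),\, h(n_{i-1})}$, halting and outputting $n_\ell = h^\alpha(n)$ once the ordinal component reaches $0$. For a single step, computing the predecessor $P_{n_{i-1}}(\alpha_{i-1})$ only requires a local traversal of the term, peeling off limit constructors according to \eqref{eq-pred-def} and, where needed, inserting the parameter $n_{i-1}$ as a coefficient; by \autoref{lem-term-G} both the source term $\alpha_{i-1}$ and the result term $\alpha_i$ have size at most $G_{\alpha}(h^\alpha(n))$, and since $n_{i-1}\le n_\ell = h^\alpha(n)$, the update of the ordinal component costs $O\big(G_{\alpha}(h^\alpha(n)) + h^\alpha(n)\big)$. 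Computing $n_i = h(n_{i-1})$ uses time constructibility of $h$, hence time $O(h(n_{i-1})) = O(n_i) \le O(h^\alpha(n))$, again because $n_i \le n_\ell = h^\alpha(n)$.

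Next I would sum over the $\ell \le h^\alpha(n)$ steps, obtaining total running time $O\big(h^\alpha(n)\cdot (G_{\alpha}(h^\alpha(n)) + h^\alpha(n))\big)$, which is the claimed bound. The one caveat is the hypothesis $n>0$ needed to invoke \autoref{lem-term-G}; this is dispatched as in the remark following that lemma: if $h(0)=0$ then $h^\alpha(0)=0$ and the computation is trivial, and otherwise, writing $\alpha = \gamma + \omega^\beta$, one starts from $\tup{\gamma, h(0)}$ in place of $\tup{\gamma+\omega^\beta, 0}$, so that all the $n_i$ encountered are positive.

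I do not expect a genuine obstacle; the only point requiring care is that peeling off a single predecessor and writing the parameter $n_{i-1}$ into the term never inflates the ordinal component past $G_{\alpha}(h^\alpha(n))$ --- which is exactly \autoref{lem-term-G} --- and that $n_{i-1} \le h^\alpha(n)$ throughout, which follows from the monotonicity of the $n_i$ along the computation together with $n_\ell = h^\alpha(n)$. Granted these two facts, the accounting is routine.
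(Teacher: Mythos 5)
Your proposal is correct and follows essentially the same route as the paper: it implements the Hardy computation $\tup{\alpha_i,n_i}\to\tup{P_{n_{i-1}}(\alpha_{i-1}),h(n_{i-1})}$ step by step, bounds the number of steps by $\ell=h_\alpha(n)\leq h^\alpha(n)$ via the Cicho\'n length functions, bounds the per-step cost by $G_\alpha(h^\alpha(n))$ for the ordinal update (\autoref{lem-term-G}) plus $h^\alpha(n)$ for the time-constructible evaluation of $h$, and dispatches the $n>0$ restriction exactly as in the remark following \autoref{lem-term-G}. The accounting and the resulting bound coincide with the paper's ``wrapping up'' argument.
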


\bibliographystyle{abbrvnat}
\bibliography{journalsabbr,conferences,fgcc}
\end{document}